\documentclass[10pt,twocolumn]{article}
\usepackage[letterpaper,margin=0.72in,columnsep=0.24in]{geometry}
\usepackage[T1]{fontenc}
\usepackage{lmodern}
\usepackage{microtype}
\usepackage{amsmath,amssymb,amsfonts,mathtools,bm,amsthm}
\usepackage{booktabs,array,longtable,multirow,xcolor,graphicx,float,enumitem,placeins}
\usepackage{cuted,dblfloatfix}
\usepackage{tikz,quantikz}
\usetikzlibrary{arrows.meta,positioning,fit,calc}
\usepackage{xurl}
\ifdefined\supplementonly
\usepackage{xr}
\fi
\usepackage{fancyhdr}
\usepackage[hidelinks]{hyperref}
\usepackage{cite}
\usepackage[nameinlink]{cleveref}
\ifdefined\supplementonly
\fi
\setlist{nosep,leftmargin=*}
\numberwithin{equation}{section}
\newtheorem{theorem}{Theorem}[section]
\newtheorem{proposition}[theorem]{Proposition}

\newtheorem{corollary}[theorem]{Corollary}
\theoremstyle{definition}\newtheorem{definition}[theorem]{Definition}

\theoremstyle{remark}\newtheorem{remark}[theorem]{Remark}
\newtheorem{example}[theorem]{Example}
\newcommand{\R}{\mathbb R}\newcommand{\C}{\mathbb C}
\newcommand{\E}{\mathbb E}
\newcommand{\Cov}{\operatorname{Cov}}\newcommand{\Tr}{\operatorname{Tr}}
\newcommand{\TV}{\operatorname{TV}}

\definecolor{designblue}{RGB}{63,81,181}
\definecolor{responseorange}{RGB}{230,126,34}
\definecolor{certgreen}{RGB}{39,145,91}

\title{\vspace{-1.8em}\textbf{Completing or Refusing Low-Dimensional Records of Structured Quantum Circuits:\\
Measurement Loss, Compression Loss, and Hardware Drift}}
\author{Gunhee Cho$^{1}$\thanks{Corresponding author. Email: \href{mailto:wvx17@txstate.edu}{wvx17@txstate.edu}.} and
Juhee Lee$^{2}$\thanks{Second author. Email: \href{mailto:jhlee@sdt.inc}{jhlee@sdt.inc}.}\\
\small $^{1}$Texas State University, San Marcos, USA\\
\small $^{2}$SDT Inc., Seoul, Republic of Korea}
\date{}

\begin{document}
\ifdefined\supplementonly
\title{Supplementary Material for\\
Completing or Refusing Low-Dimensional Records of Structured Quantum Circuits}
\maketitle
\thispagestyle{fancy}
\else
\maketitle
\thispagestyle{fancy}
\begin{abstract}
Structured quantum circuits are often represented by low-dimensional records
such as occupation numbers or constraint feasibility rather than by their full
outcome distributions. Such a record loses control-dependent information when
it merges outcomes whose probabilities respond differently to the circuit
parameters. We assess this loss without assuming a parametric hardware-noise
model or a low-dimensional statistical family. Quantum Fisher information
bounds the sensitivity available before measurement, while the Fisher metrics
of admissible measurements and their recorded statistics describe the
sensitivity attainable on the device. For finite control changes, a Hellinger
residual measures the response removed by the record; locally, its quadratic
term is determined by the conditional covariance of the full-outcome score.
Simultaneous confidence bounds then support one of three decisions: approve the
record, refuse it, or defer the decision.

We prove a finite-library completion theorem for a declared collection of
state or distribution responses. A procedure using executable augmentations
terminates either with a record that preserves every declared response or with
proof that no augmentation in the library removes the remaining loss. For an
analytic control germ, we use
integral closures to characterize preservation along every analytic control
arc and finitely many Rees valuations to detect failure. Our contribution is to
define the quantum state--measurement--record response chain, connect its
all-arc criterion to an executable completion-or-refusal procedure, state the
corresponding local kernel criterion for attainable Fisher metrics, and retain
the same refusal rule under finite-sample uncertainty.

A three-qubit calculation first distinguishes loss caused by measurement from
loss caused by the record. Prospectively fixed IBM experiments then test the
predicted decisions on Kingston and Marrakesh. In a four-qubit
fixed-particle-number family, the procedure refuses Hamming weight and
one-sided occupation but approves a number record that resolves the occupied
positions. In a four-qubit GHZ family, $Z$-basis invariance and the odd
$Y$-parity response replicate across devices and blocks, whereas the
preregistered $X$-even conjunction fails and remains a negative control. In
the two-epoch IQM Garnet data, the median record-level
drift is $0.0797$ times the full-distribution drift, but a significant residual
remains in 35 of 36 settings. On these circuits, the predeclared rules detect
loss, approve a record only within its stated tolerance, and otherwise refuse
or defer it; the experiments do not
establish universal compression performance or device quantum Fisher
information.
\end{abstract}
\noindent\textbf{Keywords:} structured quantum circuits; low-dimensional
records; Hellinger distance; quantum Fisher information; response ideals;
integral closure; hardware drift; finite-sample refusal.

\tableofcontents

\section{Introduction}
Structured quantum circuits restrict their ideal outputs to physically
specified subsets of the outcome space. Examples include constraint-preserving
optimization and fixed-particle-number computation
~\cite{Hadfield2019,Monbroussou2023,Farias2024}. This support constraint,
however, does not establish that a low-dimensional record is adequate for
hardware data. Outcomes assigned the same feasibility label, Hamming weight,
or occupation value may respond differently to circuit controls, while the
record itself may also respond to calibration and readout changes.
Coarse-graining can therefore suppress a target response, transmit nuisance
drift as an apparent physical response, or force a choice among records that
all leave substantial residual information. In the last case, failure to find
an adequate representation is misreported as evidence that the selected
representation is valid.

Against this background, we ask whether a proposed record preserves the
measured response without first specifying a hardware-noise model or a
low-dimensional statistical family. Answering this question requires
separating loss caused by measurement from loss caused by subsequent
coarse-graining, and distinguishing the target response from nuisance hardware
drift. A record is refused only when a certified residual exceeds its declared
loss budget; a large full-distribution response may be scientifically intended
and does not by itself make the record inadequate. If finite data support
neither approval nor refusal, the decision is deferred. An independently
designed follow-up may resolve the new comparison, but it does not replace the
original decision.

The three-qubit circuit in Section~2 makes this distinction explicit. The
one-site record $T_1=n_1$ merges $\lvert100\rangle$ and
$\lvert001\rangle$; it is therefore independent of the control $b$, even
though the full computational-basis distribution changes with $b$. The
augmented record $T_2=(L,n_1,n_2)$ separates all three allowed single-particle
outcomes and exactly recovers the ideal computational-basis histogram. Its use
on hardware nevertheless requires a residual check because it does not
distinguish every outcome outside the ideal support. A relative phase $\phi$
provides the complementary case: it changes the quantum state while leaving
all computational-basis probabilities unchanged, so no finer statistic of the
same counts can recover it. Recovering $\phi$ requires an interference
measurement. Thus the $b$ direction exhibits compression loss, whereas the
$\phi$ direction exhibits measurement loss, and the two failures require
different augmentations.

In classical statistics, sufficiency characterizes when a statistic preserves
all parameter information, and the Fisher metric is monotone under statistical
transformations~\cite{AyJostLeSchwachhofer2015}. Quantitative notions of almost
sufficiency instead bound the Fisher loss by comparing the original and induced
metrics~\cite{YamaguchiNozawa2024}. In quantum models, the Braunstein--Caves
relation identifies the SLD quantum Fisher information as the optimal local
Fisher information over measurements, while the theory of monotone quantum
metrics describes contraction under quantum channels
~\cite{BraunsteinCaves1994,PetzGhinea2011}. For a broad class of such metrics,
preservation under a channel is equivalent to recoverability, but this
equivalence does not hold for the SLD metric in general~\cite{Gao2023}.
Comparison theory for quantum statistical experiments gives a stronger,
decision-theoretic criterion: one model can simulate another for all decision
problems when the required statistical transformation exists
~\cite{Buscemi2012}. These results specify important notions of information
preservation, but they do not by themselves determine whether a fixed hardware
measurement followed by a chosen record preserves a declared response.

Several adjacent lines of work optimize the measurement rather than audit a
fixed record.  Fidelity-optimal measurements identify POVMs that attain a
specified distinguishability, while minimal-sufficiency theory removes
statistically redundant outcomes
~\cite{FuchsCaves1995,Kuramochi2015,ChenZhu2025}.  Experiments have extracted
Fisher information from non-Gaussian spin distributions
~\cite{Strobel2014}; finite-resolution theory shows that coarse detectors can
miss large quantum Fisher information and that preprocessing can restore part
of it~\cite{Frowis2016}.  Coarse-grained thermometry instead optimizes a small
number of energy bins for retaining temperature Fisher information
~\cite{Hovhannisyan2021}.  Preprocessing-optimized Fisher information chooses
controls before a noisy measurement, whereas randomized designs provide
near-optimal compromises for multiparameter models
~\cite{ZhouMichalakisGefen2023,ZhouChen2026}.  These approaches choose or
optimize a measurement for a specified state family and task.  Our problem
begins after the device, admissible measurement library, and interpretable
records have been declared: it asks whether the chosen record should be
approved, augmented, or refused.

Hellinger-based circuit
reproducibility studies~\cite{DasguptaHumble2022,DahlhauserHumble2024,
Hashim2025,Lorenz2025} do not provide this stage-specific decision either.
Local Fisher sensitivity need not determine finite-shot performance
~\cite{MeyerKhatriFranca2025}.  We therefore base finite decisions on joint
distributional confidence regions rather than on a QFIM alone.

We compare the information available at three successive levels: the quantum
state, the permitted measurement, and the retained record.  The first
transition identifies information that the available measurement cannot
access; the second identifies information discarded only by the chosen record.
Their local forms are measured by differences of Fisher metrics, while finite
changes are compared through Hellinger distance.  Simultaneous confidence
regions then support one of three decisions: approve the record, refuse it, or
defer the decision when the data are insufficient.  Claims that must hold along
every analytic control curve require the additional response-ideal audit.  Any
algebraic witness obtained from that audit is treated as a candidate for a
follow-up experiment only after its physical executability has been
established.

The algebraic criteria used here come from classical integral-closure and
Rees-valuation theory: integral closure is characterized by analytic arcs, and
its failure can be detected by finitely many Rees valuations
~\cite{LejeuneTeissier2008,HunekeSwanson2006}.  We apply these criteria to an
ordered family of quantum state, measurement, and record response ideals so
that each failed inclusion identifies a distinct source of information loss.
The accessible-completion theorem is the result introduced here. Relative to
a declared finite library of executable measurements or records, it terminates
either with an augmentation that preserves every declared response direction
or with proof that no member of the library does so. Its algebraic branch turns Rees witnesses into a
terminating augmentation test, while its local branch gives the corresponding
kernel criterion for attainable Fisher metrics.  The conditional-score
identity, data-processing inequalities, and quantum-Fisher covariance formula
are used as established ingredients.

The four main theorems address the same completeness-or-refusal question at
different levels. Theorem~\ref{en:thm:two-level-completeness}
characterizes preservation along every analytic control curve.
Theorem~\ref{en:thm:finite-rees-audit} converts failure of that criterion into
a finite algebraic witness, and Theorem~\ref{en:thm:accessible-completion}
determines whether the declared augmentation library can complete the record.
Theorem~\ref{en:thm:three-way} then gives the approve--refuse--defer decision
under finite-sample uncertainty.  The first three results concern model-level
response across analytic control curves, whereas the fourth concerns observed
finite comparisons.  The response-ideal analysis and the Hellinger audit are
therefore the global and finite-observation levels of the same decision
problem, not independent objectives.

Most experimental users need only the five-step finite audit in Section~3.
They retain the full outcome counts, declare the candidate record and loss
budget, compare the full and recorded responses, construct simultaneous
uncertainty bounds, and return approve, refuse, or defer for the specified
control comparisons.  Steps~6--7 are reserved for claims of
first-response-order preservation along every analytic control curve or for
designing a follow-up experiment from an algebraic failure witness.  A single
comparison whose residual is certified above the loss budget is already
sufficient to refuse the record on that comparison; no integral-closure
calculation is required.

The hardware studies test the proposed decision rules rather than the
response-ideal equalities themselves.  The prospectively fixed three-qubit
Kingston study tests record loss, measurement loss, and the sequence from defer
to an independently designed follow-up.  A protocol-preserving replication
applies the same circuit family, control points, randomization, shot counts,
and decision rules on Marrakesh and in a new Kingston block, with only
calibration-dependent physical layouts reselected.  A separate four-qubit
experiment tests the full decision ladder on GHZ phase circuits and
fixed-particle-number exchange circuits.  It includes a failed preregistered
symmetry check, an augmentation that improves the record but remains
insufficient, and a resolved number record that is approved.

The two-epoch IQM Garnet study examines residual loss and drift attenuation on a
larger outcome space.  Across these tested circuits, the predeclared rules
detected finite-direction loss and returned augmentation, approval, refusal,
or deferment according to the stated tolerances.  The studies do not establish
equality of response-ideal integral closures, device quantum Fisher
information, universal compression performance, or hardware superiority.

Section~2 defines the experimental model, record statistics, Fisher and
Hellinger losses, and the response ideals needed for all-direction claims, and
evaluates these objects in the common three-qubit example.  Section~3 develops
the four main theorems and the seven-step completion-or-refusal procedure.
Section~4 tests the finite-direction decision rules in the IBM and IQM
studies.  Section~5 positions the results relative to prior work and states
their mathematical and experimental limits.  The separately compiled
supplement contains the longer proofs, protocols, and reproducibility records.

Figure~\ref{en:fig:decision-map} summarizes the two analysis routes used
throughout the paper.  The upper route is the default finite-data audit: it
starts from full outcome counts, evaluates a proposed record, and returns
approve, refuse, or defer.  The lower route is used only for claims that
quantify over every analytic control curve; it applies response ideals and the
completion theorem to a declared library of executable augmentations.  The two
routes concern the same response target but support claims of different
strength.  The algebraic route does not replace finite-sample inference, and
the finite audit alone does not certify preservation along every analytic
control curve.
\begin{figure*}[t]
\centering
\begin{tikzpicture}[>=Latex,
 every node/.style={font=\small},
 stage/.style={draw,rounded corners=1pt,align=center,inner sep=3pt,
 minimum height=9mm}]
\node[anchor=west,font=\small\bfseries] at (0,3.05)
 {(a) Metric descent and location of loss};
\node[stage,text width=30mm] (q) at (1.55,2.25)
 {state family\\$g_{Q,e}$};
\node[stage,text width=30mm] (qo) at (5.65,2.25)
 {after preprocessing\\$g^{\rm out}_{Q,e,a}$};
\node[stage,text width=30mm] (m) at (9.75,2.25)
 {after measurement\\$p^a_{e,\theta}$, $g_{M,e,a}$};
\node[stage,text width=30mm] (t) at (13.85,2.25)
 {after record compression\\$T_\#p^a_{e,\theta}$, $g_{T,e,a}$};
\draw[->] (q)--(qo); \draw[->] (qo)--(m); \draw[->] (m)--(t);

\node[anchor=west,font=\small\bfseries] at (0,1.05)
 {(b) Completion-or-refusal routes};
\node[stage,text width=34mm] (fc) at (1.9,0.15)
 {finite-data audit:\\full counts by control and epoch};
\node[stage,text width=25mm] (cr) at (5.75,0.15) {candidate record $T$};
\node[stage,text width=36mm] (fr) at (9.75,0.15)
 {Hellinger residual, Fisher loss,\\and drift transmission};
\node[stage,text width=28mm] (fd) at (14.0,0.15)
 {approve, refuse,\\or defer};
\draw[->] (fc)--(cr); \draw[->] (cr)--(fr); \draw[->] (fr)--(fd);

\node[stage,text width=47mm] (ag) at (2.55,-1.55)
 {optional all-direction audit:\\analytic response germs and executable augmentations};
\node[stage,text width=33mm] (ri) at (8.35,-1.55)
 {response ideals and\\Rees valuations};
\node[stage,text width=39mm] (ac) at (13.25,-1.55)
 {complete augmentation or\\certified impossibility};
\draw[->] (ag)--(ri); \draw[->] (ri)--(ac);
\end{tikzpicture}
\caption{Information descent and decision routes.  Panel (a) locates loss
introduced by preprocessing, measurement, and record compression.  Panel (b)
separates the default finite-data audit from the optional algebraic audit for
claims over all analytic control curves.  The latter requires analytic response
germs and a declared library of physically executable augmentations; it is not
inferred from finite counts alone.}
\label{en:fig:decision-map}
\end{figure*}

\section{Setting and the three-qubit running example}
\label{en:sec:mathematical-setting}

\subsection{Objects, maps, and comparison levels}
We introduce the mathematical objects in order of claim strength.  A finite
hardware audit requires the experimental model, a record map, Hellinger
distance, and outcome counts.  A local sensitivity audit additionally requires
scores and Fisher matrices.  Response ideals enter only when completeness is
claimed along every analytic control curve.  Readers concerned only with
finite hardware decisions can therefore carry out Steps~1--5 without the
algebraic machinery reserved for Steps~6--7.

Let $\mathbb T=\R/(2\pi\mathbb Z)$ be the circle of real angles modulo
$2\pi$, with addition induced from $\R$, and fix $d\in\mathbb N_{\geq1}$.  The
global control space is the torus $\Theta=\mathbb T^d$.  To define local
derivatives, choose a reference point $\theta_0\in\Theta$ and a chart
$U_0\subset\Theta$ identified with an open subset of $\R^d$.  All derivatives,
analytic germs, and response ideals are taken in this chart.  This convention
keeps the physical periodicity of the controls separate from the local
analytic statements.

\begin{definition}[One-epoch compiled-noisy measurement experiment]
\label{en:def:epoch-experiment}
Fix $n_q,D\in\mathbb N_{\geq1}$.  Let $V=\{0,\ldots,n_q-1\}$ be the set of
qubit labels and let $\mathcal X=\{0,1\}^{V}$ be the set of
computational-basis bitstrings.  Let $\mathcal H$ be a $D$-dimensional complex
Hilbert space.  Write
\[
\begin{aligned}
 \mathcal L(\mathcal H)&=\{A:\mathcal H\to\mathcal H\text{ linear}\},\\
 \mathcal D(\mathcal H)&=\{\rho\in\mathcal L(\mathcal H):
 \rho\succeq0,\ \Tr\rho=1\}.
\end{aligned}
\]
We also define
\[
 \operatorname{Pos}(\mathcal H)=
 \{A\in\mathcal L(\mathcal H):A\succeq0\},
\]
and let $\operatorname{CPTP}(\mathcal H)$ denote the set of completely
positive, trace-preserving linear maps from $\mathcal L(\mathcal H)$ to
itself.
For a linear map $\Phi:\mathcal L(\mathcal H)\to\mathcal L(\mathcal H)$,
complete positivity means that
$\operatorname{id}_{\mathcal L(\C^k)}\otimes\Phi$ maps
$\operatorname{Pos}(\C^k\otimes\mathcal H)$ into itself for every $k\geq1$.
Let $\mathsf E$ be a nonempty set of epoch labels.  A calibration epoch
$e\in\mathsf E$ is a designated execution block in which the same residual
channel $\mathcal E_e$ and readout POVM $\mathcal M_e$ are assigned to every
control setting executed in that block.  This stationarity is an analysis
assumption, not a physical law; another epoch may have different maps.  The
typed environment for epoch $e$ is
\begin{equation}\label{en:eq:epoch-environment}
 \mathbf E_e=\bigl(\Theta,\mathcal H,\rho_0,
 \mathcal C,\mathcal E_e,\mathcal M_e,\mathcal Y,n\bigr),
\end{equation}
where
\begin{align*}
 &\rho_0\in\mathcal D(\mathcal H),\qquad
 \mathcal C:\Theta\to\operatorname{CPTP}(\mathcal H),\quad
 \theta\mapsto\mathcal C_\theta,\\
 &\mathcal E_e\in\operatorname{CPTP}(\mathcal H),\qquad
 \varnothing\neq\mathcal Y\text{ is finite},\\
 &\mathcal M_e:\mathcal Y\to\operatorname{Pos}(\mathcal H),\quad
 y\mapsto M_{e,y},\qquad
 \sum_{y\in\mathcal Y}M_{e,y}=I_{\mathcal H},\\
 &n\in\mathbb N_{\geq1}.
\end{align*}
Here $\rho_0$ is the prepared initial state, $\mathcal C_\theta$ is the native
implementation produced by the compiler for control $\theta$, and
$\mathcal E_e$ is the residual channel common to the compiled family during
epoch $e$.  The family $\mathcal M_e=\{M_{e,y}:y\in\mathcal Y\}$ is the
readout-inclusive POVM, and $n$ is the number of repetitions at one control
setting.  For computational-basis readout one may take
$\mathcal Y=\mathcal X$; the general notation also permits other finite-outcome
measurements.  In finite dimensions, every
$\Phi\in\operatorname{CPTP}(\mathcal H)$ admits a Kraus representation
\begin{equation}\label{en:eq:kraus-setting}
 \Phi(A)=\sum_{\alpha=1}^{r}K_\alpha A K_\alpha^\dagger,
 \qquad \sum_{\alpha=1}^{r}K_\alpha^\dagger K_\alpha=I_{\mathcal H}.
\end{equation}
for some finite $r$ and $K_\alpha\in\mathcal L(\mathcal H)$.  Dephasing,
amplitude damping, and depolarizing channels may be included in $\mathcal E_e$,
while readout errors are included in $\mathcal M_e$.  Multinomial variation
arises only after measurement and is not a quantum channel.  Control-dependent
compilation effects belong to $\mathcal C_\theta$, and drift across epochs is
represented by changes in $\mathcal E_e$ and $\mathcal M_e$.  This
factorization fixes the stages to be audited; it is an analysis convention
rather than a claim that the microscopic noise decomposition is uniquely
identifiable.

Define the compiled state, post-channel state, and observed outcome law by
\begin{equation}\label{en:eq:epoch-born-law}
 \begin{aligned}
 \sigma_\theta&=\mathcal C_\theta(\rho_0),\\
 \omega_{e,\theta}&=\mathcal E_e(\sigma_\theta),\\
 p_{e,\theta}(y)&=\Tr(M_{e,y}\omega_{e,\theta}).
 \end{aligned}
\end{equation}
The CPTP and POVM conditions imply
$\sigma_\theta,\omega_{e,\theta}\in\mathcal D(\mathcal H)$ and
$p_{e,\theta}\in\Delta(\mathcal Y)$, where
\[
 \begin{aligned}
 \Delta(\mathcal Y)&=\left\{p:\mathcal Y\to[0,1]:
 \sum_{y\in\mathcal Y}p(y)=1\right\},\\
 \mathcal P_e&=\{p_{e,\theta}\in\Delta(\mathcal Y):\theta\in\Theta\}.
 \end{aligned}
\]
The family $\mathcal P_e$ is the observed response family for epoch $e$.
It, rather than a uniquely identified microscopic channel, is the direct
statistical object of the audit.
Fix $(e,\theta)$.  On a probability space
$(\Omega_{e,\theta},\mathcal F_{e,\theta},\Pr_{e,\theta})$, assume that the
shot outcomes $Y_{e,\theta,1},\ldots,Y_{e,\theta,n}$ are conditionally
independent and identically distributed with categorical law $p_{e,\theta}$.
Define
\begin{equation}\label{en:eq:epoch-count-law}
 \begin{aligned}
 N_{e,\theta}(y)&=\sum_{i=1}^{n}\mathbf1\{Y_{e,\theta,i}=y\},\\
 N_{e,\theta}&\sim\operatorname{Multinomial}(n,p_{e,\theta}),\\
 \widehat p_{e,\theta}&=N_{e,\theta}/n.
 \end{aligned}
\end{equation}
Thus $N_{e,\theta}:\Omega_{e,\theta}\to\mathcal N_n(\mathcal Y)$, where
$\mathcal N_n(\mathcal Y)=\{c\in\mathbb N_0^{\mathcal Y}:
\sum_{y\in\mathcal Y}c(y)=n\}$.  For fixed $(e,\theta)$, $p_{e,\theta}$ is
fixed but unknown, whereas the shot outcomes, count vector, and empirical
distribution are random.  This distinction separates variation of the device
response from finite-shot uncertainty.
\end{definition}

The compiler may produce a distinct native implementation $\mathcal C_\theta$
for each control value $\theta$; we therefore do not assume that the compiled
family is generated by a single parameter-independent channel.  The
differential and response-ideal results require
$\theta\mapsto\sigma_\theta$ to be analytic only within the selected local
compilation branch.  If the gate layout or routing changes discontinuously,
the two sides are treated as separate local response families.  In a paired
cross-epoch comparison, the compiled circuits are held fixed while
$\mathcal E_e$ and $\mathcal M_e$ may vary with $e$.  Holding the circuits fixed
controls recompilation effects; it does not remove hardware drift.

\begin{definition}[Metric descent from the quantum model to a hardware record]
\label{en:def:metric-descent}
Let
$\omega_e:\Theta\to\mathcal D(\mathcal H)$,
$\theta\mapsto\omega_{e,\theta}$, be smooth and have constant rank on a
neighborhood $U$ of the point under study.  On $U$, let
$g_{\rm SLD}=4g_{\rm Bures}$ denote the symmetric-logarithmic-derivative quantum
Fisher metric, let $g_{\rm FR}$ denote the Fisher--Rao metric on a finite
probability simplex, and define $g_{Q,e}:=\omega_e^*g_{\rm SLD}$.  At a
rank-changing point, we use only an explicitly specified one-sided limit or
lower-semicontinuous extension; an equality established on a constant-rank
neighborhood is not asserted across that point without a separate argument.
An executable analysis setting $a=(\Phi_{e,a},\mathsf M_{e,a})$ consists of a
$\theta$-independent CPTP preprocessing map $\Phi_{e,a}$ and a POVM
$\mathsf M_{e,a}=\{M^a_{e,y}:y\in\mathcal Y_a\}$ on a nonempty finite outcome
set $\mathcal Y_a$.  Let $\varnothing\ne\mathcal T$ be finite and let
$T:\mathcal Y_a\to\mathcal T$ be a record.  The setting $a$ and record $T$
induce
\[
 \begin{aligned}
 \omega_{e,\theta}&\longmapsto\Phi_{e,a}(\omega_{e,\theta})\\
 &\longmapsto p^a_{e,\theta}(y)
 :=\Tr\!\left[M^a_{e,y}\Phi_{e,a}(\omega_{e,\theta})\right]\\
 &\longmapsto T_\#p^a_{e,\theta}.
 \end{aligned}
\]
Writing $p_e^a:\Theta\to\Delta(\mathcal Y_a)$ for
$\theta\mapsto p^a_{e,\theta}$, define
\begin{align}
 g^{\rm out}_{Q,e,a}&:=(\Phi_{e,a}\circ\omega_e)^*g_{\rm SLD},\\
 g_{M,e,a}&:=(p_e^a)^*g_{\rm FR},\\
 g_{T,e,a}&:=(T_\#\circ p_e^a)^*g_{\rm FR}.
 \label{en:eq:metric-descent}
\end{align}
The four metrics locate sensitivity loss along the operational chain.  The
metric $g_{Q,e}$ describes the quantum sensitivity of the state family before
analysis preprocessing; $g^{\rm out}_{Q,e,a}$ is the sensitivity remaining
after $\Phi_{e,a}$; $g_{M,e,a}$ is the sensitivity accessible through the POVM
in setting $a$; and $g_{T,e,a}$ is the sensitivity retained after recording
only $T$.  Consequently,
\[
 g_{Q,e}-g^{\rm out}_{Q,e,a},\qquad
 g^{\rm out}_{Q,e,a}-g_{M,e,a},\qquad
 g_{M,e,a}-g_{T,e,a}
\]
attribute local loss, respectively, to preprocessing, measurement, and record
compression.  A numerical QFIM alone does not determine the downstream
classical metric: distinct quantum statistical models can have the same QFIM
while producing different Fisher information under a fixed POVM.  The state
family, its tangent operators, and the admissible quantum-to-classical maps
must therefore remain part of the specification.
\end{definition}

\begin{definition}[Resource-constrained hardware-attainable geometry]
\label{en:def:attainable-geometry}
Let $\varnothing\ne\mathfrak A_e(B)$ be the set of analysis settings executable
in epoch $e$ under resource budget $B$.  Let $\mathfrak D_e(B)$ be the allowed
finitely supported probability distributions
$\lambda=(\lambda_a)_{a\in\mathfrak A_e(B)}$ for randomized experiments in
which the setting label is retained with each outcome.  At $\theta$, define
\begin{equation}
 \mathcal G_{\rm att}(e,\theta;B)=
 \overline{\left\{\sum_{a\in\mathfrak A_e(B)}\lambda_a g_{M,e,a}(\theta):
 \lambda\in\mathfrak D_e(B)\right\}}.
 \label{en:eq:attainable-fibre}
\end{equation}
Here the closure is taken in the finite-dimensional space
$\operatorname{Sym}_+^2(T_\theta^*\Theta)$ of positive-semidefinite quadratic
forms.  For $v\in T_\theta\Theta$, define the directional support function
\[
 h_{\rm att}(e,\theta;v)=
 \sup_{g\in\mathcal G_{\rm att}(e,\theta;B)}g(v,v).
\]
Optimal settings for different tangent directions need not be jointly
executable within the same budget.  Accordingly, $\mathcal G_{\rm att}$ cannot
in general be represented by a single ``hardware FIM.''  A finite library of
settings that have been explicitly enumerated and verified to be executable
determines an inner subset of the full attainable fibre; finite-shot estimates
of that subset must still be accompanied by their statistical uncertainty.
To retain the loss introduced by the final measurement, define the paired
attainable fibre
\begin{equation}
 \mathfrak G_{\rm att}^{Q\to M}(e,\theta;B)
 =\overline{\left\{\begin{array}{l}
 \left(\sum_a\lambda_a g^{\rm out}_{Q,e,a},
 \sum_a\lambda_a g_{M,e,a}\right):\\[-2pt]
 \lambda\in\mathfrak D_e(B)
 \end{array}\right\}}.
 \label{en:eq:paired-attainable-fibre}
\end{equation}
where the closure is taken in the corresponding product space of
positive-semidefinite quadratic forms.  Projection onto the second coordinate
gives $\mathcal G_{\rm att}$.  Retaining both coordinates isolates the loss
introduced by the final POVM; comparison of the first coordinate with
$g_{Q,e}$ separately identifies loss introduced by preprocessing.
\end{definition}

\begin{proposition}[Quantum--channel--measurement--record monotonicity]
\label{en:prop:metric-descent}
For every admissible setting $a$ and record $T$,
\begin{equation}
 g_{T,e,a}\preceq g_{M,e,a}\preceq
 g^{\rm out}_{Q,e,a}\preceq g_{Q,e}.
 \label{en:eq:metric-monotone-chain}
\end{equation}
The three positive-semidefinite differences
\[
 g_{M,e,a}-g_{T,e,a},\qquad
 g^{\rm out}_{Q,e,a}-g_{M,e,a},\qquad
 g_{Q,e}-g^{\rm out}_{Q,e,a}
\]
are the local losses introduced, respectively, by record compression, the
final POVM, and preprocessing.  The first difference is the within-cell
conditional-score covariance; the other two follow from monotonicity of the
Fisher--Rao and Bures metrics.  At a regular point, a local record audit
compares $g_{T,e,a}$ with $g_{M,e,a}$; finite-separation approval or refusal
instead uses the Hellinger procedure developed below.

Fix a baseline setting $a_0$ and a tangent direction $v$.  Measurement
augmentation is operationally justified only if an admissible design attains
sensitivity greater than $g_{M,e,a_0}(v,v)$.  If
\[
 \begin{aligned}
 g_{M,e,a_0}(v,v)&<g_{Q,e}(v,v),\\
 h_{\rm att}(e,\theta;v)&=g_{M,e,a_0}(v,v),
 \end{aligned}
\]
then the remaining gap is quantum sensitivity inaccessible under the declared
device budget.  It cannot be recovered merely by refining the record.
\end{proposition}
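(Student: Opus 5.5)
I will prove the chain \eqref{en:eq:metric-monotone-chain} from right to left. Each inequality is a data-processing (monotonicity) statement for a metric pulled back along a composition of maps. Throughout I work on the constant-rank neighborhood $U$ fixed in Definition~\ref{en:def:metric-descent}, so that all pullbacks are smooth quadratic forms. At a rank-changing point, the inequalities are read for the declared one-sided limit: Loewner inequalities between positive-semidefinite forms are closed conditions, so they pass to that limit.

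\emph{Step 1: $g^{\rm out}_{Q,e,a}\preceq g_{Q,e}$.} I will invoke the Petz monotonicity theorem for the SLD (minimal monotone) metric~\cite{PetzGhinea2011}. For any CPTP map $\Phi$, any state $\omega$ and any tangent $X$,
\[
 g_{\rm SLD}\bigl(D\Phi\,X,D\Phi\,X\bigr)_{\Phi(\omega)}\le g_{\rm SLD}(X,X)_\omega .
\]
Take $\Phi=\Phi_{e,a}$, which is independent of $\theta$, and $X=d\omega_e(v)$. By linearity of $\Phi_{e,a}$ we have $d(\Phi_{e,a}\circ\omega_e)(v)=\Phi_{e,a}(d\omega_e(v))$, and the pullback inequality follows for every $v$.

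\emph{Step 2: $g_{M,e,a}\preceq g^{\rm out}_{Q,e,a}$.} The map $\rho\mapsto(\Tr M^a_{e,y}\rho)_y$ is a quantum-to-classical channel into diagonal states. On diagonal states the SLD metric coincides with the Fisher--Rao metric, so Step~1 applies again. Equivalently, this is the Braunstein--Caves inequality~\cite{BraunsteinCaves1994}. For a self-contained verification I would write $\partial_v p(y)=\Re\Tr(M_y L\rho)$, where $L$ is the SLD, and apply Cauchy--Schwarz in the form
\[
 |\Tr(M_yL\rho)|^2\le \Tr(M_y\rho)\,\Tr(\rho L M_y L).
\]
Summing over $y$ and using $\sum_y M_y=I$ gives $g_{M}(v,v)\le \Tr(\rho L^2)$.

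\emph{Step 3: $g_{T,e,a}\preceq g_{M,e,a}$, with the covariance identity.} Write $p=p^a_{e,\theta}$ and $q=T_\#p$, and let $s(y)=\partial_v\log p(y)$ on the support. Since $\partial_v q(t)=\sum_{T(y)=t}\partial_v p(y)$, the record score is the conditional expectation $s_T(t)=\E_p[s\mid T=t]$. This is the conditional-score identity cited as an established ingredient. The law of total variance then gives
\[
 g_M(v,v)-g_T(v,v)=\E_p\bigl[\operatorname{Var}_p(s\mid T)\bigr]\ge0,
\]
and polarization in $v$ yields the within-cell conditional covariance of the score as the difference form. The three differences are positive semidefinite by Steps~1--3, which justifies attributing them to the three stages.

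\emph{Step 4: the inaccessibility clause.} Assume $h_{\rm att}(e,\theta;v)=g_{M,e,a_0}(v,v)<g_{Q,e}(v,v)$. For any record $T$ on any admissible setting $a$ (or mixture with retained labels), Step~3 gives $g_{T,e,a}(v,v)\le g_{M,e,a}(v,v)$. The definition of the support function gives $g_{M,e,a}(v,v)\le h_{\rm att}$, because $g_{M,e,a}$ itself lies in $\mathcal G_{\rm att}$ via a point mass. It also extends to mixtures, since a label-retaining mixture has Fisher metric $\sum_a\lambda_a g_{M,e,a}$. Hence no refinement or choice of record exceeds $g_{M,e,a_0}(v,v)$, and the gap to $g_{Q,e}(v,v)$ is unrecoverable under the budget.

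\emph{Main obstacle.} I expect the main difficulty to be support and rank issues. Outcomes with $p(y)=0$ make the score undefined, and the SLD is only well defined under constant rank. I would first restrict the score to $\operatorname{supp}p$ and check that $\partial_v p(y)=0$ off the support, which holds in the constant-rank regime because $p\ge0$ attains a minimum there. Rank-changing points are then handled only through the one-sided limits stated in the definition.
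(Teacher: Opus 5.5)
Your proposal is correct and follows the paper's own proof. The three inequalities are Fisher--Rao data processing under $T$ (which you make explicit through the conditional-score/total-variance identity), the Braunstein--Caves bound for the POVM, and SLD/Bures monotonicity under the $\theta$-independent channel $\Phi_{e,a}$; your Cauchy--Schwarz check and your explicit treatment of the inaccessibility clause add detail the paper leaves implicit.

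One step is tacit: in Step~4, ``via a point mass'' assumes $\delta_a\in\mathfrak D_e(B)$, which the paper does not state. Stating the bound for designs $\lambda\in\mathfrak D_e(B)$ removes this assumption, since their metrics lie in $\mathcal G_{\rm att}$ by definition. The record-refinement part of the clause needs only Step~3 applied at $a_0$.
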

\begin{proof}
The first inequality is Fisher--Rao data processing under the Markov kernel
induced by $T$.  The second is the Braunstein--Caves bound for the specified
POVM, and the third is monotonicity of the SLD/Bures metric under the
parameter-independent CPTP map $\Phi_{e,a}$.  Evaluating each quadratic form
on an arbitrary tangent vector gives the positive-semidefinite order.
\end{proof}

\begin{proposition}[Label-retaining randomization at finite separation]
\label{en:prop:labelled-randomization}
Let $A\sim\lambda$ be independent of $\theta$, and retain the setting label $A$
with each outcome.  The joint law on
$\bigsqcup_{a\in\operatorname{supp}\lambda}\{a\}\times\mathcal Y_a$ is
$P^\lambda_{e,\theta}(a,y):=\lambda_a p^a_{e,\theta}(y)$.  For each $a$, let
$T_a:\mathcal Y_a\to\mathcal T_a$ be a record, define
$g_{T_a,e,a}:=(T_{a\#}\circ p_e^a)^*g_{\rm FR}$, and set
$R(a,y):=(a,T_a(y))$.  Then, for any $\theta,\theta'$,
\begin{equation}
 \begin{aligned}
 H^2(P^\lambda_{e,\theta},P^\lambda_{e,\theta'})
 &=\sum_a\lambda_a H^2(p^a_{e,\theta},p^a_{e,\theta'}),\\
 H^2(R_\#P^\lambda_{e,\theta},R_\#P^\lambda_{e,\theta'})
 &\le H^2(P^\lambda_{e,\theta},P^\lambda_{e,\theta'}).
 \end{aligned}
 \label{en:eq:labelled-hellinger}
\end{equation}
At a regular point, the per-shot Fisher metrics before and after recording are,
respectively, $\sum_a\lambda_a g_{M,e,a}$ and
$\sum_a\lambda_a g_{T_a,e,a}$.  If the setting label is discarded or otherwise
postprocessed, Hellinger and Fisher data-processing inequalities remain valid,
but the displayed equalities need not hold.
\end{proposition}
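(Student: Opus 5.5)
The plan is to compute everything directly from the joint law $P^\lambda_{e,\theta}(a,y)=\lambda_a p^a_{e,\theta}(y)$. We use the convention $H^2(p,q)=\tfrac12\sum_x(\sqrt{p(x)}-\sqrt{q(x)})^2$; any other normalization differs by a constant factor and changes nothing below.

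\emph{The Hellinger equality.} Because $\lambda$ does not depend on $\theta$, each summand over the disjoint union factorizes:
\[
 \bigl(\sqrt{\lambda_a p^a_{e,\theta}(y)}-\sqrt{\lambda_a p^a_{e,\theta'}(y)}\bigr)^2
 =\lambda_a\bigl(\sqrt{p^a_{e,\theta}(y)}-\sqrt{p^a_{e,\theta'}(y)}\bigr)^2 .
\]
Summing first over $y\in\mathcal Y_a$ and then over $a\in\operatorname{supp}\lambda$ gives the displayed identity.

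\emph{The Hellinger inequality.} $R$ is a deterministic map, so the inequality is Hellinger data processing. To see it directly, fix a cell $R^{-1}(a,t)$. The Cauchy--Schwarz inequality gives
\[
 \sqrt{\textstyle\sum_{x}P_\theta(x)\sum_{x}P_{\theta'}(x)}\ \ge\ \textstyle\sum_{x}\sqrt{P_\theta(x)P_{\theta'}(x)},
\]
with all sums over the cell. Hence the Bhattacharyya coefficient $1-H^2$ can only increase under $R_\#$. The same cellwise argument shows more: $R_\#P^\lambda_{e,\theta}(a,t)=\lambda_a\,(T_{a\#}p^a_{e,\theta})(t)$. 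So the recorded side also equals $\sum_a\lambda_a H^2(T_{a\#}p^a_{e,\theta},T_{a\#}p^a_{e,\theta'})$, and the inequality could alternatively be obtained label by label.

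\emph{The Fisher statements.} At a regular point, where the relevant probabilities are positive and smooth, $\partial_v\log P^\lambda_{e,\theta}(a,y)=\partial_v\log p^a_{e,\theta}(y)$, since $\log\lambda_a$ is constant in $\theta$. Taking the expectation of the squared score under $P^\lambda$ gives $\sum_a\lambda_a g_{M,e,a}(v,v)$. Polarization then yields the full quadratic form.

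Applying the same computation to $R_\#P^\lambda=\sum_a\lambda_a\,\delta_a\otimes T_{a\#}p^a$ gives $\sum_a\lambda_a g_{T_a,e,a}$. Alternatively, one can take the second-order expansion $H^2(P_\theta,P_{\theta+sv})=\tfrac18 s^2 g(v,v)+o(s^2)$ of the equalities already proved, which gives the same metrics.

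\emph{Discarding or post-processing the label.} The channel $(a,y)\mapsto f(a,T_a(y))$, or $(a,y)\mapsto y$ when the $\mathcal Y_a$ are identified, is still a Markov map. Therefore Hellinger data processing, and Fisher monotonicity as in Proposition~\ref{en:prop:metric-descent}, continue to hold. To show the equalities can fail, we use a two-setting example in which each $p^a$ depends on $\theta$ but the mixture $\sum_a\lambda_a p^a_{e,\theta}$ does not. After the label is forgotten the Fisher information is zero, while $\sum_a\lambda_a g_{M,e,a}(v,v)>0$.

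The main obstacle is only bookkeeping at the regular point. If some $p^a_{e,\theta}(y)$ vanish, the score formula must be restricted to the support. We handle this by assuming constant support near $\theta$, as in the regularity convention of Definition~\ref{en:def:metric-descent}.
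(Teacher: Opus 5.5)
Your proof is correct and follows the paper's route. The paper also gets the identity from the factorization of the Bhattacharyya coefficient of the labelled laws, which is equivalent to your summandwise extraction of $\lambda_a$. It gets the inequality from Hellinger data processing under $R$. It gets both Fisher formulas from the fact that, conditional on $A=a$, the labelled score is the score of $p^a_{e,\theta}$. Your additions are correct elaborations rather than a different argument: the cellwise Cauchy--Schwarz check, the label-by-label form $R_\#P^\lambda_{e,\theta}(a,t)=\lambda_a(T_{a\#}p^a_{e,\theta})(t)$, and the two-setting counterexample showing that the equalities can fail once the label is discarded (a claim the paper's proof does not address).
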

\begin{proof}
The Bhattacharyya coefficient of the labelled laws is
$\sum_a\lambda_a\sum_{y\in\mathcal Y_a}
\sqrt{p^a_{e,\theta}(y)p^a_{e,\theta'}(y)}$, which gives the first identity.
The second is Hellinger data processing under $R$.  Because $\lambda$ is
parameter independent, the labelled score conditional on $A=a$ is the score
of $p^a_{e,\theta}$; averaging its second moment before and after applying
$T_a$ proves both Fisher formulas.
\end{proof}

\begin{definition}[Deterministic record and allowed-set conditioning]
\label{en:def:quotient-setting}
\label{en:def:leakage-boundary}
Let $\varnothing\ne\mathcal T$ be finite.  A deterministic record is a map
$T:\mathcal Y\to\mathcal T$.  It induces the fiber partition
$\mathcal Y=\bigsqcup_{t\in T(\mathcal Y)}T^{-1}(t)$ and the pushforward
\begin{equation}\label{en:eq:pushforward-setting}
 (T_\#p_{e,\theta})(t)=p_{e,\theta}^T(t)
 =\sum_{y\in T^{-1}(t)}p_{e,\theta}(y).
\end{equation}
Thus compression replaces every fiber by its total probability.  Let
$\varnothing\ne S_{\mathcal Y}\subseteq\mathcal Y$ be a declared allowed-outcome
set.  For computational-basis readout, $\mathcal Y=\mathcal X$ and
$S_{\mathcal Y}=S$.  Define
\begin{equation}\label{en:eq:support-leakage-setting}
 \begin{aligned}
 L(y)&=\mathbf1\{y\notin S_{\mathcal Y}\},\\
 \ell_{e,\theta}&=p_{e,\theta}(S_{\mathcal Y}^c),\\
 u_{e,\theta}&=1-\ell_{e,\theta}.
 \end{aligned}
\end{equation}
If $u_{e,\theta}>0$, define the conditional allowed law by
$q_{e,\theta}(y)=p_{e,\theta}(y)/u_{e,\theta}$ for
$y\in S_{\mathcal Y}$.  The quantity $\ell_{e,\theta}$ is the observed mass
outside the declared allowed set, called \emph{support leakage} below.  It is
not population leakage outside the transmon computational subspace.  All four
quantities are functionals of the observed law and require no microscopic
noise model.
\end{definition}

\begin{definition}[Finite-distance record audit]
\label{en:def:local-response-setting}
Let $\varnothing\ne\mathcal Z$ be finite and let
$p,q\in\Delta(\mathcal Z)$.  Define
\begin{equation}\label{en:eq:metrics-setting}
 \begin{aligned}
 H^2(p,q)&=1-\sum_{z\in\mathcal Z}\sqrt{p(z)q(z)},\\
 \TV(p,q)&=\frac12\sum_{z\in\mathcal Z}|p(z)-q(z)|.
 \end{aligned}
\end{equation}
Thus $0\le H^2(p,q)\le1$.  For a fixed epoch $e$ and a declared pair
$\theta,\theta'$, define the finite-distance compression loss of a record $T$ by
\begin{equation}\label{en:eq:finite-record-residual-setting}
 R_{T,e}^H(\theta,\theta')=
 H^2(p_{e,\theta},p_{e,\theta'})-
 H^2(T_\#p_{e,\theta},T_\#p_{e,\theta'}).
\end{equation}
Hellinger data processing gives
$0\le R_{T,e}^H(\theta,\theta')\le H^2(p_{e,\theta},p_{e,\theta'})$.
This residual measures distinction lost by recording $T$ after measurement; it
does not measure distinction already lost between the quantum state and the
measured law.  It can be estimated from the unaggregated counts, with
finite-sample decisions based on the simultaneous confidence bounds introduced
below.
\end{definition}

\begin{definition}[Local differential record audit]
\label{en:def:local-differential-setting}
Let $U\subseteq\Theta$ be an open coordinate neighborhood on which
$\theta\mapsto p_{e,\theta}(y)$ is differentiable and
$p_{e,\theta}(y)>0$ for every $y\in\mathcal Y$.  Define
\begin{equation}\label{en:eq:fisher-setting}
 \begin{aligned}
 s_{e,\theta}(y)&:=\nabla_\theta\log p_{e,\theta}(y),\\
 F_Y(e,\theta)&:=\sum_{y\in\mathcal Y}p_{e,\theta}(y)
 s_{e,\theta}(y)s_{e,\theta}(y)^\top,\\
 F_T(e,\theta)&:=F_{T_\#p_e}(\theta).
 \end{aligned}
\end{equation}
The positive-semidefinite difference $F_Y(e,\theta)-F_T(e,\theta)$ is the
infinitesimal sensitivity lost by the record $T$.  This audit is used when the
question concerns local control response rather than a declared finite pair of
settings.  Points with zero probability are excluded from this regular
definition and are treated, when required, by a separately specified
fixed-support limit.
\end{definition}

\begin{definition}[Optional all-direction response-ideal audit]
\label{en:def:all-direction-setting}
Fix $\theta_0\in\Theta$ and a real-analytic coordinate chart about $\theta_0$.
Let $\mathcal O_{\theta_0}$ be the ring of real-analytic function germs at
$\theta_0$.  For an analytic vector response $f=(f_1,\ldots,f_m)$, define its
response ideal by
\begin{equation}\label{en:eq:response-ideal-setting}
 \mathfrak I[f]=\langle f_j(\theta)-f_j(\theta_0):1\le j\le m\rangle
 \subset\mathcal O_{\theta_0}.
\end{equation}
Choose local real coordinates for $\sigma_\theta$.  Let
$W_\theta=w(p_{e,\theta})$, where $w$ is a declared analytic certificate map on
a neighborhood of $p_{e,\theta_0}$.  Define
\begin{equation}\label{en:eq:ideal-chain-setting}
 \begin{aligned}
 \mathfrak I_\rho&=\mathfrak I[\sigma],&
 \mathfrak I_P&=\mathfrak I[p_e],\\
 \mathfrak I_Q&=\mathfrak I[T_\#p_e],&
 \mathfrak I_C&=\mathfrak I[(T_\#p_e,W)],\\
 \mathfrak I_Q&\subseteq\mathfrak I_C
 \subseteq\mathfrak I_P\subseteq\mathfrak I_\rho.
 \end{aligned}
\end{equation}
The first two inclusions follow from the declared analytic postprocessing, and
the last follows from the epoch-fixed channel and measurement.  These ideals
are used only for the optional question of whether the declared response is
complete along every analytic control arc.  The integral-closure criterion and
its assumptions are stated in the completeness theorem below.  This audit is
unnecessary when a finite Hellinger residual already requires refusal.
\end{definition}

If the physical factorization in \cref{en:eq:epoch-born-law} is unavailable,
the observed black-box family
$\mathcal R_e=\{r_{e,\theta}:\theta\in\Theta\}$ still supports a
distribution-level audit.  Pushforwards and finite Hellinger residuals require
only the observed laws.  Fisher matrices additionally require a differentiable
positive response family, and response ideals require real-analyticity in a
specified local chart.  Without a common epoch-fixed channel and measurement,
however, there is no identified state ideal $\mathfrak I_\rho$, and the
inclusion $\mathfrak I_P\subseteq\mathfrak I_\rho$ cannot be asserted.  The
audit may therefore detect response lost by a record, but it cannot attribute
that loss to state preparation, channel evolution, or measurement.  This
separates the model-free distribution audit from the fixed-channel
state--measurement completeness result.

\paragraph{Decision terminology.}
A \emph{record} is a declared deterministic statistic, or more generally a
fixed Markov kernel, applied to the measured outcome.  \emph{Completeness} is
always relative to specified responses. For a finite audit, these are the
declared control pairs together with a Hellinger-residual tolerance; finite-shot approval
uses the confidence procedure introduced below.  For the optional
all-direction audit, the responses are the centered state or distribution
germs, and completeness is equality of the specified integral closures under
the assumptions of the completeness theorem.  A \emph{repair}
is an executable augmentation selected from the declared candidate class that
strictly reduces a detected loss.  It is called a \emph{completion} only after
the target criterion is met.  If neither the original record nor an admissible
augmentation can be certified, the procedure refuses low-dimensional
compression and retains the unaggregated outcome law.

\subsection{End-to-end audit for one three-qubit circuit}
\label{en:sec:running-example}
This example follows one candidate record from the quantum state to an
operational decision.  The full measured law supplies the reference response,
and the Hellinger residual measures the part removed by the record.  Comparing
the quantum, measured, and recorded Fisher matrices then distinguishes
sensitivity already inaccessible under the chosen measurement from sensitivity
lost only by coarse-graining.  The count model identifies the quantities that
must be estimated on hardware.  These distribution-level and local
calculations constitute Steps~1--5 and require neither a fitted microscopic
noise model nor an exponential-family assumption.  The response-ideal
calculation at the end of the example is used only for the optional Steps~6--7,
when completeness along every analytic control arc is at issue.

The objects in Definition~\ref{en:def:epoch-experiment} take the following
form.  The control is $\theta=(a,b)\in\mathbb T^2$, the Hilbert space and the
initial state are
\begin{equation}\label{en:eq:running-environment}
 \mathcal H=(\C^2)^{\otimes3},\qquad
 \rho_0=|100\rangle\!\langle100|,
\end{equation}
and the compiler family is
\begin{equation}\label{en:eq:running-compiler}
 \mathcal C_{(a,b)}(\rho)=U(a,b)\rho U(a,b)^\dagger,\qquad
 U(a,b)=G_{02}(b)G_{01}(a).
\end{equation}
The outcome and computational-bitstring spaces are both
$\mathcal Y=\mathcal X=\{0,1\}^3$.  For the ideal reference, take
$\mathcal E_{\mathrm{id}}=\operatorname{id}$ and
$M_y=|y\rangle\!\langle y|$, and denote the resulting law by $p_{a,b}$.  In
hardware epoch $e$, the same compiled circuit is followed by the unknown
epoch-fixed channel $\mathcal E_e$ and physical POVM $\mathcal M_e$, giving
\begin{equation}\label{en:eq:running-hardware-law}
 p_{e,a,b}(y)=\Tr\!\left[M_{e,y}\mathcal E_e
 \bigl(\mathcal C_{(a,b)}(\rho_0)\bigr)\right].
\end{equation}
This notation keeps the ideal law used for the closed-form calculation distinct
from the observed law to which the audit is applied.  The compression audit is
evaluated directly from the latter probabilities, or from their counts, and
does not require a fitted parameterization of a microscopic noise mechanism.

Let $G_{ij}(\varphi)$ be the excitation-number-preserving two-mode rotation
that acts as the identity on $|00\rangle_{ij}$ and $|11\rangle_{ij}$ and
satisfies
\[
 \begin{aligned}
 G_{ij}(\varphi)|10\rangle_{ij}
 &=\cos\varphi|10\rangle_{ij}+\sin\varphi|01\rangle_{ij},\\
 G_{ij}(\varphi)|01\rangle_{ij}
 &=-\sin\varphi|10\rangle_{ij}+\cos\varphi|01\rangle_{ij}.
 \end{aligned}
\]
For the three qubits $V=\{0,1,2\}$, the one-excitation set is
$S=\{100,010,001\}$.  Applying $U(a,b)=G_{02}(b)G_{01}(a)$ to
$|100\rangle$ gives
\begin{equation}\label{en:eq:running-state}
 |\psi(a,b)\rangle=\cos a\cos b|100\rangle+\sin a|010\rangle
 +\cos a\sin b|001\rangle .
\end{equation}
Figure~\ref{en:fig:running-circuit} fixes the bit order and the order of
operations.  The second rotation acts logically on modes $0$ and $2$ and leaves
qubit $1$ unchanged.  Its box spans three wires only to display this
nonadjacent logical operation; it does not denote a native three-qubit gate.
\begin{figure}[H]
\centering
\begin{quantikz}[row sep=0.27cm,column sep=0.42cm]
\lstick{$q_0$}&\gate{X}&\gate[wires=2]{G_{01}(a)}&\gate[wires=3]{G_{02}(b)\otimes I_1}&\meter{}&\cw\\
\lstick{$q_1$}&\qw& & &\meter{}&\cw\\
\lstick{$q_2$}&\qw&\qw& &\meter{}&\cw
\end{quantikz}
\caption{The three-qubit running circuit.  It prepares one excitation,
redistributes it by $G_{01}(a)$ and $G_{02}(b)$, and records the computational
bitstring.  Its ideal law assigns zero probability outside
$S=\{100,010,001\}$.}
\label{en:fig:running-circuit}
\end{figure}
Because $U(a,b)$ is unitary, the state is normalized; explicitly,
$\cos^2a\cos^2b+\sin^2a+\cos^2a\sin^2b=1$.  Computational-basis measurement
assigns the two projectors $|0\rangle\!\langle0|$ and
$|1\rangle\!\langle1|$ to each qubit, so one shot returns a bitstring
$y\in\{0,1\}^3$.  By the Born rule, the ideal probabilities are the squared
amplitudes,
\begin{equation}\label{en:eq:running-probabilities}
 \begin{aligned}
 p_{a,b}(100)&=\cos^2a\cos^2b,\\
 p_{a,b}(010)&=\sin^2a,\\
 p_{a,b}(001)&=\cos^2a\sin^2b.
 \end{aligned}
\end{equation}
For $y\notin S$, $p_{a,b}(y)=0$.  At parameter values where one of these
amplitudes vanishes, the support is a proper subset of $S$.
Thus the three observed counts in an ideal $n$-shot experiment satisfy
\begin{equation}\label{en:eq:running-count-law}
 \begin{aligned}
 &(N_{100},N_{010},N_{001})\\
 &\quad\sim\operatorname{Multinomial}\!\left(
 n;\cos^2a\cos^2b,\sin^2a,\cos^2a\sin^2b\right).
 \end{aligned}
\end{equation}
On hardware, the eight-component count vector obeys
$C_{e,a,b}\sim\operatorname{Multinomial}(n,p_{e,a,b})$ and is retained before
aggregation by any candidate record.  Retaining these unaggregated counts is
what permits the later calculation of variation hidden by a record.  At an
interior point of the positive-amplitude chart, the three ideal score vectors
are
\begin{equation}\label{en:eq:running-scores}
 \begin{aligned}
 s_{a,b}(100)&=(-2\tan a,-2\tan b),\\
 s_{a,b}(010)&=(2\cot a,0),\\
 s_{a,b}(001)&=(-2\tan a,2\cot b),
 \end{aligned}
\end{equation}
and direct substitution into \cref{en:eq:fisher-setting} yields
\begin{equation}\label{en:eq:running-fisher}
 F_{\rm cb}(a,b)=
 \begin{pmatrix}4&0\\0&4\cos^2a\end{pmatrix}=F_Q(a,b).
\end{equation}
The equality with the pure-state QFI follows because the amplitudes in
\cref{en:eq:running-state} are real, positive, and have no parameter-dependent
phase in this chart.  Hence, for the ideal reference at each such interior
point, measurement loss and record-compression loss separate as
\begin{equation}\label{en:eq:running-loss-separation}
 \begin{aligned}
 \underbrace{F_Q-F_{\rm cb}}_{\text{measurement loss}}&=0,\\
 \underbrace{F_{\rm cb}-F_{T_1}}_{\text{compression loss}}
 &=\begin{pmatrix}0&0\\0&4\cos^2a\end{pmatrix}.
 \end{aligned}
\end{equation}
Thus a missing $b$ response under $T_1$ cannot be attributed to the ideal
computational-basis measurement in this chart; it is introduced by merging
measured outcomes.  This zero measurement-loss statement concerns the ideal
reference and is not assumed for the hardware law $p_{e,a,b}$.

\paragraph{Which outcomes the candidate records actually merge.}
On $S=\{100,010,001\}$, the two fibers of the one-site record $T_1=n_1$ are
\begin{equation}\label{en:eq:running-t1-fibers}
 T_1^{-1}(0)\cap S=\{100,001\},\qquad
 T_1^{-1}(1)\cap S=\{010\}.
\end{equation}
Thus $T_1$ merges $100$ and $001$.  By contrast, the augmented record
$T_2=(L,n_1,n_2)$ gives
\begin{equation}\label{en:eq:running-t2-values}
 100\mapsto(0,0,0),\qquad
 010\mapsto(0,1,0),\qquad
 001\mapsto(0,0,1),
\end{equation}
and is injective on $S$.  Low dimension alone does not establish preservation:
the relevant question is whether outcomes placed in the same fiber have the
same declared control response.

With $n_j(y)=y_j$, the ideal pushforward under the one-site record $T_1=n_1$
and its Fisher matrix are
\begin{equation}\label{en:eq:t1-pushforward}
 T_{1\#}p_{a,b}=\operatorname{Bernoulli}(\sin^2a),\qquad
 F_{T_1}(a,b)=\begin{pmatrix}4&0\\0&0\end{pmatrix}.
\end{equation}
It therefore erases the entire $b$ direction.  This is also visible at finite
separation.  For $a$ fixed and $b,b'$ in the positive chart,
\begin{equation}\label{en:eq:hidden-b-hellinger}
 \begin{aligned}
 H^2(p_{a,b},p_{a,b'})
   &=\cos^2a\,[1-\cos(b-b')],\\
 H^2(T_{1\#}p_{a,b},T_{1\#}p_{a,b'})&=0.
 \end{aligned}
\end{equation}
Hence the Hellinger residual hidden by $T_1$ equals the full first expression.
Its local expansion is
$\frac12\cos^2a(b-b')^2+O(|b-b'|^4)$, consistent with the Fisher--Hellinger
relation $H^2\sim\frac18F_{bb}(b-b')^2$.

\paragraph{Auditing hardware response without a fitted noise model.}
For arbitrary hardware settings $\theta=(a,b)$ and $\theta'=(a',b')$, define
\begin{equation}\label{en:eq:running-operational-residual}
 R_{T,e}^H(\theta,\theta')=
 H^2(p_{e,\theta},p_{e,\theta'})-
 H^2(T_\#p_{e,\theta},T_\#p_{e,\theta'})\ge0.
\end{equation}
The first term is the full change displayed by the device, and the second is
the change retained by the proposed record.  Hence $R_{T,e}^H$ measures
observed variation hidden by recording, not the microscopic cause of that
variation.  From the unaggregated counts, compute
\begin{equation}\label{en:eq:running-count-residual}
 \widehat p_{e,\theta}=N_{e,\theta}/n,\qquad
 N^T_{e,\theta}(t)=\sum_{y:T(y)=t}N_{e,\theta}(y),
\end{equation}
and take the difference of the two empirical Hellinger distances.  Because the
same empirical histograms are aggregated by the deterministic map $T$, data
processing makes the raw plug-in difference nonnegative for every sample.
This algebraic nonnegativity is not a confidence statement: finite-shot
uncertainty enters the simultaneous bounds used for approval or refusal.  The
calculation neither assumes that $p_{e,\theta}$ is an exponential family nor
identifies Kraus parameters of $\mathcal E_e$.

On the allowed set, the two-occupation record $T_S=(n_1,n_2)$ is injective
because $n_0=1-n_1-n_2$.  The experimentally used augmentation is
$T_2=(L,n_1,n_2)$.  Thus $T_2$ retains the complete histogram on $S$ together
with a separate flag for outcomes outside $S$.  If an
ideal law $p$ is supported on $S$ and a hardware law decomposes as
$r=u q+(1-u)h$ with $q$ supported on $S$ and $h$ on $S^c$, then
\begin{equation}\label{en:eq:running-hellinger-decomposition}
 H^2(p,r)=(1-\sqrt u)+\sqrt u\,H^2(p,q).
\end{equation}
The first term is the exact contribution normal to the allowed face of the
probability simplex.  Coarse-graining $q$ by a record $T$ leaves the
nonnegative within-support residual
\begin{equation}\label{en:eq:running-residual}
 R_T^H=H^2(p,q)-H^2(T_\#p,T_\#q)\ge0.
\end{equation}
For $T_1$ and the comparison in \cref{en:eq:hidden-b-hellinger}, this residual
equals the full distance; for $T_S$ it is zero.

Injectivity of $T_2$ on $S$ does not imply sufficiency for every noisy hardware
response.  The flag $L$ records only the total mass of $S^c$ and does not
distinguish individual off-support bitstrings.  If their conditional
distribution changes with control, $T_2$ retains a residual.  The audit does
not assume this distribution to be parameter independent; it tests the
residual through \cref{en:eq:running-operational-residual}.  Injectivity is an
exact property of the ideal allowed law, whereas a small hardware residual can
support approval only for the declared, observed comparisons.

\paragraph{Optional all-direction calculation.}
Steps~1--5 end with the finite residual decision above.  To ask whether the
same conclusion holds along every analytic arc through $(a,b)=(0,0)$, take
Taylor expansions and form
\begin{align}\label{en:eq:running-ideals}
 \mathfrak I_\rho&=(a,b),& \mathfrak I_P&=(a^2,b^2),\nonumber\\
 \mathfrak I_{Q_1}&=(a^2),& \mathfrak I_{Q_2}&=(a^2,b^2).
\end{align}
The first equality uses a projective amplitude chart; the remaining three use
the computational histogram.  Thus $T_1$ fails on the $b$ axis, whereas $T_2$
is complete relative to that measurement.  It is not complete for the state
response, because the interference observables
the interference observables
\begin{equation}\label{en:eq:running-interference}
 \begin{aligned}
 X_{01}&=|100\rangle\!\langle010|+|010\rangle\!\langle100|,\\
 X_{02}&=|100\rangle\!\langle001|+|001\rangle\!\langle100|.
 \end{aligned}
\end{equation}
have expectations
$\langle X_{01}\rangle=\sin(2a)\cos b=2a+O(\|(a,b)\|^2)$ and
$\langle X_{02}\rangle=\cos^2a\sin(2b)=2b+O(\|(a,b)\|^2)$.  Adding these
coordinates changes the certificate ideal to $(a,b)$ and exposes the signed
first-order amplitude response hidden by computational probabilities at the
boundary.

The target must remain explicit when the completion theorem is applied.  In
the local analytic ring at $(0,0)$, take the one-site record as the current
representation and the computational histogram as the first target:
\begin{equation}\label{en:eq:running-library-completion}
 \mathfrak I_0=(a^2),\qquad
 \mathfrak I_\star=\mathfrak I_P=(a^2,b^2),\qquad
 \mathfrak J_{n_2}=(b^2).
\end{equation}
Then $\mathfrak I_0+\mathfrak J_{n_2}=(a^2,b^2)=\mathfrak I_\star$, so adding
$n_2$ completes the record relative to computational readout. For the stronger
state-response target,
\begin{equation}\label{en:eq:running-state-completion}
 \mathfrak I_\star=\mathfrak I_\rho=(a,b),\qquad
 \mathfrak J_{X_{01}}=(a),\qquad
 \mathfrak J_{X_{02}}=(b),
\end{equation}
and $\mathfrak I_P+\mathfrak J_{X_{01}}+\mathfrak J_{X_{02}}=(a,b)$.
The first equality certifies record augmentation under the same measurement.
The second requires new interference settings and is operational only when
those settings belong to the admissible hardware library.  Completion relative
to one measurement therefore does not imply completion for the state response.

\paragraph{A phase extension isolating measurement loss.}
To distinguish measurement loss from record compression, attach a relative
phase to one occupied mode:
\begin{equation}\label{en:eq:running-phase-extension}
 \begin{aligned}
 |\psi(a,b,\phi)\rangle={}&\cos a\cos b|100\rangle
   +e^{i\phi}\sin a|010\rangle\\
 &+\cos a\sin b|001\rangle.
 \end{aligned}
\end{equation}
Its computational probabilities are independent of $\phi$, whereas at an
interior point
\begin{equation}\label{en:eq:running-phase-loss}
 F_{{\rm cb},\phi\phi}=0,
 \qquad F_{Q,\phi\phi}=4\sin^2a\cos^2a=\sin^2(2a).
\end{equation}
No statistic of the same computational-basis counts can recover this missing
direction, because it is absent before compression.  With
$Y_{01}=-i|100\rangle\!\langle010|+i|010\rangle\!\langle100|$, however,
\begin{equation}\label{en:eq:running-phase-witness}
 \begin{aligned}
 A&=\sin(2a)\cos b,\\
 \langle X_{01}\rangle&=A\cos\phi,\\
 \langle Y_{01}\rangle&=A\sin\phi.
 \end{aligned}
\end{equation}
Thus an interference setting supplies a local $\phi$ witness whenever that
phase is physically identifiable.  The two losses are now operationally
distinct:
$T_1$ loses $b$ after measurement, while computational readout loses $\phi$
before any record is formed.

\paragraph{Interpreting the phase loss through attainable geometry.}
On the slice $b=0$, an ideal $Y_{01}$-basis measurement attains
$g_{M,Y}(\partial_\phi,\partial_\phi)=\sin^2(2a)=
g_Q(\partial_\phi,\partial_\phi)$ at $\phi=0$.  If interference visibility is
$0\le\eta\le1$ and the two probabilities become
$[1\pm\eta\sin(2a)\sin\phi]/2$, the attainable value is
$\eta^2\sin^2(2a)$.  If the computational basis is the only admissible setting,
then $h_{\rm att}(\partial_\phi)=0$: phase sensitivity is inaccessible under
that device budget, so record augmentation is not an operational repair.  If a
calibrated $Y_{01}$ setting is admissible, then
$h_{\rm att}(\partial_\phi)\ge\eta^2\sin^2(2a)>0$ and measurement augmentation
is actionable.  A nonzero $F_Q-F_M$ alone therefore does not justify a
follow-up measurement; the current $F_M$ must be compared with the attainable
metric fibre under the declared device budget.

\paragraph{Operational decision for the example.}
The four parts of the calculation are used in the following order.
\begin{enumerate}[label=(\roman*),leftmargin=2.1em]
\item \emph{Distribution-level hardware audit.}  Retain the full count vector, form
$\widehat p_{e,\theta}$ and $T_\#\widehat p_{e,\theta}$ from the same shots, and
bound \cref{en:eq:running-operational-residual}.  Neither a Kraus noise model
nor an exponential-family law is fitted.
\item \emph{Compression loss and its missed direction.}  For $T_1$, both
\cref{en:eq:hidden-b-hellinger,en:eq:running-loss-separation} identify the
missing direction as $v=(0,1)$: the full law responds to $b$, while the quotient
does not.  Hence $T_1$ is refused.
\item \emph{Measurement loss.}  The equality $F_Q=F_{\rm cb}$ shows that no
interior $(a,b)$ information was lost by computational readout.  By contrast,
\cref{en:eq:running-phase-loss} shows a genuine $\phi$ direction already absent
from those counts; it requires the interference witnesses in
\cref{en:eq:running-phase-witness}, not a richer statistic of the same data.
Such a witness is called an executable augmentation only after it is shown to
belong to the admissible setting set and to increase $h_{\rm att}$.
\item \emph{Repair, conditional approval, or refusal.}  The augmentation
$T_2$ repairs the allowed histogram, but it is accepted for the observed
hardware comparisons only if a simultaneous upper confidence bound for
$R_{T_2}^H$ lies inside the declared tolerance.  If its off-support residual or
a missed witness direction remains, the output is the full histogram together
with a refusal, rather than the least inadequate low-dimensional record.
\end{enumerate}
The example compares full and compressed responses computed from the same data
without declaring a sufficient statistic in advance.  The finite Hellinger
residual, local Fisher loss, and response ideals detect the same $b$-direction
failure at different resolutions.  This agreement is established for the
present circuit and does not imply universal sufficiency for other circuit
families.

\section{Theoretical development: response completeness and auditable refusal}
\label{en:sec:theory}
Section~2 separated measurement loss from record-compression loss, identified
a missed direction, and returned refusal when neither the record nor an
admissible augmentation met its target.  We now establish these operations for
a general finite response family.  No exponential-family model is imposed on
the hardware law.  A structured exponential form is introduced later only as
a verifiable sufficient regime for exact preservation.

The theory addresses one question: when does a record preserve the declared
responses, and when must compression be refused? Response ideals
formulate preservation along analytic control arcs; a finite Rees test provides
an algebraic witness when that preservation fails; an admissible library may
repair the witnessed deficit or establish relative failure within that
library; and simultaneous finite-shot bounds implement the corresponding
decision for observed comparisons.  Conditional-score identities, data
processing, the QFI covariance identity, and Rees theory are used as established
tools. Here they are applied to the same completion-or-refusal decision.

The conclusions have three scopes. A finite observed
comparison supports approval, refusal, or deferral only at its declared
tolerance.  A differentiable local model additionally supports Fisher or score
statements in specified tangent directions.  Equality of the relevant
response-ideal closures establishes preservation of first response order along
every admissible analytic arc. The hardware experiments
in Section~4 test the first level and selected directions of the second, not
the all-arc conclusion of the third.

The calculation proceeds in the following order. The results below state the
conditions required at each step and the conclusion that follows.

\paragraph{Computation protocol: from counts to a supported decision.}
\begin{enumerate}[label=\textbf{\arabic*.},leftmargin=2.5em]
\item \emph{Freeze the comparison.}  Specify the epoch, controls, finite outcome
set, candidate record or fixed Markov kernel, comparison family, tolerances,
and familywise error level before inspecting the selected contrasts.
\item \emph{Retain the full response.}  Store the full count vector at every
control and form each compressed histogram from those same counts.
\item \emph{Audit finite and local loss.}  Compute the full-minus-compressed
Hellinger residual.  When a differentiable model or local fit is justified,
also compute $F_Y-F_T$; a positive eigenvector gives a missed local direction.
\item \emph{Separate the stage and actionability of loss.}  Compare $F_Q-F_Y$
only when the state family and fixed measurement are available.  A nonzero
$F_Y-F_T$ calls for a richer record.  A nonzero $F_Q-F_Y$ calls for a new
measurement only if an admissible setting increases the directional support
$h_{\rm att}$; otherwise report the direction as inaccessible under the current
device budget.
\item \emph{Make separate finite-sample decisions.}  Optimize total distance
$D$ and hidden residual $R$ over one simultaneous confidence region.  Use $R$
to approve, reject, or defer the record and $D$ to pass, fail, or defer the task.
Form a combined positive claim only when both upper bounds meet their distinct
budgets; a task failure alone is not a record failure.
\item \emph{Audit an all-direction claim only when needed.}  Form the centered
ideals $\mathfrak I_\rho\supseteq\mathfrak I_P\supseteq\mathfrak I_C$ and
compare the finitely many Rees valuations of $\mathfrak I_C$.  Equality
certifies first-response order relative to the declared analytic germ; a
strict inequality is algebraic evidence for refusal.
\item \emph{Lift, test, and report the witness.}  On the normalized blow-up,
test real accessibility and admissibility, descend a transverse arc, and run
that control path.  If this lifting cannot be certified, report
``algebraic-only'' rather than claiming an executable direction.
\end{enumerate}
Steps 1--5 are the default hardware procedure. Steps 6--7 are required only
for the stronger all-direction completeness claim; they are not prerequisites
for rejecting a record on an observed comparison.

\subsection{Measurement loss and compression loss}
Let $\theta\mapsto\rho_\theta$ be a real-analytic state family, let a fixed POVM
produce $p_\theta\in\Delta(\mathcal Y)$, and let
$T:\mathcal Y\to\mathcal Z$ be a deterministic record.  At a positive regular
point write $S_Y=\nabla_\theta\log p_\theta(Y)$.  The score of the pushforward
is $S_T(T)=\E[S_Y\mid T]$, and therefore
\begin{proposition}[Exact local compression loss]\label{en:prop:score}
\begin{equation}\label{en:eq:compact-score-loss}
 F_Y-F_T=\E\!\left[\Cov\{S_Y(Y)\mid T(Y)\}\right]\succeq0.
\end{equation}
Equality holds in a direction $v$ exactly when $v^\top S_Y$ is constant on
every $T$-fiber of positive probability.  Thus an eigenvector of
$F_Y-F_T$ with positive eigenvalue is a locally missed control direction.
\end{proposition}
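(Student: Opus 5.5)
The plan is to establish the score identity $S_T(t)=\E[S_Y\mid T=t]$ quoted before the statement, and then to apply the law of total covariance. At a positive regular point each fiber has mass $p^T_\theta(t)=\sum_{y\in T^{-1}(t)}p_\theta(y)>0$. Differentiating the finite sum gives
\[
\nabla_\theta\log p^T_\theta(t)=\frac{\sum_{y\in T^{-1}(t)}p_\theta(y)\,\nabla_\theta\log p_\theta(y)}{p^T_\theta(t)}=\E[S_Y\mid T=t],
\]
since $p_\theta(y)/p^T_\theta(t)$ is exactly the conditional law of $Y$ given $T=t$. Both scores have mean zero, because $\sum_y\nabla_\theta p_\theta(y)=\nabla_\theta 1=0$. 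Hence
\[
F_Y=\Cov(S_Y),\qquad F_T=\Cov(S_T(T))=\Cov(\E[S_Y\mid T]).
\]

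The law of total covariance on the finite space,
\[
\Cov(S_Y)=\E[\Cov(S_Y\mid T)]+\Cov(\E[S_Y\mid T]),
\]
then gives the displayed identity at once. Each conditional covariance is positive semidefinite and is averaged with nonnegative weights, so the difference is $\succeq0$. This also recovers the first inequality of Proposition~\ref{en:prop:metric-descent} in its matrix form.

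For the equality clause, fix $v$ and compute
\[
v^\top(F_Y-F_T)v=\sum_t p^T_\theta(t)\,\operatorname{Var}\bigl(v^\top S_Y\mid T=t\bigr).
\]
This is a sum of nonnegative terms, and every fiber weight is positive. It therefore vanishes exactly when each conditional variance vanishes. Under the conditional law, which charges every point of the fiber because all $p_\theta(y)>0$, a vanishing variance means that $v^\top S_Y$ is constant on the fiber. I will phrase this as ``fibers of positive probability'' so that the statement also covers a fixed-support limit.

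For the last sentence, take an eigenvector $v$ with eigenvalue $\lambda>0$. Then $v^\top F_Yv-v^\top F_Tv=\lambda\|v\|^2>0$, so the directional Fisher information strictly drops under $T$ along $v$. Equivalently, $v^\top S_Y$ varies within some fiber, so outcomes that $T$ merges respond differently to the control in direction $v$. That is exactly the sense in which $v$ is a ``locally missed control direction.''

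The only step needing care is the regularity assumption. Positivity and differentiability of $p_\theta$ on the fiber are what justify exchanging the derivative with the fiber sum and forming the conditional law. I expect no real obstacle beyond stating these assumptions explicitly.
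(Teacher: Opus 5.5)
Your proposal is correct and follows essentially the same route as the paper's proof: you derive $S_T(t)=\E[S_Y\mid T=t]$ by differentiating the finite fiber sum, apply the law of total covariance to get $F_Y-F_T=\E[\Cov(S_Y\mid T)]\succeq0$, and obtain the directional equality condition from the vanishing of the fiberwise conditional variances of $v^\top S_Y$. Your explicit checks that the scores have mean zero and that every fiber weight is positive are details the paper's proof leaves implicit.
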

This is the standard conditional-score identity.  Here it separates loss
created by recording after readout from loss already created by the POVM
\cite{AyJostLeSchwachhofer2015}.  For a pure-state amplitude--phase model,
\begin{equation}\label{en:eq:compact-phase-gap}
 \begin{aligned}
 |\psi_\theta\rangle
   &=\sum_y\sqrt{p_\theta(y)}e^{i\phi_\theta(y)}|y\rangle,\\
 F_Q-F_Y&=4\Cov_{p_\theta}(d\phi,d\phi)\succeq0.
 \end{aligned}
\end{equation}
Hence $F_Q-F_Y$ and $F_Y-F_T$ diagnose different stages.  Refining a classical
record can reduce the second gap but cannot in general repair the first.

The deterministic record is not essential to this local layer.  Let
$K(z\mid y)$ be a fixed Markov kernel, let $Z\mid Y=y\sim K(\cdot\mid y)$, and
write $Kp_\theta$ for the observed law of $Z$.
\begin{proposition}[Fixed randomized postprocessing]\label{en:prop:markov-kernel}
At every positive regular point,
\begin{equation}\label{en:eq:markov-score}
 \begin{aligned}
 S_Z(Z)&=\E[S_Y(Y)\mid Z],\\
 F_Y-F_{Kp}&=\E\!\left[\Cov\{S_Y(Y)\mid Z\}\right]\succeq0.
 \end{aligned}
\end{equation}
Moreover $H^2(Kp,Kq)\le H^2(p,q)$, and the response-ideal inclusion used below
remains valid because $p\mapsto Kp$ is linear.  Thus the distribution-level
local, finite-distance, and response-ideal audits extend from deterministic
statistics to fixed randomized readout, binning, or classical communication.  If $K$
depends on the control, its derivative contributes an additional response and
must be included in the full measured model; it cannot be treated as harmless
postprocessing.
\end{proposition}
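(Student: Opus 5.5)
The plan is to work with the joint law of $(Y,Z)$ and reduce everything to the deterministic case of Proposition~\ref{en:prop:score}. Write $P_\theta(y,z)=p_\theta(y)K(z\mid y)$ for the joint law. Because $K$ does not depend on $\theta$, the joint score is $\nabla_\theta\log P_\theta(y,z)=S_Y(y)$ wherever $K(z\mid y)>0$. The law of $Z$ is then the pushforward of $P_\theta$ under the deterministic coordinate projection $(y,z)\mapsto z$. Positivity of $Kp_\theta(z)$ at a regular point lets us differentiate
\[
Kp_\theta(z)=\sum_y p_\theta(y)K(z\mid y)
\]
termwise and divide by it. This gives
\[
S_Z(z)=\sum_y \frac{p_\theta(y)K(z\mid y)}{Kp_\theta(z)}\,S_Y(y)=\E[S_Y(Y)\mid Z=z],
\]
which is the first identity.

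For the Fisher identity, first note that the joint model has Fisher matrix $\E[S_YS_Y^\top]=F_Y$, since its score is $S_Y$ and $Y$ has marginal law $p_\theta$. Now apply the law of total covariance to $S_Y(Y)$ conditioned on $Z$. This gives
\[
\E[S_YS_Y^\top]=\E\bigl[\E[S_Y\mid Z]\,\E[S_Y\mid Z]^\top\bigr]+\E\bigl[\Cov\{S_Y\mid Z\}\bigr].
\]
By the first identity the first term on the right is $F_{Kp}$, and rearranging gives the displayed identity. Positive semidefiniteness holds because a conditional covariance is PSD. Equivalently, one can apply Proposition~\ref{en:prop:score} verbatim to the projection record on the joint outcome space.

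For the Hellinger claim, write the Bhattacharyya coefficient of $Kp,Kq$ as $\sum_z\sqrt{(\sum_y p(y)K(z\mid y))(\sum_y q(y)K(z\mid y))}$. Cauchy--Schwarz in $y$ with weights $K(z\mid y)$ bounds each summand below by $\sum_y K(z\mid y)\sqrt{p(y)q(y)}$. Summing over $z$ and using $\sum_zK(z\mid y)=1$ gives $BC(Kp,Kq)\ge BC(p,q)$, hence $H^2(Kp,Kq)\le H^2(p,q)$.

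For the ideal inclusion, use that each coordinate of $Kp_\theta-Kp_{\theta_0}$ is a fixed real linear combination $\sum_yK(z\mid y)\,(p_\theta(y)-p_{\theta_0}(y))$ of the generators of $\mathfrak I[p]$. This gives $\mathfrak I[Kp]\subseteq\mathfrak I[p]$, which is the inclusion used below.

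For the final remark, suppose $K$ depends on $\theta$. Then the joint score acquires the extra term $\nabla_\theta\log K_\theta(z\mid y)$, and the conditional-expectation formula for $S_Z$ fails. This shows the term must be modeled as part of the measured family rather than treated as harmless postprocessing.

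The only delicate step is the regularity needed for the first identity. I expect the main obstacle to be justifying the termwise differentiation and conditioning at points where some $K(z\mid y)p_\theta(y)$ vanish. I would handle this by restricting all sums to the positive-probability support, which is locally constant at a positive regular point.
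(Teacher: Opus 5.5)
Your proposal is correct and matches the paper's proof step for step. The joint score is $S_Y$ because $K$ is parameter independent; conditioning on $Z$ and the law of total covariance give the two identities; Hellinger contraction is data processing; and the ideal inclusion follows coordinatewise from $(Kp)_z=\sum_y K(z\mid y)p_y$. Your explicit Cauchy--Schwarz bound on the Bhattacharyya coefficient simply writes out the data-processing step that the paper cites.
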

\begin{proof}
The joint score of $(Y,Z)$ is $S_Y(Y)$ because $K$ is parameter independent.
Conditioning it on $Z$ gives the first identity, and total covariance gives the
second.  Hellinger contraction is data processing for Markov kernels.  The last
claim follows coordinatewise from $(Kp)_z=\sum_yK(z\mid y)p_y$.
\end{proof}

Expanding the pure-state QFI gives the amplitude term $F_Y$ and the displayed
phase covariance; the real cross terms cancel.  In a specified tangent
direction, the phase term vanishes exactly when the infinitesimal phase is
constant on the support and is therefore only global.  This identifies the
information boundary of one fixed computational-basis measurement.

For all-direction comparison, fix a control germ $(\Theta,\theta_0)$ with
local analytic ring $R$.  Let $\mathfrak I_\rho$, $\mathfrak I_P$, and
$\mathfrak I_C$ be generated by the centered coordinates of the state, the
full measured histogram, and an augmented certificate $C=(T_\#p,W)$,
respectively.  Because the POVM is fixed and $C$ is analytic in the histogram,
\begin{equation}\label{en:eq:compact-ideal-chain}
 \mathfrak I_C\subseteq\mathfrak I_P\subseteq\mathfrak I_\rho.
\end{equation}
For an analytic arc $\gamma$, let
$\nu_I(\gamma)=\operatorname{ord}_t\gamma^*I$.

The next equivalence is an application, not an extension, of integral-closure
theory.  For one inclusion it is the classical analytic-arc criterion
\cite{LejeuneTeissier2008,HunekeSwanson2006}.  Its contribution here is the
choice of three response ideals induced by a fixed quantum measurement and a
record, and the resulting two adjacent obstructions with different operational
repairs.

\begin{theorem}[Two-level response completeness]
\label{en:thm:two-level-completeness}\label{en:thm:integral-completeness}
Assume that the chosen real-analytic germ and its admitted arc class satisfy
the analytic-arc criterion for integral dependence.
\begin{enumerate}[label=(\alph*),leftmargin=2.2em]
\item $\overline{\mathfrak I_C}=\overline{\mathfrak I_P}$ if and only if
$\nu_C(\gamma)=\nu_P(\gamma)$ for every nonconstant analytic control arc.
This is completeness relative to the recorded measurement response.
\item $\overline{\mathfrak I_P}=\overline{\mathfrak I_\rho}$ if and only if
the fixed measurement preserves the first nonzero state-response order along
every such arc.
\item $\overline{\mathfrak I_C}=\overline{\mathfrak I_\rho}$ holds if and
only if both adjacent losses are absent.  Classical postprocessing cannot
repair a direction already annihilated by the measurement.
\end{enumerate}
\end{theorem}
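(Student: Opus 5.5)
The plan is to reduce all three parts to one abstract lemma and then apply it to the two adjacent inclusions in \cref{en:eq:compact-ideal-chain}. The lemma is the analytic-arc criterion, which the hypothesis lets us assume: for ideals $I\subseteq J$ in $R$, $J\subseteq\overline I$ if and only if $\nu_I(\gamma)\le\nu_J(\gamma)$ for every nonconstant analytic arc $\gamma$ in the admitted class \cite{LejeuneTeissier2008,HunekeSwanson2006}. Monotonicity runs the other way automatically: $I\subseteq J$ gives $\gamma^*I\subseteq\gamma^*J$ in $\R\{t\}$, hence $\nu_I(\gamma)\ge\nu_J(\gamma)$. Combining the two, for nested ideals we get $\overline I=\overline J$ if and only if $\nu_I(\gamma)=\nu_J(\gamma)$ for all arcs. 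The step $J\subseteq\overline I\Rightarrow\overline J\subseteq\overline I$ uses idempotence of integral closure, and $I\subseteq J$ gives the reverse inclusion of closures.

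For (a), apply this lemma to $I=\mathfrak I_C\subseteq J=\mathfrak I_P$. The only thing to check is the inclusion $\mathfrak I_C\subseteq\mathfrak I_P$. Each centered coordinate of $C=(T_\#p,W)$ is an analytic function of the histogram vanishing at $p_{\theta_0}$. Its components are either linear pushforward sums or the analytic certificate $w$. A Taylor/Hadamard expansion in the centered histogram coordinates writes each such coordinate as an $R$-linear combination of the generators $p_y(\theta)-p_y(\theta_0)$; this is the argument behind \cref{en:eq:ideal-chain-setting}. For (b), apply the lemma to $\mathfrak I_P\subseteq\mathfrak I_\rho$. That inclusion holds because, with $\mathcal E_e$ and $\mathcal M_e$ fixed, $p_y=\Tr(M_{e,y}\mathcal E_e(\sigma_\theta))$ is a linear function of the real state coordinates, so its centered version is a real-linear combination of the centered state coordinates. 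We then interpret $\nu_\rho(\gamma)$ as the first nonzero order of $\sigma_{\gamma(t)}-\sigma_{\theta_0}$, i.e.\ the first nonzero state-response order, and $\nu_P(\gamma)$ as the corresponding order of the measured histogram. Note that $\nu_I(\gamma)$ is the minimal $t$-order among the pulled-back generators, independent of the generating set, so the choice of real coordinates for $\sigma$ is immaterial.

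For (c), chain the two equivalences. The forward direction: $\overline{\mathfrak I_C}=\overline{\mathfrak I_\rho}$ together with $\mathfrak I_C\subseteq\mathfrak I_P\subseteq\mathfrak I_\rho$ squeezes $\overline{\mathfrak I_P}$ between two equal closures, giving both adjacent equalities. The converse is transitivity. The final sentence is the contrapositive. If some arc has $\nu_P(\gamma)>\nu_\rho(\gamma)$, then for every certificate built from $p$ we have $\nu_C(\gamma)\ge\nu_P(\gamma)>\nu_\rho(\gamma)$, by monotonicity, so no choice of $T$ or $W$ restores equality with $\overline{\mathfrak I_\rho}$. The same holds for fixed kernels $K$, by linearity as in \cref{en:prop:markov-kernel}.

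The main obstacle is the real-analytic setting. The Lejeune--Teissier valuative/arc criterion is cleanest over $\C$. Over $\R\{\theta\}$, real arcs may miss components of the normalized blow-up that have no real points, so the criterion can fail. This is exactly why the statement assumes the germ and its arc class satisfy the criterion. I would isolate this as the single imported hypothesis. The other option is to complexify the germ, taking complex analytic arcs in the admitted class, and verify that the centered real generators define the same ideals after tensoring. The rest of the argument is formal.
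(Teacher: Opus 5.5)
Your proposal is correct and is essentially the paper's argument. The chain $\mathfrak I_C\subseteq\mathfrak I_P\subseteq\mathfrak I_\rho$ (Hadamard lemma for the analytic certificate, linearity of the fixed channel and POVM) reverses arc orders, the assumed arc criterion turns each adjacent inclusion into ``equal closures iff equal orders on every arc,'' and transitivity of integral closure (your squeeze) gives (c) and the no-repair statement. Your caveat about real arcs is the reason the paper assumes the arc criterion instead of proving it: for example, $(x^2+y^2)\subseteq(x^2,y^2)$ in $\R\{x,y\}$ have equal orders on every real arc but different integral closures.
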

\begin{proof}
Ideal inclusion in \cref{en:eq:compact-ideal-chain} reverses arc order.  The
analytic arc criterion identifies mutual integral dependence with equality of
those orders on every arc; apply it to the two adjacent inclusions and use
transitivity of integral closure~\cite{LejeuneTeissier2008,HunekeSwanson2006}.
\end{proof}

The state ideal is independent of the chosen operator coordinates because an
operator-basis change is an
invertible real-linear transformation of its generators.  The histogram ideal
uses any nonredundant probability coordinates, and an analytic local bijection
of certificate coordinates leaves $\mathfrak I_C$ unchanged.  The theorem
therefore compares the generated response algebras rather than arbitrary
coordinate lists.

For a finite comparison, the corresponding observable defect is
\begin{equation}\label{en:eq:compact-hellinger-residual}
 R_T^H(\theta,\theta')=
 H^2(p_\theta,p_{\theta'})-
 H^2(T_\#p_\theta,T_\#p_{\theta'})\ge0.
\end{equation}
The inequality is data processing and also holds sample by sample when the
same empirical histograms are aggregated by $T$.  Sampling uncertainty enters
the confidence bounds for the unknown population residual; it is distinct from
the algebraic nonnegativity of this same-sample plug-in difference.

At a full-support baseline, if
$p_{\gamma(t)}=p_0+a t^\nu+O(t^{\nu+1})$ along an analytic arc, then
\begin{equation}\label{en:eq:compact-shot-order}
 H^2(p_{\gamma(t)},p_0)
 =\frac18\sum_y\frac{a_y^2}{p_0(y)}t^{2\nu}+o(t^{2\nu}).
\end{equation}
Simple-hypothesis discrimination consequently has local shot order
$\Theta(t^{-2\nu}\log(1/\delta))$.  Theorem~\ref{en:thm:two-level-completeness} preserves this exponent
when the certificate readout has a locally comparable Hellinger metric.  Equal
constants and rank-changing support boundaries require separate control.

\subsection{A verifiable structured regime and its boundary}
The general audit above is distribution based and imposes no parametric law.
We now identify a positive regime in
which the proposed record can be justified before hardware testing.  Let
$S$ be finite, $A:S\to\mathbb R^d$, and
\begin{equation}\label{en:eq:compact-positive-family}
 p_\vartheta(x)=h(x)\exp\{\langle\vartheta,A(x)\rangle-\psi(\vartheta)\},
 \qquad x\in S,
\end{equation}
with $h(x)>0$.
\begin{proposition}[Structured positive regime]\label{en:prop:cov}
After quotienting parameter directions on which $A$ is constant,
\begin{equation}\label{en:eq:compact-covariance}
 \nabla\log p_\vartheta(x)=A(x)-\E_\vartheta A,
 \qquad F_Y=\Cov_\vartheta(A).
\end{equation}
For this family, the record $A$ is minimal sufficient up to bijective
relabeling of equal $A$-values.  For the positive-real state
\(\lvert\psi_\vartheta\rangle=\sum_x\sqrt{p_\vartheta(x)}\lvert x\rangle\),
\begin{equation}\label{en:eq:compact-qfi-natural}
 F_Q(\vartheta)=F_Y(\vartheta)=\Cov_\vartheta(A).
\end{equation}
If instead the amplitudes are parameterized as
$\sqrt{h(x)}\exp\{\langle\theta,A(x)\rangle-\psi(2\theta)/2\}$, then
$\vartheta=2\theta$ and the matrix in the $\theta$ coordinates is
$4\Cov_{2\theta}(A)$.  Parameter-dependent relative phases add the
nonnegative gap in \cref{en:eq:compact-phase-gap}.
\end{proposition}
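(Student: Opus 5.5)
The plan is to treat the statement as four short computations on the finite exponential family \cref{en:eq:compact-positive-family}. First, fix the identifiability convention. The directions $v$ with $\langle v,A(x)\rangle$ constant on $S$ change $p_\vartheta$ only through the normalizer and leave the law unchanged, so I quotient $\R^d$ by the subspace $\{v:\langle v,A(\cdot)\rangle\ \text{constant on } S\}$. Equivalently, I restrict to the affine span of $A(S)$. On the quotient the family is minimal.

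Differentiating $\log p_\vartheta(x)=\log h(x)+\langle\vartheta,A(x)\rangle-\psi(\vartheta)$ gives $A(x)-\nabla\psi(\vartheta)$. The normalization $\psi(\vartheta)=\log\sum_x h(x)e^{\langle\vartheta,A(x)\rangle}$ yields $\nabla\psi=\E_\vartheta A$. Substituting this score into \cref{en:eq:fisher-setting} gives $F_Y=\E[(A-\E A)(A-\E A)^\top]=\Cov_\vartheta(A)$. Positivity of $h$ and of the exponential gives full support on $S$, so every point is a positive regular point.

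For sufficiency, I would use the factorization $p_\vartheta(x)=h(x)\cdot g_\vartheta(A(x))$, which gives sufficiency of $A$. Alternatively, the score is a function of $A$, so by Proposition~\ref{en:prop:score} $F_Y-F_A=0$ in every direction, and the same holds along every finite comparison. Minimality follows from the likelihood-ratio criterion. Since $\log[p_\vartheta(x)/p_\vartheta(x')]=\log[h(x)/h(x')]+\langle\vartheta,A(x)-A(x')\rangle$, the ratio is independent of $\vartheta$ exactly when $A(x)-A(x')$ is orthogonal to all parameter directions. After the quotient this means $A(x)=A(x')$. Any sufficient statistic must therefore be constant only on fibers no coarser than the level sets of $A$, and the minimal one is $A$ up to relabeling.

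For the QFI, the state $\lvert\psi_\vartheta\rangle=\sum_x\sqrt{p_\vartheta(x)}\lvert x\rangle$ is pure with real positive amplitudes. I apply \cref{en:eq:compact-phase-gap} with $\phi\equiv0$, so the phase covariance vanishes and $F_Q=F_Y$. A direct check also works: the pure-state formula $4[\langle\partial\psi|\partial\psi\rangle-|\langle\psi|\partial\psi\rangle|^2]$ has its second term zero by normalization with real amplitudes, while the first term equals $\sum_x(\partial\sqrt p)^2\cdot4=F_Y$.

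For the reparameterization, I write $\sqrt{p}$ with exponent $\langle\theta,A\rangle-\psi(2\theta)/2$. Squaring gives $p_{2\theta}$, so $\vartheta=2\theta$ with Jacobian $2I$. The Fisher and QFI metrics transform as pullbacks, giving $4\Cov_{2\theta}(A)$. Finally, if the amplitudes carry phases $\phi_\theta(x)$, the probabilities are unchanged, so $F_Y$ is unchanged. The general pure-state decomposition \cref{en:eq:compact-phase-gap} then adds the nonnegative term $4\Cov(d\phi,d\phi)$.

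The only delicate step is the minimal-sufficiency claim. The quotient must be phrased so that it removes exactly the directions with constant $A$-pairing, which makes "equal likelihood-ratio for all $\vartheta$" equivalent to "equal $A$". I would handle this by working with the affine hull of $A(S)$ from the start.
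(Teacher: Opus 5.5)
Your proposal is correct and follows essentially the paper's route. The paper's argument is your ``direct check'': it computes $\partial_i\lvert\psi_\vartheta\rangle=\tfrac12\sum_x\sqrt{p_\vartheta(x)}\,(A_i(x)-\E_\vartheta A_i)\lvert x\rangle$, notes $\langle\psi_\vartheta\mid\partial_i\psi_\vartheta\rangle=0$, and inserts this into the pure-state QFI formula, while the score/covariance identity, the factorization and likelihood-ratio arguments for minimal sufficiency, and the Jacobian $2I$ for $\vartheta=2\theta$ are standard facts the paper leaves implicit and you supply correctly.
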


Indeed,
$\partial_i\lvert\psi_\vartheta\rangle
=\frac12\sum_x\sqrt{p_\vartheta(x)}
(A_i(x)-\E_\vartheta A_i)\lvert x\rangle$ and
$\langle\psi_\vartheta\mid\partial_i\psi_\vartheta\rangle=0$; the pure-state
QFI formula gives \cref{en:eq:compact-qfi-natural}.  This explicit convention
prevents the factor four associated with an amplitude parameter from being
attributed to the natural probability parameter.

The scope of this proposition is directly checkable.  For positive laws, $A$ is sufficient on a
connected open parameter set exactly when each likelihood ratio within an
$A$-fiber is parameter independent.  In a monomial amplitude chart this is
equivalent to equality of the relevant exponent differences.  A circuit being
number preserving, blockade respecting, or parameterized is not by itself a
certificate; the fiberwise ratio or score condition must still be verified.

Let $\mathcal Y=S\sqcup S^c$ and $L=\mathbf1\{Y\notin S\}$.  Define
$T_I(y)=(0,A(y))$ for $y\in S$ and $T_I(y)=(1,\ast)$ for $y\in S^c$, where
$\ast$ is one undifferentiated symbol.  Write
\begin{equation}\label{en:eq:compact-support-model}
 r_{\eta}(y)=
 \begin{cases}
 (1-\ell_\eta)p_{\vartheta_\eta}(y),&y\in S,\\
 \ell_\eta s_\eta(y),&y\in S^c.
 \end{cases}
\end{equation}
\begin{proposition}[Support--moment quotient and exact validity]
\label{en:thm:quotient}
If $s_\eta$ is independent of $\eta$, then $T_I$ is sufficient for
\cref{en:eq:compact-support-model}.  With
$u=2\arcsin\sqrt\ell$, $\mu=\E_\vartheta A$, and
$C_\vartheta=\Cov_\vartheta(A)$, its Fisher metric is
\begin{equation}\label{en:eq:compact-quotient-metric}
 g_I=du^2+\cos^2(u/2)d\mu^\top C_\vartheta^+d\mu.
\end{equation}
For an arbitrary smooth hardware response, exact local validity instead holds
if and only if the full score is constant on every $T_I$-fiber.
\end{proposition}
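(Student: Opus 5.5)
The proof has three parts: sufficiency, the metric formula, and the general validity criterion. Each reduces to a factorization or to the conditional-score identity of Proposition~\ref{en:prop:score}.

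\emph{Sufficiency.} We use the Fisher--Neyman factorization. For $y\in S$, write $p_\vartheta(y)=h(y)\exp\{\langle\vartheta,A(y)\rangle-\psi(\vartheta)\}$. Then
\[
 r_\eta(y)=h(y)\cdot(1-\ell_\eta)\exp\{\langle\vartheta_\eta,A(y)\rangle-\psi(\vartheta_\eta)\},
\]
and the second factor depends on $y$ only through $T_I(y)=(0,A(y))$. For $y\in S^c$, we have $r_\eta(y)=s(y)\cdot\ell_\eta$, where $s$ is $\eta$-free by hypothesis and $\ell_\eta$ is a function of $T_I(y)=(1,\ast)$ alone. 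Hence $r_\eta(y)=k(y)\,G_\eta(T_I(y))$, so $T_I$ is sufficient. Equivalently, every conditional law of $Y$ given a $T_I$-fiber is parameter independent.

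\emph{Metric.} Sufficiency together with Proposition~\ref{en:prop:score} gives $g_I=g_{M}$ on the full model, with zero within-fiber covariance. We therefore compute the Fisher metric of the pushforward law directly. Write
\[
 T_{I\#}r_\eta=\bigl(1-\ell,\ \ell\bigr)\ \text{on the flag, with}\ A_\#p_\vartheta\ \text{conditional on}\ L=0.
\]
The Fisher metric of such a hierarchical law splits as
\[
 g_{\rm Bern}(\ell)+(1-\ell)\,g_{A_\#p}(\vartheta).
\]
The cross terms vanish because the conditional score has mean zero, and the $L=1$ branch carries no further parameter. For the Bernoulli term, $d\ell^2/(\ell(1-\ell))$ becomes $du^2$ under $\ell=\sin^2(u/2)$, because $d\ell=\sin(u/2)\cos(u/2)\,du$. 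The weight then becomes $1-\ell=\cos^2(u/2)$.

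For the moment term, Proposition~\ref{en:prop:cov} gives $F=C_\vartheta$ in natural coordinates. The mean map satisfies $d\mu=C_\vartheta\,d\vartheta$. After quotienting the directions on which $A$ is constant, $C_\vartheta$ is invertible on the relevant subspace. Therefore
\[
 d\vartheta^\top C_\vartheta\,d\vartheta=d\mu^\top C_\vartheta^+\,d\mu,
\]
with the pseudoinverse handling the quotient. This yields the displayed formula for $g_I$.

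\emph{Exact validity for arbitrary smooth responses.} This is Proposition~\ref{en:prop:score} applied to the record $T_I$. The gap $F_Y-F_{T_I}=\E[\Cov(S_Y\mid T_I)]$ vanishes exactly when $S_Y$ is almost surely constant on each positive-probability $T_I$-fiber. Locally, this is also equivalent to parameter independence of the within-fiber likelihood ratios: on a connected neighborhood, constant scores on a fiber integrate to constant ratios, and the converse follows by differentiating.

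The main point needing care is regularity. The formula requires $0<\ell<1$ and positive laws, so that $u$ is a valid coordinate and the scores exist. At $\ell=0$ the pullback degenerates, and I would state the result only on the regular locus, following the paper's convention for rank-changing points.
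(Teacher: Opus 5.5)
Your proposal is correct and follows essentially the same route as the paper. The paper also uses Neyman--Fisher factorization with $k=h$ on $S$ and $k=s$ on $S^c$, splits the quotient metric orthogonally into $d\ell^2/(\ell(1-\ell))+(1-\ell)\,d\vartheta^\top C_\vartheta d\vartheta$ via the zero-mean conditional in-support score, converts it with $\ell=\sin^2(u/2)$ and $d\mu=C_\vartheta d\vartheta$, and obtains the last claim by applying Proposition~\ref{en:prop:score} to the $T_I$-fibers.
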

\begin{proof}
Under fixed $s$, the law factorizes through $(L,A)$, proving sufficiency.
The support Bernoulli score is orthogonal to the conditional in-support score;
$d\mu=C_\vartheta d\vartheta$ then gives the displayed metric.  The last
statement is Proposition~\ref{en:prop:score} applied to the fibers of $T_I$.
\end{proof}

For approximate validity define, on the identifiable tangent,
\begin{equation}\label{en:eq:compact-relative-defect}
 \Delta_I=F_Y-F_{T_I},\qquad
 \rho_I=\lambda_{\max}
 \bigl(F_Y^{+/2}\Delta_I F_Y^{+/2}\bigr).
\end{equation}
If $\rho_I\le\varepsilon<1$, then
$(1-\varepsilon)F_Y\preceq F_{T_I}\preceq F_Y$.  Every directional Fisher
information is retained by at least $1-\varepsilon$, and an associated scalar
Cram\'er--Rao lower bound grows by at most $(1-\varepsilon)^{-1}$.  This uses
the observed score defect rather than extrapolating fixed off-support shape
from the ideal model.

For a target $q$ supported on $S$ and
$r=(1-\ell)p+\ell s$, disjoint support gives the exact finite decomposition
\begin{proposition}[Support--moment Hellinger diagnostic]
\label{en:thm:diagnostic}
\begin{equation}\label{en:eq:compact-support-hellinger}
 H^2(q,r)=1-\sqrt{1-\ell}
 +\sqrt{1-\ell}\,H^2(q,p).
\end{equation}
The first term is a support-normal floor.  Any further aggregation of $p$ by
$A$ leaves the nonnegative within-support residual
$H^2(q,p)-H^2(A_\#q,A_\#p)$.
\end{proposition}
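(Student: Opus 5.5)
The proof is a direct computation of the Bhattacharyya coefficient. The one structural fact it uses is that $q$ and $s$ have disjoint supports: $q$ lives on $S$ and $s$ on $S^c$. Writing $BC(q,r)=\sum_y\sqrt{q(y)r(y)}$, every term with $y\in S^c$ vanishes because $q(y)=0$ there. On $S$ we have $r(y)=(1-\ell)p(y)$, since $s$ puts no mass on $S$. Hence
\[
 BC(q,r)=\sum_{y\in S}\sqrt{q(y)(1-\ell)p(y)}=\sqrt{1-\ell}\,BC(q,p)=\sqrt{1-\ell}\,\bigl(1-H^2(q,p)\bigr).
\]
Subtracting from $1$ gives
\[
 H^2(q,r)=1-\sqrt{1-\ell}+\sqrt{1-\ell}\,H^2(q,p),
\]
which is \cref{en:eq:compact-support-hellinger}. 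This is the same argument that produced \cref{en:eq:running-hellinger-decomposition}, with $u=1-\ell$.

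Two conventions must hold for $H^2(q,p)$ to be a genuine Hellinger distance. First, $p$ must be a probability law on $S$, i.e.\ the conditional allowed law in the sense of Definition~\ref{en:def:quotient-setting}. Second, if $\ell=1$, the second term is multiplied by zero, so the identity still holds for any choice of $p$.

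For the interpretive claim, the first term $1-\sqrt{1-\ell}$ depends only on the leakage $\ell$. It is therefore invariant under every record that keeps the flag $L$ and acts only within $S$. Since $H^2(q,p)\ge0$, this term is also a lower bound on $H^2(q,r)$, which justifies calling it a support-normal floor.

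For the residual, let $A$ act on laws supported on $S$. Hellinger data processing under the deterministic map $A$ (Definition~\ref{en:def:local-response-setting}) gives $H^2(A_\#q,A_\#p)\le H^2(q,p)$, so the residual is nonnegative. The same decomposition also holds after aggregation. Applying the record $T_I$ of Proposition~\ref{en:thm:quotient} to $r$ keeps the flag, and the pushforwards $A_\#q$ and $A_\#p$ again have disjoint support from the flagged symbol. So repeating the computation above gives
\[
 H^2(T_{I\#}q,T_{I\#}r)=1-\sqrt{1-\ell}+\sqrt{1-\ell}\,H^2(A_\#q,A_\#p).
\]
Subtracting this from the unaggregated identity shows that the total loss from aggregation is exactly $\sqrt{1-\ell}$ times the within-support residual.

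There is no real obstacle here. The only step needing care is the bookkeeping for supports: the flagged symbol must not coincide with any $A$-value, and the edge case $\ell=1$ must be handled as above.
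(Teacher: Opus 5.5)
Your proof is correct and follows the paper's argument: split the Hellinger affinity over $S$ and $S^c$, use $q(S^c)=0$ together with $r=(1-\ell)p$ on $S$, subtract from one, and invoke Hellinger data processing under $A$ for the nonnegative residual. Your extra identity is also correct and goes beyond the paper's proof: applying $T_I$ to $(q,r)$ loses exactly $\sqrt{1-\ell}$ times the within-support residual.
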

\begin{proof}
The Hellinger affinity splits over $S$ and $S^c$.  Since $q(S^c)=0$, only
$\sqrt{1-\ell}\sum_{y\in S}\sqrt{q(y)p(y)}$ remains.  Subtracting from one
gives the identity, and data processing gives the nonnegative residual.
\end{proof}

\begin{proposition}[Why exact structured validity is not generic]
\label{en:prop:no-go}
If a record $T$ has a fiber containing two positive-probability outcomes, then
the unrestricted simplex contains a tangent that transfers mass inside that
fiber.  Its pushforward tangent is zero while its full Fisher norm is positive.
Consequently no noninjective low-dimensional record is universally sufficient
for arbitrary hardware responses.  Control-dependent off-support shape or
within-$A$-fiber motion must be tested through the residual rather than assumed
away.
\end{proposition}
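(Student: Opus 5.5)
The plan is to build the tangent explicitly at an interior point of the simplex and then apply Proposition~\ref{en:prop:score}. Fix a full-measured law $p\in\Delta(\mathcal Y)$ and a fiber $T^{-1}(t)$ containing two outcomes $y_1\neq y_2$ with $p(y_1),p(y_2)>0$. Take the one-parameter path
\[
 p_s=p+s\,(\delta_{y_1}-\delta_{y_2}),\qquad |s|<\min\{p(y_1),p(y_2)\}.
\]
Every member is a probability law, so the direction $v=\delta_{y_1}-\delta_{y_2}$ is a tangent of the unrestricted simplex at $p$, and the path is analytic in $s$. For a hardware model whose response is otherwise unconstrained, the family $\theta\mapsto p_\theta$ can realize this path: it is simply one member of the admissible class of smooth responses.

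Next compute both sides. Since $y_1$ and $y_2$ lie in the same fiber, $(T_\#p_s)(t)=p(t)+s-s$ and every other fiber mass is unchanged. Hence $T_\#p_s=T_\#p$ for all small $s$, so the pushforward tangent vanishes and $F_T(v,v)=0$. On the full law, the Fisher norm is
\[
 F_Y(v,v)=\sum_y \frac{v_y^2}{p(y)}=\frac1{p(y_1)}+\frac1{p(y_2)}>0 .
\]
Equivalently, the score $v^\top S_Y$ takes the values $1/p(y_1)$ and $-1/p(y_2)$ on the same positive-probability fiber. It is therefore nonconstant there, and the equality criterion of Proposition~\ref{en:prop:score} fails in direction $v$. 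For a finite witness, the Hellinger version also holds: $H^2(p_s,p)>0$ while $H^2(T_\#p_s,T_\#p)=0$, so $R_T^H>0$. This follows from the strict inequality in \eqref{en:eq:compact-shot-order} with $\nu=1$.

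Finally conclude. Universal sufficiency would require $F_T=F_Y$ in every tangent direction of every admissible response, and the constructed $v$ violates this. The paper does not define ``low-dimensional'' precisely; the argument needs only that $T$ is noninjective on the positive-probability support, i.e.\ $|T(\mathcal Y)|<|\operatorname{supp}p|$. If some fiber contains only one positive outcome, the pigeonhole step selects a fiber with two. For the support--moment record $T_I$, the two outcomes may lie inside an $A$-fiber in $S$ or inside the merged symbol $\ast$ on $S^c$. The first case is within-$A$-fiber motion. The second is a control-dependent off-support shape $s_\eta$, which is exactly the hypothesis dropped from Proposition~\ref{en:thm:quotient}.

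The main obstacle is interpretive rather than computational: making precise that ``arbitrary hardware responses'' includes the transfer path. If responses must arise from the Born law in \eqref{en:eq:epoch-born-law}, I would realize the path physically. I would take $\mathcal C_\theta$ to be a $\theta$-dependent classical-mixing channel on the computational basis, or a readout POVM whose confusion between $y_1$ and $y_2$ varies with $\theta$. Either choice shows that the path lies in the model class.
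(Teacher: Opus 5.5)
Your proof is correct and is the paper's own argument: the tangent $c(\delta_{y_1}-\delta_{y_2})$ on two positive-probability outcomes of one fiber has zero pushforward and Fisher norm $c^2/p(y_1)+c^2/p(y_2)>0$, and the score, $T_I$, and realizability remarks are consistent elaboration. Two minor points. First, $H^2(p_s,p)>0$ for $s\neq0$ follows directly from $p_s\neq p$, so you do not need the local Hellinger expansion, which assumes a full-support baseline. Second, of your two physical realizations only the $\theta$-dependent compiled channel fits the paper's model, because there the readout POVM is fixed within an epoch.
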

\begin{proof}
Choose $y_1,y_2$ in one fiber and a tangent with components $c,-c$ on this
pair and zero elsewhere.  Its pushforward is zero, whereas its Fisher norm is
$c^2/p(y_1)+c^2/p(y_2)>0$.
\end{proof}

\subsection{A finite algebraic witness and refusal rule}
The all-arc statement in Theorem~\ref{en:thm:two-level-completeness} has a finite algebraic audit.  Assume
$R$ is an analytically unramified Noetherian local domain and
$0\ne\mathfrak I_C\subseteq\mathfrak I_P$ are proper.  Let
$v_1,\ldots,v_s$ be the Rees valuations of $\mathfrak I_C$.
\begin{theorem}[Finite algebraic refusal witness]
\label{en:thm:finite-rees-audit}
\begin{equation}\label{en:eq:compact-rees-audit}
 \overline{\mathfrak I_C}=\overline{\mathfrak I_P}
 \quad\Longleftrightarrow\quad
 v_j(\mathfrak I_C)=v_j(\mathfrak I_P),\quad j=1,\ldots,s.
\end{equation}
If equality fails, a valuation with
$v_j(\mathfrak I_C)>v_j(\mathfrak I_P)$ is a finite certificate of a delayed
compressed response.  This certificate is algebraic.  It yields a physically
executable missed direction only if the corresponding exceptional divisor has
an accessible real point and a transverse real analytic arc descends into the
physical control germ.
\end{theorem}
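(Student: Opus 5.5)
The plan is to derive the displayed equivalence from the standard valuative description of integral closure through Rees valuations. For $R$ analytically unramified, Noetherian, local, and a domain, and for a nonzero proper ideal $I$, the Rees valuations $v_1,\ldots,v_s$ of $I$ are finitely many discrete valuations. They are determined by the components of the exceptional divisor on the normalized blow-up of $I$, and for every $n\ge1$ they satisfy
\[
 \overline{I^n}=\{x\in R:\ v_j(x)\ge n\,v_j(I),\ j=1,\ldots,s\}
\]
\cite{HunekeSwanson2006}. I will take this characterization, applied to $I=\mathfrak I_C$, as the basic tool.

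For the forward direction, suppose $\overline{\mathfrak I_C}=\overline{\mathfrak I_P}$. Every valuation $v$ that is nonnegative on $R$ satisfies $v(J)=v(\overline J)$, because each element of $\overline J$ satisfies an equation of integral dependence over $J$. Hence $v_j(\mathfrak I_C)=v_j(\mathfrak I_P)$ for all $j$. For the converse, the inclusion $\mathfrak I_C\subseteq\mathfrak I_P$ gives $\overline{\mathfrak I_C}\subseteq\overline{\mathfrak I_P}$, so only the reverse inclusion needs work. Take $x\in\mathfrak I_P$. Then $v_j(x)\ge v_j(\mathfrak I_P)=v_j(\mathfrak I_C)$ for every $j$, and the $n=1$ case of the characterization puts $x$ in $\overline{\mathfrak I_C}$. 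It follows that $\mathfrak I_P\subseteq\overline{\mathfrak I_C}$, and taking closures gives $\overline{\mathfrak I_P}\subseteq\overline{\mathfrak I_C}$.

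The witness statement is then immediate. If the closures differ, the converse direction fails, so some $j$ has $v_j(\mathfrak I_C)\ne v_j(\mathfrak I_P)$. Since $\mathfrak I_C\subseteq\mathfrak I_P$, this forces $v_j(\mathfrak I_C)>v_j(\mathfrak I_P)$. The data consist of one valuation and two integers, so the certificate is finite.

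The main obstacle is interpreting this certificate as a "delayed compressed response" and stating its physical limits. The plan is as follows. Represent $v_j$ by a divisor $E_j$ on the normalized blow-up. An arc $\tilde\gamma$ that meets $E_j$ transversally at a general point has pullback order along its image $\gamma$ equal to $v_j$, up to the fixed positive multiplicity. Combined with Theorem~\ref{en:thm:two-level-completeness}(a), this gives $\nu_C(\gamma)>\nu_P(\gamma)$, which is the delayed response. The difficulty is that this argument works over the complexification. A real transverse arc exists only when $E_j$ has a real point at which the divisor is smooth and the blow-up is accessible. Its image must also stay within the admissible physical control germ, and not merely within its Zariski closure. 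I will state these two conditions as hypotheses rather than prove them. Without them the conclusion is only algebraic. This is why the theorem reports an executable direction only conditionally, and it matches the ``algebraic-only'' branch of Step~7.
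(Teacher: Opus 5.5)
Your proposal is correct and follows the paper's own route. The valuative criterion $f\in\overline{\mathfrak I_C}\iff v_j(f)\ge v_j(\mathfrak I_C)$ for all $j$ gives both directions, the inclusion $\mathfrak I_C\subseteq\mathfrak I_P$ turns any failure into $v_j(\mathfrak I_C)>v_j(\mathfrak I_P)$, and a transverse arc to $E_j$ over the complexification realizes the delayed order, with real accessibility and admissibility imposed as extra hypotheses exactly as the paper does; one small point is that the appeal to Theorem~\ref{en:thm:two-level-completeness}(a) is unnecessary and would import that theorem's arc-criterion hypothesis, since at a general point of $E_j$ the transverse arc directly gives $\nu_C(\gamma)=v_j(\mathfrak I_C)>v_j(\mathfrak I_P)=\nu_P(\gamma)$.
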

\begin{proof}
This is the Rees valuation criterion for integral closure applied in both
directions.  A strict valuation inequality is the contrapositive witness; a
transverse arc over the complexification realizes its order after the
corresponding normalized blow-up~\cite{HunekeSwanson2006,DeJong1998}.  The
extra real-accessibility conditions are precisely what is needed to interpret
that algebraic arc as an executable control path.
\end{proof}

\begin{remark}[Algebraic refusal versus executable direction]
The valuation inequality alone is sufficient to refuse an all-direction
completeness claim.  It does not assert that a laboratory control path realizes
the divisor.  An executable witness additionally requires (i) a real point of
the divisor over the baseline, (ii) a real analytic transverse arc through that
point, and (iii) a descended arc contained in the admissible control set.  If
any condition is unavailable, the output is an algebraic refusal certificate,
not a claimed physical direction; the valuation may still guide a subsequent
measurement or control design.
\end{remark}

The distinction can be made constructive when the control germ and admissible
controls are given algebraically or semialgebraically.
\begin{proposition}[Sound witness-lifting procedure]
\label{en:prop:witness-lifting}
Assume that the response generators are polynomial or convergent-algebraic near
$\theta_0$, that a normalized blow-up of $\mathfrak I_C$ is available, and that
the admissible real control set $\Theta_{\rm adm}$ is semialgebraic.  For every
strict Rees witness $v_j(\mathfrak I_C)>v_j(\mathfrak I_P)$, perform:
\begin{enumerate}[label=(\roman*),leftmargin=2.2em]
\item identify the corresponding exceptional divisor $E_j$;
\item test for a smooth real point $x\in E_j$ above $\theta_0$ whose blow-down
has a germ in $\Theta_{\rm adm}$;
\item in a real chart with $E_j=\{z_1=0\}$, choose a transverse arc
$z_1=t$, $z_{r>1}=z_r(x)$, descend it, and verify admissibility and the two
orders directly.
\end{enumerate}
Whenever this procedure returns an arc $\gamma_j$, it is physically executable
within the declared control set and satisfies
$\nu_C(\gamma_j)>\nu_P(\gamma_j)$.  If no accessible real point is certified,
the correct output is \emph{algebraic-only} or \emph{unresolved}, not absence of
a physical witness.  Conversely, under the smooth real-accessibility and
transversality assumptions of Theorem~\ref{en:thm:finite-rees-audit}, the procedure finds a local
executable arc.  Real quantifier elimination makes the feasibility steps
decidable for polynomial data, although the worst-case cost may be large.
\end{proposition}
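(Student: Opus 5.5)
Three claims must be proved: (A) soundness, meaning any returned arc is executable and strict; (B) completeness, meaning under the smooth real-accessibility and transversality hypotheses some arc is returned; and (C) decidability of the feasibility steps for polynomial data. I will take the Rees criterion of Theorem~\ref{en:thm:finite-rees-audit} as given. I will also work on a normalized blow-up $\pi:X\to(\Theta,\theta_0)$ of $\mathfrak I_C$, where $\mathfrak I_C\mathcal O_X$ is invertible and each Rees valuation $v_j$ is the order of vanishing along an exceptional prime divisor $E_j$.

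\emph{Soundness.} Soundness should be almost tautological, because step (iii) explicitly verifies admissibility and both orders on the descended arc $\gamma_j=\pi\circ\tilde\gamma$. The procedure returns $\gamma_j$ only if $\gamma_j(t)\in\Theta_{\rm adm}$ for small $t>0$ and $\operatorname{ord}_t\gamma_j^*\mathfrak I_C>\operatorname{ord}_t\gamma_j^*\mathfrak I_P$, and these conditions are exactly executability and $\nu_C(\gamma_j)>\nu_P(\gamma_j)$. The one point to argue is that the verification itself is well defined. The generators are polynomial or convergent-algebraic, so the pullbacks along the algebraic arc $z_1=t$, $z_{r>1}=z_r(x)$ are convergent power series in $t$. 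Their orders can therefore be read off from finitely many coefficients, and the minimum over the finitely many generators gives the order of the ideal. The same reasoning shows that an uncertified point yields only the outputs ``algebraic-only'' or ``unresolved''. No negative claim is ever produced, since failing to certify a point does not prove that none exists.

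\emph{Completeness.} Suppose $x\in E_j$ is a smooth real point above $\theta_0$ and is accessible, meaning its blow-down germ lies in $\Theta_{\rm adm}$. Choose local real coordinates at $x$ with $E_j=\{z_1=0\}$, chosen so that $x$ lies on no other exceptional component; this is possible at a general point of $E_j$, and I will shrink to such a point if needed. Because $\mathfrak I_C\mathcal O_X$ is locally principal with generator $z_1^{v_j(\mathfrak I_C)}\cdot(\text{unit})$, the transverse arc $\tilde\gamma(t)=(t,z_{>1}(x))$ satisfies $\nu_C(\gamma_j)=v_j(\mathfrak I_C)$. For $\mathfrak I_P$, the arc meets $E_j$ transversally at a point chosen generic for $\mathfrak I_P\mathcal O_X$, so it avoids the strict transform of its base locus. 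Then $\operatorname{ord}_t\tilde\gamma^*(\mathfrak I_P\mathcal O_X)=v_j(\mathfrak I_P)$. The Rees witness inequality $v_j(\mathfrak I_C)>v_j(\mathfrak I_P)$ then gives the strict order inequality.

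I expect the genericity step to be the main obstacle. The given $x$ may lie on the base locus of $\mathfrak I_P\mathcal O_X$, and there the transverse order could exceed $v_j(\mathfrak I_P)$. My first attempt would use that smooth real points of $E_j$ form an open subset of its real locus, while the bad locus $E_j\cap(\text{support of }\mathfrak I_P\mathcal O_X/z_1^{v_j(\mathfrak I_P)})$ is a proper Zariski-closed subset of $E_j$. Perturbing $x$ within a small real neighbourhood therefore avoids the bad locus. Accessibility is preserved under this perturbation because $\Theta_{\rm adm}$ is semialgebraic and the accessibility hypothesis is read as an open condition near $x$. If openness cannot be assumed, I would strengthen the stated hypothesis to require accessibility at a point outside that proper closed set.

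\emph{Decidability.} The conditions ``there is a smooth real point of $E_j$ above $\theta_0$ with admissible blow-down germ'' and the order comparisons are first-order formulas over the reals in the polynomial data: the chart equations, $\Theta_{\rm adm}$, and finitely many Taylor coefficients. Tarski--Seidenberg quantifier elimination (for example cylindrical algebraic decomposition) decides them, with doubly exponential worst-case cost.
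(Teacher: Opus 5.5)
Your route is the paper's: soundness from the explicit verification in step (iii), the converse from real coordinates at the assumed accessible point, and decidability by reduction to polynomial sign and ideal-membership tests. For soundness the paper adds only that a transverse coordinate has order one on $E_j$ and that blow-down preserves arc orders.

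Your genericity concern is real, and the paper's one-line converse does not address it. The transverse arc realizes $v_j$ on the invertible ideal $\mathfrak I_C\mathcal O_X$. On $\mathfrak I_P\mathcal O_X$ it does so only when $x$ avoids the closed set $Z_j\subset E_j$ where the residual ideal $z_1^{-v_j(\mathfrak I_P)}\mathfrak I_P\mathcal O_X$ vanishes. At points of $Z_j$ the strict inequality can collapse to $\nu_C=\nu_P$.

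\emph{Moving $x$ off $Z_j$ can leave the fibre over $\theta_0$.} Your perturbation repair has a hole you did not flag. Moving $x$ inside $E_j$ keeps it above $\theta_0$ only when $\pi(E_j)=\{\theta_0\}$, as for $\mathfrak m$-primary $\mathfrak I_C$, and the paper's record ideals need not be of this kind. Take $\mathfrak I_{T_1}=(a^2)\subseteq\mathfrak I_P=(a^2,b^2)$ from Section~2. The only Rees valuation is $\operatorname{ord}_a$, with $2>0$. But $E=\{a=0\}$ meets $\pi^{-1}(\theta_0)$ only at the origin, which lies in $Z_j$ because $v_j(\mathfrak I_P)=0$ and $\mathfrak I_P$ vanishes at $\theta_0$. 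Every arc transverse to $E$ there has $\nu_{T_1}=\nu_P=2$, while the actual witness $(0,t)$ lies inside $E$.

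\emph{Openness of accessibility depends on how step (ii) is read.} It holds if step (ii) means that $\pi$ maps a neighbourhood germ of $x$ into $\Theta_{\rm adm}$. It fails under the arc-level reading of the Remark after Theorem~\ref{en:thm:finite-rees-audit}. Take $\mathfrak I_C=(a^2,b^2)\subseteq\mathfrak I_P=(a,b^2)$, whose Rees valuation is $\operatorname{ord}_{\mathfrak m}$ with $2>1$, and $\Theta_{\rm adm}=\{a=0\}$. The $b$-direction point of $E\cong\mathbb P^1$ is smooth and real, and its transverse arc descends to $(0,t)\in\Theta_{\rm adm}$. Yet every admissible arc has $\nu_C=\nu_P$.

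So the converse as literally stated is false, and your strengthened hypothesis is required rather than optional. Require an accessible smooth real point $x\in E_j\cap\pi^{-1}(\theta_0)$ that lies on no other exceptional component and outside $Z_j$. Under that hypothesis your local computation, $\mathfrak I_C\mathcal O_{X,x}=(z_1^{v_j(\mathfrak I_C)})$ and $\mathfrak I_P\mathcal O_{X,x}=(z_1^{v_j(\mathfrak I_P)})$, proves the converse directly, with no perturbation needed.
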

\begin{proof}
A transverse coordinate has order one on $E_j$, so pullback along the chart arc
realizes the divisorial valuation.  Blow-down preserves the orders of the
response ideals, and the explicit admissibility test places the descended germ
in $\Theta_{\rm adm}$.  The converse follows by choosing local real coordinates
at the assumed smooth accessible point.  Semialgebraic feasibility and order
comparison reduce to finite polynomial sign and ideal-membership tests.
\end{proof}

\begin{theorem}[Library-relative completion or certified failure]
\label{en:thm:accessible-completion}
Let $\mathfrak I_0\subseteq\mathfrak I_\star$ be the current and target
response ideals in an analytically unramified Noetherian local domain, and let
$\mathcal L=\{1,\ldots,L\}$ be a predeclared finite library of executable
augmentations.  Augmentation $\ell$ contributes an ideal
$\mathfrak J_\ell\subseteq\mathfrak I_\star$.  Starting from
$\mathfrak I^{(0)}=\mathfrak I_0$, repeat the following operation:
\begin{enumerate}[label=(\alph*),leftmargin=2.2em]
\item if $\overline{\mathfrak I^{(k)}}=\overline{\mathfrak I_\star}$, stop and
return the selected augmentations;
\item otherwise choose a strict Rees witness $v$ with
$v(\mathfrak I^{(k)})>v(\mathfrak I_\star)$; if an unused $\ell$ satisfies
$v(\mathfrak J_\ell)<v(\mathfrak I^{(k)})$, set
$\mathfrak I^{(k+1)}=\mathfrak I^{(k)}+\mathfrak J_\ell$;
\item if no such $\ell$ exists, stop and return $v$ as a certificate of
impossibility relative to $\mathcal L$.
\end{enumerate}
The procedure is sound and stops after at most $L$ additions.  Moreover,
\begin{equation}\label{en:eq:library-completion}
 \overline{\mathfrak I_0+\sum_{\ell\in\mathcal L}\mathfrak J_\ell}
 =\overline{\mathfrak I_\star}
\end{equation}
if and only if every possible refusal in step (c) is excluded; under
\eqref{en:eq:library-completion} every admissible sequence of strict-witness
improvements terminates with a complete certificate.

The local attainable-metric counterpart has the same relative scope.  Let $U$ be a target tangent
space modulo the null space of $G_Q$, let $G_0\succeq0$ be the current
attainable Fisher metric, and let $G_\ell\succeq0$ be the metric contributed by
library element $\ell$ per declared positive shot allocation.  Put
$K_0=U\cap\ker G_0$.  A Fisher-complete augmentation exists exactly when
\begin{equation}\label{en:eq:fisher-library-kernel}
 K_0\cap\bigcap_{\ell\in\mathcal L}\ker G_\ell=\{0\}.
\end{equation}
If the intersection contains $u\ne0$, then $u$ is a library-relative failure
certificate: no mixture of the declared settings detects that
QFI-positive direction.  Otherwise settings can be chosen successively so
that the current kernel dimension decreases strictly, and at most
$\dim K_0$ settings make the summed metric positive definite on $U$.
\end{theorem}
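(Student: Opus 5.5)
The plan is to treat the two branches of Theorem~\ref{en:thm:accessible-completion} separately. The algebraic branch will use only the Rees criterion of Theorem~\ref{en:thm:finite-rees-audit} and the fact that a valuation satisfies $v(I+J)=\min\{v(I),v(J)\}$. The Fisher branch is linear algebra on positive-semidefinite forms.

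\emph{Invariants and termination.} First I will check that every iterate stays inside the target. Since each $\mathfrak J_\ell\subseteq\mathfrak I_\star$, induction gives $\mathfrak I_0\subseteq\mathfrak I^{(k)}\subseteq\mathfrak I_\star$. So Theorem~\ref{en:thm:finite-rees-audit} applies at every stage with $\mathfrak I_C=\mathfrak I^{(k)}$ and $\mathfrak I_P=\mathfrak I_\star$; I assume $\mathfrak I_0\ne0$ and $\mathfrak I_\star$ proper, as in that theorem. If $\overline{\mathfrak I^{(k)}}\ne\overline{\mathfrak I_\star}$, that theorem supplies a Rees valuation $v$ of $\mathfrak I^{(k)}$ with $v(\mathfrak I^{(k)})>v(\mathfrak I_\star)$, so step (b) is always well defined. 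Each pass through (b) consumes an unused library element, so at most $L$ additions occur. After that, only (a) or (c) can fire, which gives termination.

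\emph{Soundness of the two outputs.} If the procedure stops in (a), the returned augmentations meet the target by definition. If it stops in (c) with witness $v$, then $v(\mathfrak J_\ell)\ge v(\mathfrak I^{(k)})$ for every unused $\ell$. The same inequality holds for every used $\ell$, because $\mathfrak J_\ell\subseteq\mathfrak I^{(k)}$. The min rule then gives
\[
v\Bigl(\mathfrak I_0+\sum_{\ell\in\mathcal L}\mathfrak J_\ell\Bigr)=v(\mathfrak I^{(k)})>v(\mathfrak I_\star).
\]
Valuations are constant on integral closures, so the full library sum is not integrally equivalent to $\mathfrak I_\star$. Hence $v$ certifies impossibility relative to $\mathcal L$.

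\emph{The equivalence.} Suppose \eqref{en:eq:library-completion} holds and the procedure reaches (b) with witness $v$. Then
\[
v(\mathfrak I_\star)=\min\bigl\{v(\mathfrak I^{(k)}),\,\min_\ell v(\mathfrak J_\ell)\bigr\}<v(\mathfrak I^{(k)}),
\]
so some $\ell$ has $v(\mathfrak J_\ell)<v(\mathfrak I^{(k)})$. By the remark above, such an $\ell$ cannot already be used, so refusal is excluded. Combined with the termination bound, every admissible run ends in (a). Conversely, suppose \eqref{en:eq:library-completion} fails. Then no iterate can satisfy (a), since its closure lies between $\mathfrak I^{(k)}$ and the full sum. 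Termination then forces a refusal. The main subtlety I expect here is keeping $v$ a Rees valuation of the \emph{current} ideal while comparing it to the full sum. The min identity handles this without needing $v$ to be a Rees valuation of the sum.

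\emph{Fisher branch.} For PSD forms and positive shot weights $\lambda_\ell>0$, I will use $\ker\bigl(G_0+\sum_\ell\lambda_\ell G_\ell\bigr)=\ker G_0\cap\bigcap_\ell\ker G_\ell$. This holds because $u^\top(\sum G)u=0$ forces each term to vanish. Restricting to $U$, the summed metric is positive definite on $U$ exactly when \eqref{en:eq:fisher-library-kernel} holds. Any nonzero $u$ in that intersection is annihilated by every mixture, which makes $u$ the failure certificate. For the greedy count, take a nonzero $u$ in the current kernel $K$. Condition \eqref{en:eq:fisher-library-kernel} gives some $\ell$ with $G_\ell u\ne0$. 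Then $K\cap\ker G_\ell\subsetneq K$, so the kernel dimension drops by at least one at each step, and at most $\dim K_0$ settings suffice.
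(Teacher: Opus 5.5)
Your proposal is correct and follows the paper's proof. The paper likewise uses the rule $v(I+J)=\min\{v(I),v(J)\}$, with valuations unchanged under integral closure, to get termination, soundness of the step (c) certificate, and the equivalence with \eqref{en:eq:library-completion}; its Fisher branch rests on the identity $\ker(A+B)=\ker A\cap\ker B$ for positive-semidefinite forms and the same greedy $\dim K_0$ bound. You are somewhat more explicit than the paper on three points: you invoke Theorem~\ref{en:thm:finite-rees-audit} to guarantee that step (b) always has a strict witness, you note that already-used elements cannot improve the witness because $\mathfrak J_\ell\subseteq\mathfrak I^{(k)}$, and you derive the reverse implication of the equivalence from termination, a step the paper leaves implicit.
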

\begin{proof}
For ideals, $v(I+J)=\min\{v(I),v(J)\}$.  Hence every accepted augmentation
strictly lowers the selected witness value and an already selected library
element can never be needed again.  If
\eqref{en:eq:library-completion} holds while the current ideal is incomplete,
then for any strict witness $v$,
\[
 v(\mathfrak I_\star)
 =v\!\left(\mathfrak I_0+\sum_\ell\mathfrak J_\ell\right)
 =\min\{v(\mathfrak I_0),v(\mathfrak J_1),\ldots,v(\mathfrak J_L)\},
\]
so some unused library ideal improves it.  Conversely, if step (c) occurs,
the same minimum remains strictly larger than $v(\mathfrak I_\star)$, proving
that the full library sum cannot have the target integral closure.  Finiteness
of $\mathcal L$ gives termination.

For positive-semidefinite forms,
$\ker(A+B)=\ker A\cap\ker B$.  Thus a nonzero vector in
\eqref{en:eq:fisher-library-kernel} survives every nonnegative shot mixture.
If the intersection is zero and the current kernel $K$ is nonzero, some
$G_\ell$ is nonzero on a vector of $K$, so
$\dim(K\cap\ker G_\ell)<\dim K$.  Iteration proves the bound and the claimed
positive definiteness on $U$.
\end{proof}

\begin{remark}[What is new and what is inherited]
The valuative criterion, finiteness of Rees valuations, and the
positive-semidefinite kernel identity are established ingredients.  The new
statement is their operational completion-or-refusal rule for a declared quantum
measurement--record library: it makes ``augment'' a terminating outcome, and
distinguishes failure of the current representation from impossibility under
the available hardware controls.  It does not claim that an arbitrary
laboratory library satisfies \eqref{en:eq:library-completion}; failure of that
condition is precisely the certified refusal branch.
\end{remark}

\begin{remark}[Assumption and output checklist]
The theorem requires (i) analytic response germs in an analytically unramified
Noetherian local domain, (ii) an explicitly fixed target ideal, (iii) a finite
predeclared library whose elements are physically executable and contribute
ideals inside that target, and (iv) computable strict witnesses for the chosen
representation. A Rees witness becomes a laboratory direction only after the
real-accessibility and admissibility tests stated above. Under these conditions
the theorem guarantees completion or failure \emph{relative to the
library}; it neither searches over all quantum measurements nor asserts that
an arbitrary empirical QPU family has an analytic response model. Users who
seek only declared finite comparisons stop after the Hellinger audit and do
not need these assumptions.
\end{remark}

\begin{example}[Normalized blow-up to an executable Qiskit control arc]
\label{en:ex:qiskit-normalized-blowup}
The running circuit also gives a complete, nonprincipal example of the
witness-lifting procedure.  Retain its full computational response, but test
the analytic certificate
\begin{equation}\label{en:eq:blowup-certificate}
 C_\star(a,b)=\bigl(p_{010}(a,b),p_{001}(a,b)^2\bigr).
\end{equation}
The square is deliberate: it models a nonlinear reported quantity and creates
a genuinely weighted audit.  It is estimable without asymptotic bias from
counts by replacing $p_{001}^2$ with
$N_{001}(N_{001}-1)/[n(n-1)]$ when $n\ge2$.
At $(0,0)$, units and higher-order terms do not change the ideals, so
\begin{equation}\label{en:eq:blowup-ideal-pair}
 \mathfrak I_{C_\star}=(a^2,b^4),\qquad
 \mathfrak I_P=(a^2,b^2).
\end{equation}
The compact Newton facet of $(a^2,b^4)$ is $2i+j=4$; its primitive inward
normal gives the Rees valuation
\begin{equation}\label{en:eq:blowup-valuation}
 v(a)=2,\quad v(b)=1,\quad
 v(\mathfrak I_{C_\star})=4>2=v(\mathfrak I_P).
\end{equation}
In the corresponding real weighted chart, the normalized blow-down has the
form
\begin{equation}\label{en:eq:weighted-blowdown}
 a=s^2u,\quad b=sv,\quad E=\{s=0\}.
\end{equation}
The point $(s,u,v)=(0,1,1)$ is real and smooth.  The transverse chart arc
$s=t$, $u=v=1$ descends to the directly programmable control curve
\begin{equation}\label{en:eq:executable-rees-arc}
 \gamma(t)=(a(t),b(t))=(t^2,t).
\end{equation}
Along it,
\begin{align}\label{en:eq:executable-orders}
 p_{010}(\gamma(t))&=\sin^2(t^2)=t^4+O(t^8),\nonumber\\
 p_{001}(\gamma(t))&=\cos^2(t^2)\sin^2t=t^2+O(t^4),\\
 p_{001}(\gamma(t))^2&=t^4+O(t^6),\nonumber
\end{align}
and therefore $\nu_{C_\star}(\gamma)=4>2=\nu_P(\gamma)$ exactly as the
exceptional divisor predicts.  A Qiskit implementation with two
number-preserving \texttt{XXPlusYYGate} rotations verifies the dyadic orders:
at $t=0.025$, the successive-ratio estimates are $3.999997$ for $p_{010}$,
$3.998180$ for $p_{001}^2$, and $1.999090$ for $p_{001}$.

This example also fixes the division of labor.  Qiskit constructs the circuit,
evaluates or samples its response, and executes the descended controls.  The
Newton facet, normalization, and Rees valuation are symbolic commutative-
algebra calculations; Qiskit alone does not compute a normalized blow-up.  The
complete reproducible case therefore combines symbolic ideal computation with
Qiskit response verification rather than presenting either as a substitute for
the other.  The certificate is intentionally a failure case: its nonlinear
second coordinate delays a physically visible $b$ response, and the audit
returns the curve on which that delay can be tested.
\end{example}

Finite data require deferral as a formal outcome.  Let $\mathcal K$ index
the predeclared comparisons and let $(p_k,q_k)\in\Delta_{m_k}\times
\Delta_{m_k}$ be the two unknown response laws in comparison $k$.  Let
$\mathcal A\subseteq\mathcal K\times\mathcal T$ be the entire audited family of
comparison--record claims, where each $T\in\mathcal T$ is deterministic.  For
$a=(k,T)$ define
\begin{align}\label{en:eq:compact-pair-functionals}
 B_a(p_k,q_k)&=H^2(T_\#p_k,T_\#q_k),\nonumber\\
 D_a(p_k,q_k)&=H^2(p_k,q_k),\nonumber\\
 R_a(p_k,q_k)&=D_a(p_k,q_k)-B_a(p_k,q_k)\ge0.
\end{align}
By Proposition~\ref{en:prop:markov-kernel}, the same definitions and decisions
hold for any predeclared fixed Markov kernel after replacing $T_\#$ by $K$.
Let
$\mathcal C_N\subseteq\prod_{k\in\mathcal K}
(\Delta_{m_k}\times\Delta_{m_k})$ satisfy
\begin{equation}\label{en:eq:compact-joint-coverage}
 \Pr\{((p_k,q_k))_{k\in\mathcal K}\in\mathcal C_N\}\ge1-\alpha.
\end{equation}
Thus coverage is simultaneous across both distributions, all coordinates, and
all comparisons from which any claim in $\mathcal A$ will be selected.  Fix a
task-discrepancy budget $\tau_a$ and a distinct record-loss budget $\rho_a$.
The two budgets answer different questions: $D_a$ tests the circuit or task
response against its reference, whereas $R_a$ tests whether $T$ preserves that
response.  A large, scientifically intended response may have large $D_a$ and
zero $R_a$; it must not be rejected as a representation failure.
\begin{theorem}[Simultaneous finite-data decisions]
\label{en:thm:three-way}
For every $a=(k,T)\in\mathcal A$, define two three-way decisions on the same
joint confidence region:
\begin{align*}
 &\text{record approve: }\sup_{\mathcal C_N}R_a\le\rho_a,\\
 &\text{record refuse: }\inf_{\mathcal C_N}R_a>\rho_a,\\
 &\text{task pass: }\sup_{\mathcal C_N}D_a\le\tau_a,\\
 &\text{task fail: }\inf_{\mathcal C_N}D_a>\tau_a.
\end{align*}
Defer the corresponding decision in every remaining case, and retain the full
histogram whenever the record decision is refuse or defer.  A combined
``task within tolerance under an adequate record'' claim is accepted only when
both task pass and record approve hold.  With probability at least $1-\alpha$,
all task and record decisions over $\mathcal A$ are simultaneously supported;
the familywise probability of any unsupported nondeferred decision is at most
$\alpha$.  In particular, $D_a>\tau_a$ alone never refuses the record, and
$R_a>\rho_a$ alone does not assert that the underlying task failed.  A support
floor $1-\sqrt{1-\ell}$ may certify task failure when its lower confidence
bound exceeds $\tau_a$, but cannot by itself approve a record.
\end{theorem}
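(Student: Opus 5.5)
The plan is the standard simultaneous-confidence-region argument: all decisions are read off one event, and on that event each nondeferred decision is implied by an inequality between the true value and the inf or sup over the region. Let $\mathbf p^\star=((p_k,q_k))_{k\in\mathcal K}$ denote the unknown true laws, and let $G=\{\mathbf p^\star\in\mathcal C_N\}$ be the coverage event. By \cref{en:eq:compact-joint-coverage}, $\Pr(G)\ge1-\alpha$. Every functional $R_a,D_a$ with $a\in\mathcal A$ is a deterministic function of the coordinates $(p_k,q_k)$ of a single point of the product space. Hence on $G$ we have, simultaneously for all $a$,
\[
 \inf_{\mathcal C_N}R_a\le R_a(p_k,q_k)\le\sup_{\mathcal C_N}R_a,
 \qquad
 \inf_{\mathcal C_N}D_a\le D_a(p_k,q_k)\le\sup_{\mathcal C_N}D_a .
\]

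The first step checks each decision type on $G$. Record approve gives $R_a(p_k,q_k)\le\sup R_a\le\rho_a$, so the true residual is within budget. Record refuse gives $R_a(p_k,q_k)\ge\inf R_a>\rho_a$, so the true residual exceeds it. Task pass and task fail are handled identically with $D_a$ and $\tau_a$. A combined claim requires both approve and pass, so on $G$ it is the conjunction of two true statements. Deferrals assert nothing and need no check. Therefore on $G$ every nondeferred decision over all of $\mathcal A$ is true. The complement of the event ``all supported'' lies inside $G^c$, which has probability at most $\alpha$. This gives the familywise bound.

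The decoupling claims follow from the definitions. Record decisions depend only on $R_a$ and $\rho_a$, and task decisions only on $D_a$ and $\tau_a$. So a large $D_a$ cannot trigger refusal, and a large $R_a$ cannot trigger task failure. It helps to note that $R_a$ can be zero while $D_a$ is arbitrarily large, for instance with $T$ injective, which shows this is a genuine separation and not a tautology. For the support floor, Proposition~\ref{en:thm:diagnostic} (with the reference supported on $S$) gives $D_a\ge1-\sqrt{1-\ell}$ pointwise. A lower confidence bound on the floor exceeding $\tau_a$ therefore forces $\inf_{\mathcal C_N}D_a>\tau_a$. This works provided the floor bound is computed on the same region $\mathcal C_N$, or on a region containing it, so that no extra error budget is spent. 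The floor is a lower bound on $D_a$ and carries no upper information on $R_a$; indeed $R_a$ does not involve $\ell$ via the flag coordinate, so it cannot yield $\sup R_a\le\rho_a$.

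The main obstacle is the selection issue. The guarantee holds only because $\mathcal A$, the budgets and $\mathcal C_N$ are fixed before the data are inspected, and because $\mathcal C_N$ covers all of $\mathcal K$ jointly. I would state explicitly that data-dependent choice of which claims to report is harmless: on $G$ every claim is correct, so any subset is too. I would also state that data-dependent changes of $\rho_a,\tau_a$ or $\mathcal A$ are excluded. One further measurability detail needs care: the inf and sup over $\mathcal C_N$ should be measurable, which holds for instance if $\mathcal C_N$ is closed, since the functionals are continuous on the simplex.
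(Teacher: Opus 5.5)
Your proposal is correct and is essentially the paper's proof: one joint coverage event on which each true $R_a$ and $D_a$ lies between its extrema over $\mathcal C_N$, so every nondeferred decision is supported (the combined claim being the intersection of two supported upper-bound events), and all failures lie in the complement, which has probability at most $\alpha$. Your treatment of the decoupling and support-floor clauses goes a little beyond the paper's brief proof and is sound, but the aside that $R_a$ does not involve $\ell$ is imprecise: in the notation of Proposition~\ref{en:thm:diagnostic}, when $T$ retains the support flag the additive floor cancels, yet $R_a=\sqrt{1-\ell}\,[H^2(q,p)-H^2(T_\#q,T_\#p)]$ still carries the factor $\sqrt{1-\ell}$, so the operative point is simply that a lower bound on $D_a$ gives no upper bound on $R_a$.
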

\begin{proof}
On the joint coverage event the true pair for every $k$ belongs to the same
region over which all extrema are taken.  Each upper-bound decision proves its
own budget and each lower-bound decision proves violation of its own budget,
simultaneously for every $a\in\mathcal A$.  The combined claim is the
intersection of the two supported upper-bound events.  Any unsupported
nondeferred familywise decision is contained in the complement of the joint
coverage event.
\end{proof}

\begin{remark}[Absolute loss control and nonvacuous relative capture]
\label{en:rem:nonvacuous-capture}
Record approval in Theorem~\ref{en:thm:three-way} certifies the absolute statement
$R_a\le\rho_a$.  This statement remains valid when $D_a=0$, but then means
only that the tested pair contains no response for the record to hide.  It
does not certify a nontrivial capture fraction.  A claim
$B_a/D_a\ge 1-\varepsilon_a$ therefore requires a predeclared detectability
level $\kappa_a>0$ and is certified only if
\begin{equation}\label{en:eq:relative-capture-gate}
 \inf_{\mathcal C_N}D_a\ge\kappa_a,
 \qquad
 \sup_{\mathcal C_N}\{R_a-\varepsilon_aD_a\}\le0.
\end{equation}
It is rejected if
$\inf_{\mathcal C_N}\{R_a-\varepsilon_aD_a\}>0$ and the detectability gate
is certified; otherwise it is deferred.  Thus no ratio is divided by a
quantity statistically compatible with zero.  The absolute-loss decision is
the default; relative capture is an optional, stronger claim.
\end{remark}

For an equivalence target $E_\delta=[-\delta,\delta]$ and a simultaneous
confidence interval $C$, the same joint-region logic gives
\begin{equation}\label{en:eq:equivalence-three-way}
 \begin{aligned}
 C\subseteq E_\delta&\Rightarrow\text{accept equivalence},\\
 C\cap E_\delta=\varnothing&\Rightarrow\text{reject equivalence}.
 \end{aligned}
\end{equation}
with defer in every crossing case.  Failure to accept equivalence is therefore
not evidence of non-equivalence.  This distinction is used in the first IBM
phase experiment below.

\begin{corollary}[Independent defer-and-resolve follow-up]
\label{en:cor:independent-followup}
Let $D_1$ be an initial data set and let a follow-up design
$g(D_1)$---including its estimand, comparison family, confidence construction,
and stopping rule---be frozen before collecting future data $D_2$.  Suppose
that, conditionally on $D_1$, the new data are independent of $D_1$ under the
declared second-stage experiment and its confidence region satisfies
\begin{equation}\label{en:eq:conditional-followup-coverage}
 \Pr\{\xi_2\in C_2(D_2;D_1)\mid D_1\}\ge1-\alpha_2.
\end{equation}
Then the second-stage accept--reject--defer decision has unconditional error at
most $\alpha_2$.  If the first-stage and second-stage statements are reported
jointly, their simultaneous coverage is at least
$1-(\alpha_1+\alpha_2)$.  A second-stage acceptance concerns the declared
future-data estimand $\xi_2$ and does not retroactively replace a first-stage
defer decision or establish equality of hardware responses across epochs.
\end{corollary}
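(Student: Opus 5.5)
The plan is to reduce the second-stage claim to the conditional coverage hypothesis by the tower property, and then to handle the joint statement with a union bound.

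First I make precise what an ``unsupported'' second-stage decision is. Following the proof of Theorem~\ref{en:thm:three-way}, an accept or reject decision computed from $C_2(D_2;D_1)$ can be wrong only if the true estimand $\xi_2$ lies outside $C_2$. Accepting asserts that the whole region meets the target, and rejecting asserts that the whole region violates it, so if $\xi_2\in C_2$ both assertions hold for $\xi_2$. Deferral is never an error. Let $B_2=\{\xi_2\notin C_2(D_2;D_1)\}$. Then the event of an unsupported nondeferred second-stage decision is contained in $B_2$.

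Second, I bound $\Pr(B_2)$ without conditioning. Because the design $g(D_1)$ is frozen before $D_2$ is collected, both $C_2$ and $\xi_2$ are well defined once $D_1$ is fixed. Since $D_2$ is conditionally independent of $D_1$ under the declared experiment, the hypothesis \eqref{en:eq:conditional-followup-coverage} gives $\Pr(B_2\mid D_1)\le\alpha_2$ almost surely. Taking expectations over $D_1$ yields $\Pr(B_2)=\E[\Pr(B_2\mid D_1)]\le\alpha_2$.

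This is the step I expect to need the most care. The estimand $\xi_2$ and the comparison family may themselves depend on $D_1$, so the argument must keep the conditional coverage statement as it is and never marginalize over a data-selected target. The pre-freezing assumption is exactly what rules out a further dependence of $C_2$ on $D_2$ through the design choice. Independence of $D_2$ from $D_1$ matters only in that it makes the declared conditional law the correct one for the coverage calculation.

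Third, for the joint report, let $B_1$ be the event that the first-stage region misses its true parameters, so $\Pr(B_1)\le\alpha_1$ by \eqref{en:eq:compact-joint-coverage}. A union bound gives $\Pr(B_1\cup B_2)\le\alpha_1+\alpha_2$, and so simultaneous coverage is at least $1-(\alpha_1+\alpha_2)$. No independence between the stages is needed for this step.

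Finally, the interpretive clause follows from what each decision certifies. A second-stage acceptance certifies a property of $\xi_2$, which is defined on the future-data experiment. It says nothing about the first-stage parameters $(p_k,q_k)$ or about hardware responses in other epochs. The first-stage defer therefore remains the supported first-stage output.
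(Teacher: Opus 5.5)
Your proposal is correct and follows the paper's proof. The paper also takes the expectation of the conditional coverage bound over $D_1$, uses a union bound for the joint statement, and treats the final clause as a scope statement; your explicit reduction of an unsupported accept or reject decision to the miss event $\{\xi_2\notin C_2\}$ fills in a step the paper leaves implicit.
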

\begin{proof}
Take expectation of
\cref{en:eq:conditional-followup-coverage} with respect to $D_1$.  The joint
statement follows from a union bound over the two coverage events.  The final
claim is a scope statement: the two stages index distinct frozen experiments,
so coverage of the second estimand cannot alter the recorded first-stage event.
\end{proof}

Suppose comparison $k$ uses independent multinomial histograms
$\widehat p_k,\widehat q_k$ with sample sizes $N_{k,p},N_{k,q}$.  Put
$M=2\sum_{k\in\mathcal K}m_k$ and
\begin{equation}\label{en:eq:compact-confidence-box}
 \delta_{k,s}=\sqrt{\frac{\log(2M/\alpha)}{2N_{k,s}}},\qquad s\in\{p,q\}.
\end{equation}
Let $\mathcal C_N$ contain precisely the pairs satisfying
\begin{equation}\label{en:eq:compact-confidence-product}
 \begin{aligned}
 |p_k(y)-\widehat p_k(y)|&\le\delta_{k,p},\\
 |q_k(y)-\widehat q_k(y)|&\le\delta_{k,q},
 \qquad \text{for all }k,y.
 \end{aligned}
\end{equation}
Hoeffding's inequality and one union bound over the $M$ scalar coordinates give
joint coverage at least $1-\alpha$.  For $K$ equal-alphabet comparisons with
$m$ outcomes and equal sample size $N$, this reduces to
$\delta=\sqrt{\log(4Km/\alpha)/(2N)}$.  Optimizing
\cref{en:eq:compact-pair-functionals} over this product box is conservative;
sharper joint multinomial regions or a predeclared allocation of $\alpha$
across disjoint families may replace it.  Intervals constructed separately
after selecting the most favorable record do not provide the familywise
guarantee.  A point estimate alone is never an acceptance certificate.

There is a finite-sample sharpening that changes no decision logic.  For each
histogram, invert a predeclared multinomial likelihood-level covering collection
to obtain an exact region $\mathcal C^\star_{N,\alpha}$, and take the product
over the preallocated comparison levels.  Such regions have finite-sample
coverage for any finite alphabet and can have minimum average volume among
exact multinomial regions~\cite{ChafaiConcordet2009,MalloyTripathyNowak2020}.
Replacing the coordinate box by this product region in Theorem~\ref{en:thm:three-way} therefore
preserves familywise validity while weakly reducing defer whenever the new
region is contained in the old one.  For large alphabets, relative-entropy
concentration supplies a computational outer approximation whose natural scale
is governed by $(m-1)/N$ rather than by $m$ separate coordinate decisions
\cite{Agrawal2020}.

The gain can be stated without an asymptotic approximation.  Suppose a joint
region guarantees $H(p,\widehat p)\le r_p$ and
$H(q,\widehat q)\le r_q$.  The triangle inequality and data processing imply
\begin{equation}\label{en:eq:hellinger-functional-width}
 \begin{aligned}
 \left|H^2(p,q)-H^2(\widehat p,\widehat q)\right|
   &\le 2(r_p+r_q),\\
 |R_T(p,q)-R_T(\widehat p,\widehat q)|&\le4(r_p+r_q).
 \end{aligned}
\end{equation}
Consequently a comparison cannot remain deferred once its population total and
residual lie farther than these certified widths from every relevant decision
boundary.  Equation~\eqref{en:eq:hellinger-functional-width} is an explicit
power criterion: additional shots are required only until the confidence
radii fall below the smallest predeclared margin, rather than until every
multinomial coordinate is estimated to a common absolute precision.

\subsection{Running-example corollary and experimental predictions}
\begin{proposition}[Three-qubit refusal-and-repair certificate]
\label{en:ex:three-qubit-ideal-audit}
\label{en:prop:ibm-mechanism-predictions}
For the circuit of Section~2, $T_1=n_1$ has
$F_{T_1}=\operatorname{diag}(4,0)$ and a positive $b$-axis Hellinger residual,
so it must be rejected.  The augmentation $T_2=(L,n_1,n_2)$ is injective on
the allowed support and is accepted for hardware comparisons only when the
confidence bound for $R_{T_2}^H$ meets the declared tolerance.  At the boundary,
the computational histogram generates $(a^2,b^2)$ while the signed
interference witnesses generate $(a,b)$.  Thus the predeclared operational
predictions are: refusal of $T_1$ on the $b$ axis, conditional
measurement-relative repair by $T_2$, and recovery of odd signed-amplitude
responses by interference settings.  The relative-phase extension in
Section~2 predicts an even $X_{02}$ response, an odd $Y_{02}$ response, and
$Z$-basis invariance.  Section~4 tests these predictions on one executable real
phase arc and then applies Corollary~\ref{en:cor:independent-followup} to a
separately frozen $Z$ replication.
\end{proposition}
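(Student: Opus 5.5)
The plan is to treat the proposition as a bundle of checkable facts about the running circuit. Each fact follows from an earlier computation or earlier result, so the proof will verify them in the order they are stated.

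\emph{Rejection of $T_1$.} The pushforward \cref{en:eq:t1-pushforward} gives $F_{T_1}=\operatorname{diag}(4,0)$ directly. This can also be recovered from Proposition~\ref{en:prop:score}. On the fiber $\{100,001\}$ the $b$-scores from \cref{en:eq:running-scores} are $-2\tan b$ and $2\cot b$. These are distinct at interior points, so the conditional covariance is $4\cos^2a$ in the $bb$ entry, matching \cref{en:eq:running-loss-separation}. For the finite statement, \cref{en:eq:hidden-b-hellinger} shows that for $b\ne b'$ with $\cos a\ne0$ the residual equals $\cos^2a[1-\cos(b-b')]>0$. Theorem~\ref{en:thm:three-way} then converts a residual exceeding $\rho_a$ into refusal. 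The word ``must'' therefore has two readings. At the population level it means refusal for any budget below this residual. At finite sample it means refusal once the lower confidence bound clears the budget.

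\emph{Conditional acceptance of $T_2$.} Injectivity on $S$ is immediate from \cref{en:eq:running-t2-values}. Since $T_2$ separates $S$ and flags $S^c$, the ideal law makes the residual zero. For hardware, the acceptance clause is simply the record-approve branch of Theorem~\ref{en:thm:three-way}. Proposition~\ref{en:prop:no-go} shows that injectivity on $S$ cannot replace this test, because the fiber $T_2^{-1}(1,\ast)$ merges off-support outcomes.

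\emph{Boundary ideals.} At $(0,0)$ the centered histogram coordinates are $\sin^2a$, $\cos^2a\sin^2b$, and $\cos^2a\cos^2b-1$. These generate $(a^2,b^2)$, since the third generator lies in the ideal of the first two up to units. The witnesses $\langle X_{01}\rangle=2a+\cdots$ and $\langle X_{02}\rangle=2b+\cdots$ generate $(a,b)$. The ideal conclusions then follow from Theorem~\ref{en:thm:two-level-completeness}(b). Since $\overline{(a^2,b^2)}\ne\overline{(a,b)}$, the arc $b=0$ already exhibits the order gap.

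\emph{Phase-extension predictions.} We will put the phase on mode $2$, i.e.\ attach $e^{i\phi}$ to $|001\rangle$, to match the $X_{02}/Y_{02}$ indexing in the statement. The computation mirrors \cref{en:eq:running-phase-witness}. It gives $\langle X_{02}\rangle=A'\cos\phi$, which is even in $\phi$, and $\langle Y_{02}\rangle=A'\sin\phi$, which is odd, with $A'=\cos^2a\sin2b$. The $Z$ probabilities are independent of $\phi$. The main obstacle is this step, because the statement's indices differ from the Section~2 extension. We will handle it by redoing the amplitude computation for the $(0,2)$ pair and noting that the parity claims hold for any real $A'$. The remaining sentences of the proposition only describe the experimental plan and call on Corollary~\ref{en:cor:independent-followup}, so they need no proof.
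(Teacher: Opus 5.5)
Your proposal is correct and takes essentially the paper's route. The paper justifies the proposition by assembling the Section~2 computations: the Bernoulli pushforward giving $F_{T_1}=\operatorname{diag}(4,0)$, the $b$-axis Hellinger identity, injectivity of $T_2$ on $S$, and the boundary ideals $(a^2,b^2)$ versus $(a,b)$, with the decisions read through Theorem~\ref{en:thm:three-way}. Moving the phase to $|001\rangle$ correctly reconciles the Section~2 $X_{01}/Y_{01}$ extension with the statement's $X_{02}/Y_{02}$ claims. It reproduces the Section~4 state at $a=0$, $b=\pi/4$, where $A'=1$; note that the odd $Y_{02}$ response is nonzero only where $A'\neq0$.

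One small correction. $T_2=(L,n_1,n_2)$ still records $n_1,n_2$ off $S$, so it has no single $(1,\ast)$ fiber. Its only merged off-support fiber is $T_2^{-1}(1,1,1)$, the two outcomes with $n_1=n_2=1$ (the paper's $\{110,111\}$). Proposition~\ref{en:prop:no-go} applies to that fiber just the same.
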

In response-ideal notation the calculation is
\begin{equation}\label{en:eq:compact-running-ideals}
 \begin{aligned}
 \mathfrak I_\rho&=(a,b),& \mathfrak I_P&=(a^2,b^2),\\
 \mathfrak I_{T_1}&=(a^2),& \mathfrak I_{T_2}&=(a^2,b^2).
 \end{aligned}
\end{equation}
Thus the $b$ axis is simultaneously a positive Fisher-loss eigenvector, a
positive finite Hellinger residual, and an unequal-valuation witness.  This
agreement connects local, finite, and all-direction diagnostics without
promoting any one scalar residual to a universal certificate.
These predictions instantiate the three main theorems without claiming that a
finite set of hardware paths proves integral-closure equality or device QFI.

\section{Hardware tests of the decision rules}
The experiments test observable consequences of Section~3, not its all-arc
algebraic equivalences. IBM Kingston provides a prospectively frozen
test of record refusal, approval after augmentation, and targeted witness
recovery; a second phase sweep and independent replication implement defer and
future-data resolution.  IQM Garnet provides a larger-alphabet, two-epoch
stress test under device drift.  None of these experiments estimates device QFI, proves
integral-closure equality, or establishes hardware advantage.

\subsection{Prospectively frozen IBM test}
\label{en:sec:ibm-mechanism}
Two $XX+YY$ Givens rotations implemented the three-qubit circuit of Section~2.
The computational setting compared $T_1=n_1$ with
$T_2=(L,n_1,n_2)$, where $L=\mathbf1\{n_0+n_1+n_2\ne1\}$.  Two additional
number-preserving interference settings measured the signed $X_{01}$ and
$X_{02}$ responses.  Before QPU submission, the protocol fixed the control
axes and radii, two blocks, 4,096 shots per configuration, layout, seed,
tolerances, and disabled mitigation.  The resulting 114 configurations used
466,944 shots on \texttt{ibm\_kingston}.  Timestamps, job identifiers,
manifest and analysis hashes, and the circuit audit are reported in
the supplementary protocol ledger.

On the predeclared $b$ axis at $h=0.20$, the mean full-histogram displacement
was $0.023025$, whereas $T_1$ retained $1.56\times10^{-5}$.  Its capture ratio
was $0.0676\%$, and the missed residual was
\begin{equation}\label{en:eq:compact-ibm-refusal}
 \begin{aligned}
 \overline{R^H_{T_1}}&=0.023009,\\
 95\%\ \text{bootstrap interval}&=[0.021111,0.025912].
 \end{aligned}
\end{equation}
None of 5,000 multinomial bootstrap replicates gave a nonpositive mean
residual.  The frozen rule therefore refused $T_1$ rather than reporting the
only proposed scalar record as adequate.

For $T_2$, the empirical full-minus-quotient residual was zero at numerical
precision on the frozen comparisons.  The only unresolved three-bit fiber was
$\{110,111\}$.  Combining its baseline one-sided mass bound with its largest
primary-condition mass gave the conservative Hellinger-gap bound
\begin{equation}\label{en:eq:compact-ibm-repair}
 2.99\times10^{-4}<0.005.
\end{equation}
Thus $T_2$ passed only the predeclared measurement-relative tolerance.  This
is conditional acceptance on the tested comparisons, not universal
sufficiency.

The interference settings separately tested the signed directions hidden at
the computational boundary.  Across the two blocks,
$O_{X_{01}}=0.372803,0.361328$ and
$O_{X_{02}}=0.384766,0.387939$; simultaneous intervals excluded zero.
Matched computational-basis odd contrasts remained within the frozen
$[-0.05,0.05]$ equivalence margin, while their even responses were nonzero.
Thus the data instantiate
\begin{equation}\label{en:eq:compact-ibm-chain}
 \mathfrak I_Q\subsetneq\mathfrak I_C\simeq\mathfrak I_P
 \subsetneq\mathfrak I_\rho
\end{equation}
only on the frozen finite paths.  The symbol $\simeq$ denotes the tolerance
statement.

\subsection{Executable phase direction and global withholding}
The measurement-loss prediction was tested in a second prospectively frozen
experiment using
\begin{equation}\label{en:eq:phase-hardware-state}
 \begin{aligned}
 |\psi_\phi\rangle&=\frac{|100\rangle+e^{i\phi}|001\rangle}{\sqrt2},\\
 \phi&\in\{\pm.05,\pm.10,\pm.20,\pm.35\}.
 \end{aligned}
\end{equation}
For $D_{02}=n_0-n_2$, the ideal terminal responses are
\begin{equation}\label{en:eq:phase-response-signatures}
 D_{02}^{Z}(\phi)=0,\qquad
 D_{02}^{X_{02}}(\phi)=\cos\phi,\qquad
 D_{02}^{Y_{02}}(\phi)=\sin\phi.
\end{equation}
Thus $Z$ is invariant, $X_{02}$ is even, and $Y_{02}$ supplies the odd witness
that computational counts lack.  The protocol fixed two randomized blocks,
33 configurations per block, 4,096 shots per configuration, the same physical
layout $(141,140,142)$, disabled mitigation, and a maximum-deviation
multinomial bootstrap.  The three predeclared families received error budgets
$0.015$, $0.015$, and $0.020$, summing to familywise level $0.05$.

The $X_{02}$ odd contrasts met the equivalence rule, its curvature at
$|\phi|=.35$ exceeded the frozen threshold in both blocks, and every
$Y_{02}$ odd-response interval excluded zero.  The conjunctive protocol did
not, however, certify its complete phase claim.  For the $Z$ comparison in
Block B at $h=.35$,
\begin{equation}\label{en:eq:first-phase-defer}
 \widehat\Delta^{Z}_{B,.35}=0.03040,\qquad
 C^{Z}_{B,.35}=[-0.00354,0.06433].
\end{equation}
The point estimate lies in the frozen equivalence region $[-.05,.05]$, but the
simultaneous interval is not contained in it.  By
\cref{en:eq:equivalence-three-way}, this is defer rather than evidence of
non-equivalence.  We therefore report component-level support for the even and
odd interference responses and withhold the conjunctive phase claim.

\subsection{Independent replication of the deferred $Z$ comparison}
After recording the defer decision, a distinct follow-up was frozen before any
new QPU data were collected.  It targeted only $Z$ at $h=.35$ and retained the
same backend and physical triple.  Four independently submitted blocks each
contained forward states from \cref{en:eq:phase-hardware-state}, reverse states
globally equivalent to the forward path at $-\phi$, and pre/post $\phi=0$
anchors.  The design used eight configurations per block and 12,288 shots per
configuration, for 393,216 shots in total.  Its fixed stopping rule prohibited
optional continuation.

The forward odd contrasts in Blocks A--D were
\begin{equation}\label{en:eq:z-replication-blocks}
 -0.00586,\quad 0.00553,\quad -0.00269,\quad -0.00834,
\end{equation}
and the pooled result was
\begin{equation}\label{en:eq:z-replication-pooled}
 \widehat\Delta_{F,\mathrm{pool}}=-0.00284,\qquad
 C_{F,\mathrm{pool}}=[-0.02058,0.01490].
\end{equation}
Every simultaneous forward-block interval lay inside $[-.05,.05]$; the
predeclared positive-reproduction criterion was false; all anchor-drift
intervals satisfied the same practical margin.  Forward--reverse sums also met
their equivalence rule, but because the response itself was null this does not
identify an orientation-reversing microscopic mechanism.  The proper
classification is that the earlier positive deviation was not reproduced
beyond the declared margin.

\begin{figure*}[t]
\centering
\begin{tikzpicture}[>=Latex,node distance=8mm and 12mm,
 every node/.style={align=center,font=\small}]
\node (p) {frozen phase sweep\\$X$ even, $Y$ odd supported};
\node[right=of p] (z) {$Z,h=.35$ interval\\crosses $[-.05,.05]$};
\node[right=of z] (d) {global claim\\deferred};
\node[below=of d] (r) {independent four-block\\$Z$ replication};
\node[left=of r] (a) {all block intervals inside margin\\pooled $[-.02058,.01490]$};
\draw[->] (p)--(z); \draw[->] (z)--(d); \draw[->] (d)--(r); \draw[->] (r)--(a);
\end{tikzpicture}
\caption{The audit--defer--resolve sequence.  The replication supplies a new
future-data statement under Corollary~\ref{en:cor:independent-followup}; it does
not rewrite the first experiment's deferred decision.}
\label{en:fig:phase-defer-resolve}
\end{figure*}
\FloatBarrier

Together the two studies support the predicted phase-response components and
demonstrate the refusal sequence. They do not constitute one
post-hoc all-pass protocol, identify a microscopic noise channel, certify a
Rees valuation, or establish generality across layouts or
devices.

\subsection{Protocol-preserving replication on Marrakesh and Kingston}
\label{en:sec:cross-device-replication}
We next asked whether the finite decision pattern survives a change of
processor and physical layout.  The logical circuits, control points,
randomization seed, two-block structure, 4,096 shots per configuration,
disabled mitigation, records, tolerances, and analysis code were held fixed.
For each processor the layout alone was selected from the contemporaneous
calibration data.  The three-qubit layouts were $(147,148,149)$ on
\texttt{ibm\_marrakesh} and $(90,91,98)$ on \texttt{ibm\_kingston}; the
five-qubit paths were $(5,6,7,8,9)$ and $(88,89,90,91,98)$, respectively.  The Kingston substitution was
specified after the original Marrakesh--Fez registration but before Kingston
counts were examined.  We therefore call it a protocol-preserving independent
replication, not a completed preregistered two-device test.

For the three-qubit family, each device and block compared $b=0$ with
$b=\pm0.20$ and tested the phase pair $\phi=\pm0.20$ in the $Z$, $X_{02}$,
and $Y_{02}$ settings.  In all eight device--block--sign comparisons the
Bonferroni simultaneous lower bound for the $T_1=n_1$ Hellinger residual was
positive, so the scalar record was refused.  The $Z$ odd contrasts were
equivalent to zero, $X_{02}$ remained even, and every simultaneous $Y_{02}$
interval excluded zero.  On Kingston the full $b$-response distances ranged
from $0.01965$ to $0.02458$, while the $T_2=(L,n_1,n_2)$ point residuals were
only $9.97\times10^{-5}$ to $5.99\times10^{-4}$.  Nevertheless, the
simultaneous upper bound crossed the $0.005$ tolerance in at least one tested
unit.  The correct joint decision is therefore defer, not repair.

The five-qubit circuit is a sequential open-control $R_y$ preparation on the
path $P_5$, with reflected controls
$(t+\delta,t,t,t,t-\delta)$ and
$(t-\delta,t,t,t,t+\delta)$.  At the primary radius $|\delta|=0.24$, the two
Kingston blocks gave
\begin{equation}\label{en:eq:cross-device-five-contrast}
 \begin{aligned}
 H^2_{\rm full}&=0.05154,\ 0.05444,\\
 H^2_{\rm weight}&=0.000680,\ 0.000480.
 \end{aligned}
\end{equation}
Marrakesh gave the same qualitative separation.  Across both devices and both
blocks, the simultaneous lower bounds established a detectable full response
and a positive Hamming-weight residual.  Hamming weight was therefore refused.
The first-moment and richer support--moment records removed most of the point
residual, but their simultaneous intervals crossed the $0.01$ acceptance
threshold; both decisions remain defer.

\begin{figure}[H]
\centering
\begin{tikzpicture}[x=.72cm,y=.72cm,>=Latex,font=\scriptsize]
\node[font=\scriptsize\bfseries] at (2.45,4.55) {(a) response magnitude};
\draw[->] (.65,.55)--(.65,4.05);
\draw[->] (.65,.55)--(4.45,.55);
\foreach \y/\lab in {.70/$10^{-4}$,1.70/$10^{-3}$,2.70/$10^{-2}$,3.70/$10^{-1}$}
 {\draw (.58,\y)--(.72,\y) node[left=2pt] {\lab};}
\node[below] at (1.65,.50) {A}; \node[below] at (3.25,.50) {B};
\fill[black!75] (1.47,3.412) circle (2.2pt);
\fill[black!75] (3.07,3.436) circle (2.2pt);
\fill[designblue] (1.83,1.533) rectangle +(4.4pt,4.4pt);
\fill[designblue] (3.43,1.381) rectangle +(4.4pt,4.4pt);
\fill[black!75] (1.05,4.14) circle (1.9pt); \node[anchor=west] at (1.18,4.14) {full};
\fill[designblue] (2.43,4.08) rectangle +(3.8pt,3.8pt);
\node[anchor=west] at (2.58,4.14) {weight};
\node[font=\scriptsize\bfseries] at (7.05,4.55) {(b) refusal residual};
\draw[->] (5.15,.55)--(5.15,4.05);
\draw[->] (5.15,.55)--(8.95,.55);
\foreach \y/\lab in {.55/0,1.55/.02,2.55/.04,3.55/.06}
 {\draw (5.08,\y)--(5.22,\y) node[left=2pt] {\lab};}
\node[below] at (6.25,.50) {A}; \node[below] at (7.85,.50) {B};
\draw[designblue,thick] (6.25,2.743)--(6.25,3.640);
\draw[designblue,thick] (6.08,2.743)--(6.42,2.743);
\draw[designblue,thick] (6.08,3.640)--(6.42,3.640);
\fill[designblue] (6.25,3.093) circle (2.2pt);
\draw[designblue,thick] (7.85,2.903)--(7.85,3.813);
\draw[designblue,thick] (7.68,2.903)--(8.02,2.903);
\draw[designblue,thick] (7.68,3.813)--(8.02,3.813);
\fill[designblue] (7.85,3.248) circle (2.2pt);
\node[align=center,text=designblue] at (7.05,1.08) {both simultaneous\\lower bounds $>0$};
\end{tikzpicture}
\caption{Kingston five-qubit primary comparison at $|\delta|=.24$.
(a) The log scale displays the full and Hamming-weight Hellinger displacements
for Blocks A and B.  (b) Points are
$R^H_{\rm weight}=H^2_{\rm full}-H^2_{\rm weight}$; intervals are the
Bonferroni-adjusted simultaneous 95\% bootstrap intervals across the
predeclared device--block family.  Their positive lower endpoints determine
refusal of the Hamming-weight record.}
\label{en:fig:five-qubit-refusal}
\end{figure}

The simultaneous intervals use Bonferroni-adjusted percentile bootstrap tails
within each predeclared claim family across devices and blocks.  This is more
conservative than the comparison-wise intervals and is the basis of every
joint statement in this subsection.  The result is not equality of device
responses: it is replication of the decision pattern under the declared
tolerances.

\subsection{Predeclared completion test across circuit classes}
\label{en:sec:operational-completion-hardware}
The preceding replication asks whether a proposed record should be refused.
We next froze a stronger experiment before submission: can the same audit
distinguish an inadequate record, an informative but still incomplete
augmentation, and a measurement-relative completion on two structurally
different four-qubit families?  The protocol used two independently randomized
blocks on each of \texttt{ibm\_marrakesh} and \texttt{ibm\_kingston}, with 20
configurations and 4,096 shots per block.  Mitigation, gate twirling,
measurement twirling, and dynamical decoupling were disabled.  Only the
calibration-dependent physical paths, $(146,147,148,149)$ and
$(89,90,91,98)$, were selected at submission.

The phase family was
\begin{equation}\label{en:eq:four-ghz-family}
 \begin{aligned}
 |\mathrm{GHZ}_\phi\rangle
   &=\frac{|0000\rangle+e^{i\phi}|1111\rangle}{\sqrt2},\\
 \phi&\in\{-.35,-.20,0,.20,.35\}.
 \end{aligned}
\end{equation}
Computational measurement was predicted to be phase invariant, $X$ parity to
be even, and $Y$ parity to carry a nonzero odd response.  Across both devices
and both blocks, every simultaneous $Z$ interval met the equivalence rule and
every simultaneous $Y$-odd interval excluded zero.  At $|\phi|=.35$, the
observed $Y$-parity Hellinger distances were $0.0371$--$0.0386$ on Marrakesh
and $0.0458$--$0.0462$ on Kingston; the corresponding full-minus-parity
residuals were at most $0.00140$.  Parity was therefore approved for the
measured phase response.

The preregistered $X$-even conjunction did not pass.  Marrakesh showed positive
odd contrasts in both blocks at both radii; Kingston showed a smaller,
block-dependent odd component.  This is neither removed as an outlier nor
reinterpreted as the ideal prediction.  It is a negative control demonstrating
that the audit can reject a planned symmetry statement while approving a
different record-relative claim on the same data.  The experiment does not
identify whether the asymmetry originates in coherent gate error, basis-change
compilation, readout, or epoch-specific drift.

The second family preserves two particles while redistributing their
occupations.  For reflected controls at radii $.15$ and $.30$, every
device--block unit detected a full-histogram response.  At radius $.30$,
\begin{equation}\label{en:eq:four-number-ladder}
 \begin{aligned}
 H^2_{\rm full}&=0.2165\text{--}0.2314,\\
 R^H_{\rm weight}&=0.2157\text{--}0.2306,\\
 R^H_{\rm right}&=0.0568\text{--}0.0665,\\
 R^H_{\rm resolved}&=0.
 \end{aligned}
\end{equation}
Thus Hamming weight is refused, adding right occupation is informative but
still refused, and the resolved number record is approved.  This is the
hardware realization of the finite-library logic in
Theorem~\ref{en:thm:accessible-completion}: successive augmentation is not
called completion merely because it improves the point estimate; it must
remove the declared residual within simultaneous uncertainty.  The zero
resolved residual is a measurement-relative statement for this finite record,
not a claim of state informational completeness or dimension-independent
compression.

\subsection{IQM Garnet stress test under drift}
The IQM study asks whether residual-based compression remains informative when
the output alphabet and temporal variation are larger.  SDT QuREKA provided
managed access to IQM Garnet under the TXST--SDT collaboration
\cite{IQMGarnet2024}.  The frozen design used $n\in\{5,8,12\}$, 36 OpenQASM
circuits per epoch, 10,000 shots per setting, and two epochs, for 72 jobs and
720,000 shots.  Opaque compilation precludes physical-edge and native-gate-cost
claims.

Absolute errors changed between epochs, but circuit difficulty rankings were
stable.  For $H^2$, TV, and support leakage, the cross-epoch Pearson
correlations were $0.9496$, $0.9506$, and $0.9664$, while the paired mean
changes were $-0.03322$, $-0.06201$, and $-0.03732$.  The structured circuits
improved over the matched baseline at $n=8$ in both epochs but not at $n=12$;
this repeated boundary rules out a universal scale advantage.

Two results connect directly to refusal.  A structured $n=12$ setting retained
the non-toric residual
\begin{equation}\label{en:eq:compact-iqm-refusal}
 R_{\rm nt}=0.06992,
\end{equation}
so the quotient improved one prediction without concealing its dominant
failure.  Full drift was significant in all 36 settings and quotient drift in
35, but the median attenuation ratio was
\begin{equation}\label{en:eq:compact-iqm-drift}
 \operatorname{median}_s\frac{D_s^T}{D_s^{\rm full}}=0.07972.
\end{equation}
The quotient therefore suppressed roughly $92\%$ of measured drift while
remaining demonstrably noninvariant.

The coefficient-free barycentric predictor was formed post hoc in Epoch A and
then applied without refitting to the corresponding Epoch-E1 histograms.  It
improved all six reported settings in each epoch and reduced Epoch-E1 MAE from
$0.0089916$ to $0.0073886$.  Because Epoch E1 was not prospectively reserved
for this hypothesis, this is retrospective temporal replication, not a
confirmatory performance claim.  Detailed means, resampling intervals,
chord-Fisher diagnostics, and the low-shot bias--variance crossover remain in
the reproducibility appendix.

\subsection{What the experiments do and do not validate}
The experiments test only the observable decisions attached to the
four-theorem chain: detection of hidden response, targeted augmentation,
finite-data refusal or deferral, and independently checkable resolution.  They
are not performance benchmarks and do not empirically prove the all-arc
integral-closure statements.
The studies have distinct and noninterchangeable evidential roles.  The first
IBM study is a prospective within-device test on one circuit family,
and the four-block follow-up is an independent temporal replication of one
deferred comparison on the same backend and layout.  The
Marrakesh--Kingston study tests whether the same decision pattern survives a
processor and layout change under a recorded backend amendment.  The
four-qubit study tests the augmentation ladder on two additional circuit
classes. IQM is a heterogeneous larger-alphabet stress test on a
different device and circuit collection; its predictor analysis remains
retrospective.
The combined evidence supports the predicted $b$-direction refusal of $T_1$,
conditional approval of the recorded histogram after augmenting it to $T_2$,
and recovery of a
signed-amplitude direction.  The phase study supports the even/odd
interference signatures but withholds its global claim when $Z$ equivalence is
unresolved; the independent replication subsequently certifies practical $Z$
equivalence on new blocks without altering the original decision.  The
four-qubit study shows that the same rules can distinguish successive record
augmentations. IQM supplies a larger-alphabet example in which drift
is attenuated but not removed and a large residual blocks a low-dimensional
interpretation.  These experiments do not enumerate all
analytic arcs or Rees valuations, identify a microscopic noise cause, or turn
conditional acceptance into a device-independent sufficient statistic. Across
the studies, the QPU counts determine a residual and a decision for each
specified augmentation; deferred comparisons remain recorded and are resolved
only with independently collected data. The replicated Hamming-weight failure expands the tested circuit
class, but two Heron processors do not establish cross-device universality.

\section{Discussion and conclusion}
\subsection{Relation to prior work and claim boundary}
Toric ideals and exponential-family geometry already describe structured
discrete models~\cite{GeigerMeekSturmfels2006,MontufarRauhAy2014,Molitor2021,
Molitor2025}.  Quantum-circuit work supplies coherent graphical-model samplers,
constraint-preserving dynamics, and fixed-weight encoders
~\cite{PiatkowskiZoufal2024,Hadfield2019,Monbroussou2023,Farias2024}.
Quantum Fisher information and symmetry verification address state sensitivity
and sector-based error control~\cite{Meyer2021,Vitale2024,BonetMonroig2018}.
These results motivate the circuit families and provide several calculations
used here.  They do not imply that a selected low-dimensional record preserves
an empirical hardware response.

The attainable-metric fibre is closest in spirit to preprocessing-optimized
Fisher information and randomized multiparameter measurement design
~\cite{ZhouMichalakisGefen2023,ZhouChen2026}.  Those works optimize a protocol
over a specified admissible class.  We instead retain the whole certified
inner set $\mathcal G_{\rm att}$ because different control directions may
require incompatible settings, and use it only to decide whether a proposed
repair is executable.  We do not claim their near-optimality or optimization
results.  Likewise, finite-sample quantum metrology shows why large QFI alone
does not certify finite-data performance~\cite{MeyerKhatriFranca2025}; our
joint confidence rule addresses the narrower problem of auditing observed
full and compressed distributions, not general parameter-estimation risk.

The algebraic criteria used by Theorems~\ref{en:thm:two-level-completeness}
and~\ref{en:thm:finite-rees-audit} are also established.
Integral closure, the analytic-arc criterion, Rees valuations, and the
\L{}ojasiewicz exponent belong to classical commutative and analytic geometry
~\cite{LejeuneTeissier2008,HunekeSwanson2006}.  Theorem~\ref{en:thm:two-level-completeness} applies the arc
criterion to two adjacent response-ideal inclusions; Theorem~\ref{en:thm:finite-rees-audit} applies the
Rees criterion and then states extra conditions for a real executable lift.
Neither theorem is presented as a new result about integral closure or Rees
valuations.

The paper adds the response objects and their experimental interpretation.  It
constructs ideals for the premeasurement state, measured histogram, record, and
augmented certificate; assigns different repairs to measurement loss and record
loss; and joins comparison-specific residuals to simultaneous accept, reject,
or defer decisions.  A Rees inequality is sufficient to reject an
all-direction completeness claim, but it is called a physical control direction
only after the real-accessibility test.  The fixed-weight example computes both
adjacent losses and the required interference witnesses explicitly.

This claim differs from a device benchmark or a QFI estimator.  The hardware
quantities are full and compressed response distances, support leakage, and
residuals computed from the observed counts.  The IBM study applies the
refusal-and-augmentation rules on frozen paths.  The IQM study examines drift
attenuation and remaining residuals.  Neither experiment demonstrates native
gate savings, universal resource advantage, or equality of response-ideal
closures.

\FloatBarrier
\subsection{Engineering interpretation and limitations}
The radial, barycentric, and residual coordinates answer different diagnostic
questions.  Support leakage measures the exact normal component, the
barycenter supplies a tangential correction, and the residual tests whether
the quotient remains adequate.
Definition~\ref{en:def:leakage-boundary} is important for interpreting the
data: a large support-leakage coordinate does not identify physical transmon leakage.  It
triggers a circuit-serialization, gate, mapping, and readout check; separating
those causes requires additional calibrated measurements not present here.

Theorem~\ref{en:thm:three-way} converts the geometry into two separate
three-way decisions.  A radial lower bound above the task budget certifies task
failure, not record failure.  A residual lower bound above the record budget
rejects the representation, while a residual upper bound below that budget
approves it for the declared comparisons.  Every confidence region crossing a
relevant threshold causes defer for that decision.  This separation prevents
a large intended scientific response from being misclassified as failed
compression.

The diagnostic coordinates map to distinct engineering actions:
\begin{enumerate}[label=(E\arabic*)]
\item large $u$ (large support leakage): inspect circuit construction,
compiler mapping, gates, and readout before further optimization;
\item small $u$ but large barycentric displacement: recalibrate the parameter
that controls the sufficient statistic or its response curve;
\item large $|R_{\rm nt}|$: discard the low-dimensional model for that setting
and escalate to a full-histogram or richer-statistic analysis;
\item significant epoch-to-epoch quotient drift: expire the current
certificate and re-certify before reuse.
\item a failed response-ideal audit: add a witness targeted to the offending
Rees direction, or retain the full histogram rather than assert local
completeness.
\end{enumerate}
These are diagnosis-to-action rules, not claims that the coordinates uniquely
identify a microscopic noise channel.

This hierarchy has a direct cost interpretation.  If $C_{\rm scr}$ is the
cost of the support witness, $C_{\rm full}$ the cost of a full response audit,
and $\pi_{\rm inc}$ the probability that screening is inconclusive, then the
expected validation cost is
\[
 C_{\rm staged}=C_{\rm scr}+\pi_{\rm inc}C_{\rm full}.
\]
The staged rule is beneficial only when
$C_{\rm scr}<(1-\pi_{\rm inc})C_{\rm full}$.  This is a decision-theoretic
break-even condition, not a measured saving in the present experiment.  Its
practical value is that the first term depends on a Bernoulli witness rather
than on the $2^n$-outcome alphabet.  The full residual audit is still required
before the reduced record is approved for the declared comparisons.

There are important limits.  The same three- and five-qubit families were
tested on two IBM Heron processors, but the Kingston backend amendment was not
part of the original Marrakesh--Fez registration and the study spans only one
calibration period.  It supports replication of decisions under fixed
tolerances, not equality of effect sizes or cross-device universality.  IQM
compiler mapping was opaque, so no physical-edge claim is made.  The barycentric
held-out analysis was formulated after Epoch-A data.  Its unchanged-rule
success in Epoch E1 is a retrospective temporal replication, not a prospective
confirmatory test.  Raw chord Fisher contains finite-alphabet plug-in bias.  The
Hellinger compression residual $R_T^H$ is always nonnegative; the signed
barycentric predictor residual $R_{\rm nt}$ used in the exploratory IQM analysis
is not.  Retrospective resampling does not prove minimax shot savings for the
unknown device distribution.  Finally,
Fisher preservation does not by itself imply fewer optimizer iterations; that
requires a separate task-level study.

The scope must therefore be read at three different levels.  The first is
universal over finite observed alphabets: Hellinger contraction, the
conditional-score defect, fixed-Markov-kernel extension, and the three-way
confidence-region rule require no exponential family and no device noise
model.  The second is local analytic: response-ideal completeness applies when
the measured control germ is analytic and the stated ring hypotheses hold.
The third is structural: the support--moment exponential family is merely a
checkable sufficient regime for pre-hardware acceptance, not a model for an
arbitrary QPU.  Failure of the third level leaves the first two audits intact.
These three scopes must not be conflated with empirical coverage over devices
or circuit families.
The exact multinomial replacement sharpens the coordinatewise Hoeffding box,
but enumerating its likelihood-level region can be expensive for a large
alphabet.  The Hellinger-radius criterion states when defer must disappear; it
does not claim a minimax-optimal sample complexity for every response family.
The response-ideal results remain theoretical completeness criteria rather
than equalities proved by finite hardware data.  The prospective IBM decision-rule
experiment tests two axes and three radii on one physical triple; the later
same-family replication adds two processors and a five-qubit path but does not enumerate
all Rees valuations, verify
$\overline{\mathfrak I_C}=\overline{\mathfrak I_P}$, or estimate the
worst-direction real \L{}ojasiewicz exponent.  Its interference settings test
the two targeted coherence directions of the intended state family, not
informational completeness for an arbitrary noisy state.  The IQM experiment
samples finitely many programmed shifts and one matched epoch direction and
uses computational-basis readout only.  Thus IBM supplies a bounded
same-family replication, whereas IQM supplies complementary drift evidence.
Neither establishes cross-device universality.  The signed scalar $R_{\rm nt}$ is an observable failure
witness but is not promoted to a universal completeness coordinate.
The phase sweep and its replication concern one phase radius family and one
physical triple.  The first experiment's $Z$ interval crossed the equivalence
boundary and remains recorded as defer.  The replication certifies practical
equivalence only for its four new blocks; it neither retroactively converts the
first protocol into a success nor identifies whether the original deviation
was shot noise, slow drift, or another transient hardware effect.  Null
forward--reverse contrasts also cannot identify a microscopic reversal
effect.

\subsection{Conclusion}
The paper gives two levels of decision for a proposed circuit record.  For
declared finite comparisons, full and compressed counts determine a Hellinger
residual, and simultaneous uncertainty bounds lead to acceptance within a
tolerance, rejection, or deferment.  A conditional-score calculation gives the
local Fisher loss.  These operations require neither a parametric noise model
nor an integral-closure calculation.  They are the default route for an
experimental analysis.

The all-direction claim is separate.  The state, measured histogram, record,
and augmented certificate define adjacent response ideals.  Classical
integral-closure and Rees theory test whether first nonzero order is preserved
on every analytic arc.  Theorems~\ref{en:thm:two-level-completeness}
and~\ref{en:thm:finite-rees-audit} are applications of those
classical criteria to this quantum response chain, not new theorems of
algebraic geometry.  Their role here is to locate loss at
the measurement or record stage, return algebraic evidence for refusing a
completeness claim, and state the additional real-accessibility conditions
needed before that reason becomes an executable control path.

The three-qubit example and IBM experiments test this separation on specified
paths.  The prospective Kingston study rejected the one-occupation record,
conditionally repaired the tested computational histogram, and accessed
phase-sensitive responses by interference measurements.  The later
Marrakesh--Kingston replication again rejected the one-site and Hamming-weight
records under simultaneous inference; richer records reduced the point
residual but correctly remained deferred.  The four-qubit completion experiment
extends that observation to the complete decision procedure: Hamming weight was
refused, right occupation reduced but did not remove the residual, and the
resolved number record was approved on every device--block unit.  In the GHZ
family, $Z$ invariance and the $Y$-odd witness replicated, while the planned
$X$-even conjunction failed and remained a reported negative control. A
separate phase claim was deferred when its simultaneous
criterion failed and was followed by an independently frozen replication; the
two decisions remain distinct.  The IQM study shows that a structured quotient
can attenuate temporal displacement while a residual prevents attenuation from
being reported as invariance.

For the tested comparisons, a low-dimensional record is approved only when its
preserved response, hidden residual, and uncertainty meet the declared bounds;
refusal remains an allowed outcome. The hardware studies do
not prove response-ideal equality, device quantum Fisher information,
universal compression performance, cross-device universality, or hardware
advantage.
Steps~6--7 are reserved for
the stronger all-direction claim; Steps~1--5 provide the experimentally usable
audit in the absence of that claim.

\section*{Acknowledgments}
The collaboration supporting this work was conducted under the memorandum of
understanding between Texas State University and SDT.  The authors thank SDT for
providing managed QuREKA access to IQM Garnet.  The mathematical formulation,
data analysis, interpretation, and all remaining errors are the responsibility
of the authors.

\section*{Relationship to companion work and data provenance}
An earlier version of this preprint discussed an IBM Kingston end-to-end
quotient-calibration experiment also reported in the companion article by Cho,
Wang, and Yue~\cite{ChoWangYue2025CQC}. That experiment---including its
circuits, count records, QPU executions, calibration-shot reduction, and
held-out task analysis---has been removed from the present version and is
reported only in the companion article. It is not treated here as independent
evidence or replication. The IQM Garnet experiment retained here is distinct
from the IQM winding experiment in the companion article: this study uses
structured multiqubit circuits with 36 settings in each of two epochs to audit
record loss and drift, whereas the companion study uses single-qubit winding
tomography. The two IQM studies share neither count records, QASM circuits, nor
QPU executions.

\section*{Declaration of AI-assisted work}
OpenAI Codex was used, under author direction in interactive sessions, to assist
with drafts of analysis code, checks of algebraic and statistical consistency,
organization of the literature review, translation, and language editing.  It
did not generate, alter, or select QPU counts.  The authors checked cited claims
against the cited publications, reran the documented analysis pipeline
from the frozen count files, and reviewed every theorem, proof, numerical claim,
and interpretation.
The authors are responsible for the research questions, scientific judgments,
accuracy, originality, and final manuscript.  Session metadata and verification
procedures are retained in the project records and can be provided to the editor
upon request.

\FloatBarrier
\fi
\ifdefined\supplementonly
\appendix
\section{Proofs and reproducibility}
\subsection{Proof details for the four completeness-or-refusal results}

\subsubsection{Conditional scores and the measurement boundary}
Let $p_\theta(y)>0$ and $q_\theta(t)=\sum_{T(y)=t}p_\theta(y)$.  Differentiating
under the finite sum gives
\[
 \begin{aligned}
 \nabla q_\theta(t)&=\sum_{T(y)=t}p_\theta(y)S_Y(y),\\
 S_T(t)&=\frac{\nabla q_\theta(t)}{q_\theta(t)}
 =\E[S_Y(Y)\mid T=t].
 \end{aligned}
\]
The law of total covariance then gives
\[
 \begin{aligned}
 F_Y&=\Cov(S_Y)\\
 &=\Cov\{\E(S_Y\mid T)\}
 +\E\{\Cov(S_Y\mid T)\}\\
 &=F_T+\Delta_T.
 \end{aligned}
\]
Because every conditional covariance is positive semidefinite,
$\Delta_T\succeq0$.  Contracting with $v$ shows that
$v^\top\Delta_Tv=0$ precisely when $v^\top S_Y$ is conditionally constant.
This proves Proposition~\ref{en:prop:score}, including the directional equality
condition used to produce a missed feature direction.

For the measurement boundary write
$\partial_i|\psi\rangle=\sum_y e^{i\phi_y}
(\partial_i\sqrt{p_y}+i\sqrt{p_y}\partial_i\phi_y)|y\rangle$.
Substitution into the pure-state QFI formula yields
\[
 \begin{aligned}
 F_{Q,ij}&=4\sum_y\partial_i\sqrt{p_y}\partial_j\sqrt{p_y}\\
 &\quad+4\left(\E[\partial_i\phi\,\partial_j\phi]
 -\E[\partial_i\phi]\E[\partial_j\phi]\right).
 \end{aligned}
\]
The first term is $F_{Y,ij}$ and the second is the phase covariance.  It
vanishes in a direction only when the phase derivative is constant almost
surely.  This proves \cref{en:eq:compact-phase-gap} and explains why no
postprocessing of computational counts can recover a relative-phase direction.

\subsubsection{Response ideals, arc orders, and the finite audit}
Choose real and imaginary matrix coordinates for
$\rho_\theta-\rho_{\theta_0}$.
A fixed POVM makes every centered probability an $R$-linear combination of
these coordinates, so $\mathfrak I_P\subseteq\mathfrak I_\rho$.  The analytic
Hadamard lemma similarly gives
$\mathfrak I_C\subseteq\mathfrak I_P$ for a certificate analytic in the full
histogram.  For an arc $\gamma$, inclusion reverses minimum order:
\[
 \nu_C(\gamma)\ge\nu_P(\gamma)\ge\nu_\rho(\gamma).
\]
The analytic arc criterion says that $f\in\overline I$ exactly when
$\operatorname{ord}_t(f\circ\gamma)\ge\nu_I(\gamma)$ for all relevant arcs.
Applying this criterion in both directions proves Theorem~\ref{en:thm:two-level-completeness}.

For computational use, form the Rees algebra
$\mathcal R(\mathfrak I_C)=R[\mathfrak I_C z]$ and normalize it.  The
exceptional prime divisors of the normalized blow-up define the finitely many
Rees valuations $v_j$.  The valuative criterion gives
\[
 f\in\overline{\mathfrak I_C}
 \quad\Longleftrightarrow\quad
 v_j(f)\ge v_j(\mathfrak I_C)\quad\text{for all }j.
\]
Since $\mathfrak I_C\subseteq\mathfrak I_P$, equality of the two integral
closures is therefore equivalent to equality on those valuations.  When a
strict inequality occurs, a transverse arc to the corresponding exceptional
divisor realizes the delayed certificate order over the complexification.
The valuation is therefore an unconditional algebraic refusal certificate.
It becomes a physically executable direction only when the divisor has a real
point above the baseline, a real analytic transversal exists there, and the
descended arc remains inside the admissible control germ.  Without these three
checks the manuscript makes no laboratory-path claim.

For algebraic circuit charts this audit is finite.  Pauli rotations become
rational after half-angle coordinates
$z_j=\tan((\theta_j-\theta_{0,j})/2)$; circuit multiplication preserves
algebraicity, and denominators nonzero at the baseline are units.  State,
histogram, and polynomial-certificate ideals may then be represented in a
localization of a finitely generated algebra.  Normalization algorithms decide
integral closure, although no polynomial-time scalability claim is made.

\subsubsection{Support--moment factorization and its failure mode}
For \cref{en:eq:compact-support-model}, fixed $s$ gives
\[
 r_\eta(y)=g_\eta(T_I(y))k(y),
\]
where $k=h$ on $S$ and $k=s$ on $S^c$.  Neyman--Fisher factorization proves
sufficiency.  Along a differentiable path, the support scores on $S$ and
$S^c$ are $-\dot\ell/(1-\ell)$ and $\dot\ell/\ell$, while the conditional
in-support score has zero mean.  Orthogonality gives
\[
 g_{T_I}=\frac{d\ell^2}{\ell(1-\ell)}
 +(1-\ell)d\vartheta^\top C_\vartheta d\vartheta.
\]
Using $u=2\arcsin\sqrt\ell$ and $d\mu=C_\vartheta d\vartheta$ gives
\cref{en:eq:compact-quotient-metric}.  If $s=s_\eta$ varies, the omitted term
$\ell g_{s_\eta}$ is a positive vertical contribution unless its tangent
vanishes.  Likewise, score variation inside an $A$-fiber contributes exactly
the conditional covariance in \cref{en:eq:compact-score-loss}.  This proves
both the exact scope and the two failure modes tested by the residual.

The finite Hellinger identity follows without differentiability.  Because the
target has no mass on $S^c$,
\[
 \sum_y\sqrt{q(y)r(y)}
 =\sqrt{1-\ell}\sum_{y\in S}\sqrt{q(y)p(y)}.
\]
The support-normal floor and within-support distance in
\cref{en:eq:compact-support-hellinger} follow immediately.  Thus leakage alone
can certify failure of a task-discrepancy budget, but record approval still
requires an upper bound for the within-support residual.

\subsubsection{Finite-shot decision and reproducible implementation}
Let
$E_N=\{((p_k,q_k))_{k\in\mathcal K}\in\mathcal C_N\}$ be the joint coverage
event.  On $E_N$, for every $a=(k,T)\in\mathcal A$, both true quantities
$D_a(p_k,q_k)$ and $R_a(p_k,q_k)$ lie between their extrema over the same
region.  The task and record upper-bound decisions in Theorem~\ref{en:thm:three-way} therefore
prove their distinct budgets simultaneously across the family, while the
corresponding lower-bound decisions prove their respective violations.  Defer
makes no assertion for that component.  Hence the event that any familywise
nondeferred decision is unsupported is contained in $E_N^c$ and has probability
at most $\alpha$.

An implementation must retain the unaggregated counts, the deterministic map
$T$, the confidence-region construction, the two budgets, and the rule version.
It reports the full and pushed-forward Hellinger separations, their residual,
the simultaneous bounds, and one of the three decisions.  If the decision is
reject, a positive-eigenvalue score direction or a Rees valuation may be
reported as a targeted witness.  If it is defer, the full histogram remains the
scientific output.  This record is what distinguishes an auditable refusal from
selecting a low-dimensional candidate despite unresolved residual loss.

The complete derivations of the finite-shot controls, local Qiskit checks, and
timestamped hardware ledgers are retained in the archived source and
machine-readable reproduction package.  The main text reports every estimand,
decision rule, sample size, job identifier, and digest needed to identify those
records.  The following labels point to that supplementary material.
\phantomsection\label{en:prop:finite-chord}
\phantomsection\label{en:app:ibm-protocol}
\phantomsection\label{en:app:phase-protocol}
\phantomsection\label{en:app:cross-device-ledger}
\phantomsection\label{en:app:operational-completion-ledger}
\else
\section*{Supplementary material}
Detailed proofs, finite-shot constructions, protocol manifests, execution
ledgers, and analysis hashes are provided in the separately compiled
supplement.  The main text retains the assumptions, statements, decision
rules, estimands, sample sizes, and claim boundaries required to evaluate the
results.
\fi

\ifdefined\supplementonly
\else

\fi
\end{document}